\def\draft{1}
\def\doubleblind{0}
\newcounter{algsubstate}
\renewcommand{\thealgsubstate}{\alph{algsubstate}}
\algnewcommand\algorithmicinput{\textbf{Input:}}
\algnewcommand\Input{\item[\algorithmicinput]}
\algnewcommand\algorithmicoutput{\textbf{Output:}}
\algnewcommand\Output{\item[\algorithmicoutput]}
\algnewcommand\algorithmicgoal{\textbf{Goal:}}
\algnewcommand\Goal{\item[\algorithmicgoal]}
\newcommand{\vnote}[1]{\ifnum\draft=1\textcolor{orange}{[\textbf{Santhoshini:} #1]}\fi}
\newcommand{\mnote}[1]{\ifnum\draft=1\textcolor{red}{[\textbf{Huacheng:} #1]}\fi}
\newcommand{\Exp}{\mathbb{E}}
\newcommand{\poly}{\mathrm{poly}}
\newcommand{\E}{\mathbb{E}}
\newcommand{\s}{\mathcal{S}}
\newcommand{\cH}{\mathbf{H}}
\newcommand{\cM}{\mathbf{M}}
\newcommand{\checkfn}{\mathsf{check}}
\newcommand{\maxl}{\mathsf{maxL}}
\numberwithin{equation}{section}
\declaretheoremstyle[bodyfont=\it,qed=\qedsymbol]{noproofstyle}
\declaretheorem[name=Observation,numbered=no]{observation*}
\declaretheorem[numberlike=equation]{theorem}
\declaretheorem[name=Theorem,numbered=no]{theorem*}
\declaretheorem[numberlike=equation]{lemma}
\declaretheorem[name=Lemma,numbered=no]{lemma*}
\declaretheorem[name=Corollary,numbered=no]{corollary*}
\declaretheorem[numberlike=equation]{proposition}
\declaretheorem[name=Proposition,numbered=no]{proposition*}
\declaretheorem[numberlike=equation]{claim}
\declaretheorem[name=Claim,numbered=no]{claim*}
\declaretheorem[name=Conjecture,numbered=no]{conjecture*}
\declaretheorem[name=Question,numbered=no]{question*}
\declaretheoremstyle[bodyfont=\it]{defstyle} 
\declaretheorem[numberlike=equation,style=defstyle]{definition}
\declaretheorem[unnumbered,name=Definition,style=defstyle]{definition*}
\declaretheorem[unnumbered,name=Notation=defstyle]{notation*}
\declaretheorem[unnumbered,name=Construction,style=defstyle]{construction*}
\declaretheoremstyle[]{rmkstyle}
\declaretheorem[unnumbered,name=Example,style=rmkstyle]{example*}
\def\doubleblind{0}
\def\draft{0}
\title{Optimally detecting uniformly-distributed $\ell_2$ heavy hitters in data streams}
\author{Santhoshini Velusamy\thanks{University of Waterloo, Waterloo, Ontario, Canada. Email: \texttt{santhoshini.velusamy@uwaterloo.ca}.}
\and Huacheng Yu\thanks{Princeton University, Princeton, New Jersey, USA. Email: \texttt{yuhch123@gmail.com}.}
}
\author{Anonymous authors}
\date{}
\begin{document}
\maketitle

\begin{abstract}
Given a stream $x_1,x_2,\dots,x_n$ of items from a Universe $U$ of size $\poly(n)$, and a parameter $\epsilon>0$, an item $i\in U$ is said to be an $\ell_2$ heavy hitter if its frequency $f_i$ in the stream is at least $\sqrt{\epsilon F_2}$, where $F_2={\sum_{i\in U} f_i^2}$. Efficiently detecting such heavy hitters is a fundamental problem in data streams and has several applications in both theory and in practice. The classical \textsf{CountSketch} algorithm due to Charikar, Chen, and Farach-Colton [2004], was the first algorithm to detect $\ell_2$ heavy hitters using $O\left(\frac{\log^2 n}{\epsilon}\right)$ bits of space, and their algorithm is optimal for streams with deletions. A follow-up paper due to Braverman, Chestnut, Ivkin, and Woodruff [2016], gave the \textsf{CountSieve} algorithm for insertion-only streams, with an improved space bound. Their algorithm requires only $O\left(\frac{\log(1/\epsilon)}{\epsilon}\log n  \log\log n \right)$ space. Note that any algorithm requires at least $\Omega\left(\frac{1}{\epsilon} \log n\right)$ space to output $O(1/\epsilon)$ heavy hitters in the worst case. So for constant $\epsilon$, the space usage of the \textsf{CountSieve} algorithm is asymptotically only a $\log\log n$ factor worse than the optimum bound. A later work due to Braverman, Chestnut, Ivkin, Nelson, Wang, and Woodruff [2017] gave the \textsf{BPTree} algorithm which gets rid of this additional $\log\log n$ factor and detects $\ell_2$ heavy hitters in insertion-only streams using only $O\left(\frac{\log(1/\epsilon)}{\epsilon}\log n \right)$ space. While their algorithm achieves optimal space bound for constant $\epsilon$, their bound could be sub-optimal for $\epsilon=o(1)$. For \emph{random order} streams, where the stream elements can be adversarial but their order of arrival is uniformly random, Braverman, Garg, and Woodruff [2020] showed that it is possible to achieve the optimal space bound of $O\left(\frac{1}{\epsilon} \log n\right)$ for every $\epsilon = \Omega\left(\frac{1}{2^{\sqrt{\log n}}}\right)$. 

In this work, we generalize their result to \emph{partially random order} streams where only the heavy hitters are required to be uniformly distributed in the stream. We show that it is possible to achieve the same space bound, but with an additional assumption that the algorithm is given a constant approximation to $F_2$ in advance.
We mainly exploit the fact that if the stream is divided into windows of length roughly $\frac{n}{\sqrt{\epsilon F_2}}$, with high probability, for every heavy hitter, every $\log n$ consecutive windows has an occurrence of that heavy hitter, and develop a hierarchical \emph{sample-and-check} framework that efficiently detects them. The window size is crucial to the algorithm. 


\end{abstract}
\newpage
\section{Introduction}

``Heavy hitters'' is a term often used to describe elements that occur most commonly in a data stream. The problem of detecting heavy hitters has several applications, both in theory and practice. For instance, heavy hitters algorithms have been used as subroutines to solve many important streaming problems including norm estimation \cite{Indyk-Woodruff05}, entropy estimation \cite{ChakrabartiCM10,HarveyNO08}, and weighted sampling \cite{MonemizadehW10}. In addition, they have also influenced recent efforts to speed up LLM algorithms. In particular, by detecting and removing
heavy hitters, a recent paper obtained a significant speed up in the matrix multiplication component
of the ``attention approximation problem'' in LLMs \cite{Hyperattention}. 

The earliest known work on heavy hitters began in the 1980's when Boyer and Moore discovered an algorithm to output the majority element in the stream using just two machine words \cite{Boyer1991}.
Since then, vast generalizations of this problem have been studied in the literature (see, for example, \cite{HHref1,HHref2,HHref3,HHref4,HHref5,HHref6,HHref7,HHref8} and the references therein). Formally, given a stream $x_1,\dots,x_n$ of items from a universe $U$ of size $\poly\, n$, the heavy hitters problem asks to recover, with high probability, every element $i\in U$ such that $f_i \ge \gamma$, where $f_i$ denotes the frequency with which the element occurs in the stream. The algorithm is also often required to not output any element whose frequency is less than $\tau$, where the thresholds $\gamma$ and $\tau$ are usually defined in terms of different norms of the stream. 

One popular notion is the $\ell_1$  guarantee where given a parameter $\epsilon\in (0,1]$, the algorithm has to recover every element of frequency at least $\epsilon n$ and not output any element of frequency less than $\delta n$ where $\delta$ is typically assumed to be $\epsilon/c$, for some constant $c$. The first algorithm for this problem was given by Misra and Gries in 1982 \cite{MisraG82}. Their algorithm is deterministic and uses only $O(\log n/\epsilon)$ bits of space, which is optimal since in the worst case, the algorithm has to output $\Theta(1/\epsilon)$ elements and it requires $\Theta(\log n)$ space to represent each of them. Their algorithm also works only in the insertion-only model, also known as the \emph{cash-register} model, where elements can only be inserted into the stream. For general \emph{turnstile} streams where elements can be added to or deleted from the stream, Cormode and Muthukrishnan gave the $\mathsf{CountMin Sketch}$ algorithm which is a randomized \emph{sketching} algorithm\footnote{Sketching algorithms are a special class of streaming algorithms where the algorithm's output is determined by a small sketch it produces of the input stream, and the sketch itself has the property that the sketch of the concatenation of two streams can be computed from the sketches of the two component streams.} that uses $O(\log^2 n/\epsilon)$ bits of space to achieve a constant success probability \cite{CountMinSketch}. Notice the $\log n$ gap in the space required by the Misra-Gries algorithm and the $\mathsf{CountMinSketch}$ algorithm. This gap is inherent as the latter algorithm is known to be optimal for turnstile streams \cite{JowhariST11}.

Another well-studied guarantee is the $\ell_2$ guarantee, where given a parameter $\epsilon\in (0,1]$, the algorithm has to recover every element $i\in U$ such that $f_i^2 \ge \epsilon \Vert f\Vert_2^2$ and not output any element $i\in U$ such that $f_i^2 < \delta \Vert f\Vert_2^2$, where $ \Vert f\Vert_2$ denotes the norm of the frequency vector of the elements in the stream. Note that this is a stronger guarantee compared to the $\ell_1$ guarantee. For example, consider a stream where $\Vert f\Vert_2^2 = \Theta(n)$. For a constant $\epsilon$, the $\ell_1$ guarantee asks only to output elements that occupy a constant fraction of positions in the stream, while the $\ell_2$ guarantee asks for any element that has frequency $\Omega(\sqrt{n})$, which occupies only a negligible portion of the stream. The $\mathsf{CountSketch}$ algorithm due to Charikar, Chen, and Farach-Colton can recover heavy hitters under the $\l_2$ guarantee, with at least a constant probability, using $O(\log^2 n/\epsilon)$ bits of space \cite{CountSketch}. A follow-up work due to Jowhari, Sağlam, and Tardos shows that $\mathsf{CountSketch}$ can be extended to output heavy hitters under the $\ell_p$ guarantee for every $p\le 2$ using space at most $O(\log^2 n/\epsilon)$ and that the algorithm is optimal for every $p$ in this range, in the turnstile setting \cite{JowhariST11}. An earlier work due to Do ba, Indyk, Price, and Woodruff \cite{DBLP:conf/soda/BaIPW10} proved this result for $p=2$.
While $\mathsf{CountSketch}$ is optimal for turnstile streams, it is natural to ask if there is a more space-efficient algorithm in the insertion-only model, similar to the $\ell_1$ case. This question was first answered positively by Braverman, Chestnut, Ivkin, and Woodruff who gave the $\mathsf{CountSieve}$ algorithm which requires only $O\left(\frac{1}{\epsilon}\log\left(\frac{1}{\epsilon}\right)\log n \log\log n\right)$ space to output heavy hitters under the $\ell_2$ guarantee, with at least a constant success probability \cite{CountSieve}. The same authors, along with Nelson and Wang, later gave an improved algorithm called $\mathsf{BPTree}$ which uses only $O\left(\frac{1}{\epsilon}\log\left(\frac{1}{\epsilon}\right)\log n\right)$ space to achieve a constant success probability \cite{BPTree}. 
For \emph{random order} streams where the order of arrival of elements is uniformly random, Braverman, Garg, and Woodruff \cite{BGW20} gave an optimal algorithm that uses only $O(\log n/\epsilon)$ space and achieves constant success probability, for every $\epsilon=\Omega(1/2^{\sqrt{\log n}})$.
The primary goal of this paper is to extend their result to more general settings. In particular, we show that for \emph{partially} random order streams when the heavy hitters are uniformly distributed across the stream (while the non-heavy elements could be adversarially positioned), it is still possible to recover the $\epsilon$-heavy $l_2$ heavy hitters,  using only $O(\log n/\epsilon)$ space with constant success probability, for every $\epsilon=\Omega(1/2^{\sqrt{\log n}})$, assuming have an estimate of the $l_2$ norm of the frequency vector and access to a random oracle.




\subsection{Our result}
We now formally state our main theorem. 



\begin{theorem}[Partially random order streams]\label{thm:main result 2}
 $\exists n_0,c_0\in \mathbb{N}$ such that for every $n\ge n_0$ and $\epsilon\ge c_0\left(\frac{1}{2^{\sqrt{\log n}}}\right)$, there is a single-pass streaming algorithm that with probability at least $9/10$, finds every $\ell_2$ $\epsilon$-heavy hitter and reports no element that is not an  $\ell_2$ $\epsilon/256$-heavy hitter, in partially random order streams of length $n$, using $O(\log n/\epsilon)$ bits of space, assuming that the $\ell_2$ norm of the stream is known in advance to the algorithm and access to a random oracle.
\end{theorem}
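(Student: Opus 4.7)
My plan is to reuse the hierarchical sample-and-check framework of \autoref{thm:main result 1}, removing only the sub-procedure that estimates the window length from a prefix of the stream. In the partially random setting the adversary controls where the non-heavy elements appear and can sabotage any prefix-based estimator of $\|S\|_2$, which is precisely why the theorem hands us $\|S\|_2$ as a promise. Given $\|S\|_2$, I set the window length to $w = \Theta\bigl(n / (\sqrt{\epsilon}\,\|S\|_2)\bigr)$ once and for all and feed this value directly into the sample-and-check routine.

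The first thing I would verify is that the structural property used by the sample-and-check framework survives in the partially random model. Concretely, for each $\epsilon$-heavy hitter $x$ we have $f_x \geq \sqrt{\epsilon}\,\|S\|_2$, and the number of length-$w$ windows is $n/w = \sqrt{\epsilon}\,\|S\|_2$, so under a uniformly random placement of the $f_x$ copies the expected number of copies in any block of $c\log n$ consecutive windows is $\Omega(\log n)$. A hypergeometric-style Chernoff bound boosts this to ``every such block contains at least one copy of $x$'' with probability $1 - 1/\poly(n)$, and a union bound over the $O(1/\epsilon)$ heavy hitters still leaves the failure probability negligible under the assumed lower bound on $\epsilon$. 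Crucially, the event depends solely on the positions of the copies of $x$, which are uniformly random by the definition of a partially random stream; the adversarial placement of the non-heavy elements is irrelevant.

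With this structural guarantee in place, I would run the hierarchical sample-and-check from \autoref{thm:main result 1} essentially verbatim: at each level I maintain $O(1/\epsilon)$ candidates of $O(\log n)$ bits each, generate new candidates by sampling positions at a rate calibrated to $w$, and promote or evict each candidate using a $\checkfn$ routine that counts the number of subsequent windows in which the candidate reappears. Completeness follows because every true heavy hitter, once sampled, is witnessed by $\checkfn$ thanks to the structural lemma above. Soundness follows because any element of squared frequency below $(\epsilon/256)\|S\|_2^2$ has too few total occurrences to populate enough subsequent windows, no matter how an adversary arranges them. The total space is $O(\log n / \epsilon)$ as claimed.

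The main obstacle I anticipate is arguing that $\checkfn$ is not fooled by the adversary's freedom in placing the non-heavy elements. Since $\checkfn$ increments only on exact matches with the candidate, its accept/reject decision depends only on how often the candidate itself appears in the following windows: if the candidate is a true heavy hitter then those occurrences are uniformly random and the structural lemma applies directly, while if the candidate is a light element then the adversary can cluster its copies into as few windows as he likes but the total count is still bounded by its frequency and cannot cross the threshold. A second, more delicate point is that the sampling step itself must land on a heavy hitter with constant probability at some level of the hierarchy; here, because the $\ell_2$ heavy hitters together occupy a $\geq \sqrt{\epsilon}$-fraction of the stream positions and those positions are uniformly random, a calculation analogous to the one in \autoref{thm:main result 1} shows that a geometric sampling schedule across $O(\log(1/\epsilon))$ levels suffices.
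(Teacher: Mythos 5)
Your high-level strategy matches the paper's: the soundness side of the sample-and-check framework depends only on frequencies and not on where the non-heavy elements sit, so only the completeness side needs a new argument, and the uniform randomness of the heavy hitters' occurrences gives the necessary concentration. Your structural claim corresponds to \cref{claim:heavy-hitter-dist}, which the paper establishes via negative association of the permutation random variables and \cref{thm:sampling without replacement}.

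But the plan to run the machinery from \cref{thm:main result 1} ``essentially verbatim'' hides a real obstruction. The $K$-SHH subroutine (\cref{Algorithm 1}) tests whether the sampled element ``occurs in each of the next $L/K^{1/8}$ windows,'' and the YES case (\cref{lem:Yes_case}) is proved under the hypothesis that $x^*$ appears in \emph{every} window. In the partially random model you only get roughly a $2/3$ fraction of windows containing $x^*$, so the unmodified $K$-SHH would deterministically reject the true heavy hitter. The paper's fix is to relax the check to ``occurs at least once per $90\log n$ windows,'' start $\checkfn$ from level $l=10$ so that each interval is large enough for \cref{claim:heavy-hitter-dist} to bite, and strengthen the ``good window'' definition to require $2^{l+1}/3$ occurrences of $x^*$ in each level-$l$ interval --- and then verify that this only inflates \cref{lem_KSHH_expectation,lem:low_freq_conc,lem:high_freq_conc} by an $O(\log n)$ factor, which is absorbed since $K=(\log n)^{1000}$. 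You'd need to supply and justify such a modification explicitly. A second gap: you fix $w=\Theta(n/(\sqrt{\epsilon}\,\Vert S\Vert_2))$ ``once and for all,'' but the theorem grants you $\Vert S\Vert_2$, not $n$. The paper maintains $25$ parallel guesses $(1.1)^i,\ldots,(1.1)^{i+24}$ of the stream length, evicting and restarting instances of \cref{alg:ell2_eps} as a counter grows, and at the end reports the output of the instance whose guess lands in $[0.9n,1.1n]$; without this the algorithm cannot even place the window boundaries.
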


The above result should be viewed as a first step towards generalizing \cite{BGW20} beyond random order streams. Assuming that the heavy hitters are uniformly distributed, while the non-heavy elements are placed by an adversary, is a natural starting point. For example, the heavy hitters maybe generated by a stochastic process, while the non-heavy elements could be adversarial noise. Such partially random ordered streams have been explored in the literature (see, for example, \cite{ChiplunkarKK022}) and serve as an important bridge between the random order and the adversarial streaming model. Our other assumption that the streaming algorithm needs to know the $\ell_2$ norm of the stream in advance is not as natural as our first assumption, and we state it this way purely for technical reasons. While one could imagine using an $\ell_2$ norm tracker algorithm like in the $\mathsf{BPTree}$ algorithm to get rid of this assumption, unfortunately if the stream order is not fully random, this could incur an additional factor of $\log(1/\epsilon)$ in space, and hence we would not gain any advantage in space compared to the $\mathsf{BPTree}$ algorithm.\footnote{Such an algorithm that geometrically guesses $\ell_2$ must start the subroutine with the correct guess at the latest, say when there is still half of the stream left. Otherwise, there are not enough occurrences of the heavy hitters to successfully detect them. However, imagine a case where the true $\ell_2$-norm is $F=n^{1.5}$, and there is only one heavy hitter in the stream, which has squared frequency $\epsilon F$. All other (light) elements occur once in the first half. Then we can only rule out the $\ell_2$ smaller than $O(\epsilon F)$ only when we have processed the first half of the stream. Hence, we must maintain $O(\log (1/\epsilon))$ consecutive geometric guesses.}
Therefore, we leave it as an interesting open problem whether one could prove \cref{thm:main result 2} without the assumption about the $\ell_2$ norm of the stream, or if there is a lower bound in this case. We prove \cref{thm:main result 2} at the end of \cref{sec:epsilon-heavy}.







In the following section, we present an overview of our techniques.
\section{Technical overview}
The previous $\ell_2$ heavy hitter algorithms for insertion-only streams (\textsf{CountSieve}, \textsf{BPTree}) first randomly hash the elements into $O(1/\epsilon)$ buckets.
For each $\epsilon$-heavy hitter $x^*$, there is a constant probability that all other elements hashed to the same bucket as $x^*$ have second moment at least a constant smaller than $f_{x^*}^2$.
Such an element $x^*$ is referred to as a \emph{super heavy hitter} in the bucket. 
The algorithms then try to recover a super heavy hitter from each bucket using small space, which will succeed with constant probability.

Since there can be as many as $O(1/\epsilon)$ heavy hitters, in order to recover all of them, the algorithms have to repeat the process independently for $O(\log(1/\epsilon))$ times.
More specifically, this is due to the following two technical reasons:
\begin{itemize}
    \item we must ensure that every $\epsilon$-heavy hitter is super heavy in its bucket (at least once);
    \item then we must ensure that it is recovered (at least once) with probability at least $1-O(\epsilon)$,
\end{itemize}
so that we can union bound over all $O(1/\epsilon)$ heavy hitters.

\textsf{BPTree} uses a subroutine with $O(\log n)$ bits of space to recover a super heavy hitter (with constant probability).
The subroutine is applied on all $O(1/\epsilon)$ buckets each time, and is repeated for $O(\log (1/\epsilon))$ times.
Hence, it has space of $O(\epsilon^{-1}\log (1/\epsilon)\log n)$ bits.
Thus, in order to remove the extra $O(\log(1/\epsilon))$ factor, we must resolve the above two technical issues that forced us to repeat the process. 

In this paper, we first focus on the second issue.
That is, we first assume that an $\epsilon$-heavy hitter $x^*$ is already super heavy ``in its bucket'', and we aim to recover it with high probability, \emph{under the assumption} that $x^*$ is roughly evenly distributed and \emph{with free randomness}.
It turns out that as long as $\epsilon>2^{-\Omega(\sqrt{\log n})}$, our solution for high-probability super heavy hitter generalizes to $\epsilon$-heavy hitter in a natural but different way that does not need to do hashing on top of it.
Hence, for our algorithm, the first issue no longer exists.
We note that there may be simpler solutions for random-order streams, the emphasis of this work is to \emph{design the more general algorithm} that works even only assuming the occurrences of heavy hitters are random.

\subsection{Warmup and summary of~\cite{BGW20}}
To motivate our new algorithm for super heavy hitter, let us first consider a special case of the stream as a warmup.
Consider a stream of length $n$ consisting of a super heavy hitter $x^*$ with frequency $C\sqrt{n}$ for some large constant $C$, and other elements with frequency $1$.
Dividing the stream into \emph{windows} of length $W:=\sqrt{n}/C$, let us assume $x^*$ has one occurrence in every window.
$x^*$ is super heavy in the stream, and we want to find $x^*$ \emph{with high probability} using $O(\log n)$ bits of space.
Note that such partially random order streams actually have random order.
In fact, the algorithm presented below is a (simplification) of the prior work~\cite{BGW20} for random-order streams.\footnote{The prior work~\cite{BGW20} uses a different setting of parameters, and their algorithm for this case succeeds with constant probability instead of with \emph{high} probability. On the other hand, their algorithm does not assume free randomness.}


The main framework to find $x^*$ is \emph{sample-and-check}.
Since $x^*$ is the only element that appears at least twice, as long as we find any duplicated element, the algorithm can declare it to be the heavy hitter.
The algorithm processes the stream in windows.
For each window $i$ (of size $W=\sqrt{n}/C$), it samples a set $S_i\subseteq [U]$ such that each $x\in[U]$ belongs to $S$ independently with probability $q:=1/W=C/\sqrt{n}$.
Hence, by linearity of expectation, window $i$ has expected one total occurrence of elements in $S_i$.
Also, with \emph{constant probability}, there exists a window $i$ such that $x^*\in S_i$ and it is the only occurrence of $S_i$ in window $i$.
A natural idea is to check if a sampled element in window $i$ also appears in window $i+1$, as $x^*$ should.
However, this would require $O(\log n)$ bits of space to remember the identity of the element, and would still only succeed with constant probability.

Our (simplified) algorithm (for this case) uses only $O(\log\log n)$ bits \emph{most of the time} to check it.
We first sample a random hash function $f$ that maps elements in $[U]$ to a range $[K]$ for some $K=\poly\log n$.
As it processes the windows, if for window $i$, there is only one occurrence of elements in $S_i$ (say the element is $x_i$), the algorithm remembers the $\log K$-bit hash value $v=f(x_i)$, rather than $x_i$ itself.
Next, it proceeds to the next window $i+1$, and checks if it also has an element with the same hash value $v$ and in $S_i$ (the sample of window $i$).
If that happens, only then we remember the identity of this element (via the occurrence in window $i+1$), and check if it also appears in window $i+2$.
When $x_i$ is not $x^*$, the expected number of occurrences of elements in $S_i$ in window $i+1$ with the same hash value $v$ is only $1/K$.
Hence, with only $1/K$ probability, do we need to remember the identity of some element using $O(\log n)$ bits, while for the rest of the time, the algorithm only remembers a hash value of $O(\log K)=O(\log\log n)$ bits.

Our next idea is then to \emph{time-share} the $O(\log n)$-space budget across multiple instances of the algorithm, to boost the success probability while restricting the total space usage.
We run $O(\log n/\log K)=O(\log n/\log\log n)$ instances of the above algorithm in parallel.
However, we only allow one instance to use $O(\log n)$ bits at the same time.
That is, if one instance already starts to use $O(\log n)$ bits for checking if the exact element appears in two consecutive windows, then we will ``pause'' all other instances, and only resume when the space is freed. 
That is, all other instances will simply do nothing in the meanwhile, and restart at a later point in the stream.
Each such instance may ``block'' all other instances at most $O(1/K)$-fraction of the time.
Hence, only a small fraction of the time, do we have any instance blocking others, the analysis still goes through: Each instance of the algorithm still finds $x^*$ with constant probability.
Thus, we are able to increase the overall success probability to $1-\exp(-\Omega(\log n/\log \log n))$.

\subsection{General stream with evenly distributed super heavy hitter}
Next, let us consider a general stream of length $n$ and second moment $O(n)$.\footnote{This assumes that we know both the length of the stream and the second moment of the frequency vector. It turns out that if the second moment is $\Theta(nt)$, then sampling each element in the stream with probability $1/t$ reduces to the case with second moment linear in the stream length, and the super heavy hitter remains super heavy.}
Again let us assume that there is a super heavy hitter $x^*$ with frequency $C\sqrt{n}$, and it appears once in every window of length $W=\sqrt{n}/C$ as before.
But, we do not make any assumption on the rest of the stream beyond its length and second moment.

The main framework is the same as the warmup -- sample-and-check.
However, checking whether a sampled element $x$ is super heavy can be more challenging.
For example, some element $x$ with a lower frequency of $C\sqrt{n}/t$ may locally appear like the heavy hitter $x^*$, i.e., it appears like $x^*$ for $1/t$-fraction of the stream.
Thus, the algorithm must spend at least this much time to distinguish such an element from the true heavy hitter, if sampled.
Moreover, in principle, there can be as many as $O(t^2)$ such elements (without breaking the second moment bound).
That is, it is even much more likely to sample such a ``locally heavy'' element than $x^*$ itself.

Our algorithm samples a set $S_i\subseteq [U]$ for each window $i$ and a hash function $f: [U]\rightarrow [K]$ as before.
Suppose $x_i$ is the only occurrence of elements in $S_i$ in window $i$.
The algorithm remembers $v=f(x_i)$, and starts to check \textbf{in parallel} if the heavy hitter has hash value $v$, and is in $S_i$.
That is, the algorithm initiates a $\checkfn$ procedure, while at the same time continues to sample $S_{i+1},S_{i+2}$, and so on, and may continue to initiate more $\checkfn$ procedures from future windows.
As we will see below, each $\checkfn$ terminates after running for $O(1)$ windows in expectation, and only $O(1)$ $\checkfn$ will be running at the same time in parallel (enforced by the algorithm).

Now consider one $\checkfn$ initiated from window $i$ due to sampling $x_i\in S_i$.
If $x_i\neq x^*$, then for each window that \emph{does not contain} $x_i$, there is only $O(1/K)$ elements sampled in $S_i$ \emph{and} with hash value $v$ in expectation.
Thus, within $O(f_{x_i})$ windows, we will be able to notice that there are many windows that do not contain $x_i$.
That is, it takes $O(f_{x_i})$ windows to figure out that $x_i$ is not the super heavy hitter, and this $\checkfn$ can return NO.
To see that we will not have too many $\checkfn$ procedures running at the same time, let us consider the expected total running time of all $\checkfn$ initiated throughout the stream.
For each element $x$, it may appear in at most $f_x$ different windows; in each window where it appears, there is a probability $q=\Theta(1/\sqrt{n})$ of sampling it; for each time it gets sampled, it takes at most $O(f_x)$ more windows for $\checkfn$ to figure out it is not the super heavy hitter (or $x=x^*$).
Hence, by linearity of expectation, the total running time of all $\checkfn$ is
\[
    \sum_x f_x\cdot q\cdot O(f_x)=O(\sqrt{n})
\]
windows.
That is, an average window only has $O(1)$ $\checkfn$ running.
We can also get a worst-case bound by restricting the total number of $\checkfn$ that can be executed at the same time in the algorithm, i.e., we do not start new $\checkfn$ procedures if it already hits the limit.
It turns out that this restriction does not impact the success probability by more than a constant.

\subsection{$K$-super heavy hitter}
To be more concrete on the $\checkfn$ procedure, consider a $\checkfn$ initiated due to sampling $x_i\in S_i$.
Note that we cannot afford to remember the identity of $x_i$ in order to check if it appears in every window.
Moreover, we cannot even remember the window index ``$i$'' (recall we have $O(\log n/\log\log n)$ hash functions, and each of them can occupy only $O(\log\log n)$ bits).
If we actually could remember the index $i$, then it suffices to verify that every window contains an element with hash value $v=f(x_i)$ (which we can remember) and is contained in $S_i$ by the argument in the last subsection, since
\begin{itemize}
    \item this holds for the heavy hitter $x^*$;
    \item if $x_i\neq x^*$, we will be able to notice it with high probability within $O(f_x)$ windows.
\end{itemize}

The main challenge here is that we cannot remember in which window we sampled the element.
Our solution first breaks the $\checkfn$ into $O(\log n)$ iterations with geometrically increasing numbers of windows.
In iteration $l$, we aim to verify $2^l$ consecutive windows.
We always start the iteration from the next window $j$ such that $j$ is a multiple of $2^l$.\footnote{This allows us to not remember when we started the iteration by only remembering the current window number. We can afford to remember $l$ (taking $O(\log l)=O(\log\log n)$ bits), but not an index within a $2^l$ range. The current window number takes $O(\log n)$ bits, but can be shared across all hash functions.}
In this case, from the fact that we are currently executing iteration $l$, and the index of the current window, we will be able to derive a range which $i$ may be in.
One can verify that this is a range $I$ of $2^l$ windows.

Thus, Let $S'$ be the union of $S_k$ for all $k\in I$.
Instead of verifying elements with hash value $v$ and is contained in $S_i$, we verify elements with hash value $v$ and is contained in $S'$.
By doing a standard expectation calculation, we can show that if $x_i$ is the heavy hitter $x^*$, then within the $2^l$ windows we execute iteration $l$, $x_i$ is a $K$-super heavy hitter among all elements with hash value $v$ and is contained in $S'$, i.e.,  within the $2^l$ windows, the frequency of $x_i$ squared is at least $K$ times more than the sum of that over all other elements.

Finally, for $K$-super heavy hitter, we apply another sample-and-check algorithm, which uses $O(\log\log n)$ bits most of the time, and uses $O(\log n)$ bits only $1/\poly\log n$-fraction of the time (see~\cref{Algorithm 1}).
By strictly enforcing only $O(1)$ parallel instances of the algorithm may use $O(\log n)$ bits at the same time, we obtain the space bound.


At last, we note that the algorithm generalizes to $\epsilon$-heavy hitters in a natural way by changing the window size and sampling probability to reflect the frequency of $\epsilon$-heavy hitters.
We will also have more hash functions (as we allow more space) so that each heavy hitter is sampled by sufficiently many hash functions.
\section{Preliminaries}
In this section, we formally define random order streams and partially random order streams, and state some useful concentration bounds.

\subsection{Problem definitions}

\begin{definition}[Adversarial streams]\label{def:adversarial streams}
An adversarial stream is defined by a \emph{multiset} $S$ of $n$ elements from the universe $U$, and a bijective map $\Pi:S\rightarrow [n]$ that maps every element in $S$ to a unique position in the stream. 
\end{definition}
For example, the multiset $S=\{a,a,b\}$ and the bijective map $\Pi\{a\rightarrow 1, a\rightarrow 3, b\rightarrow 2\}$, correspond to the stream $a,b,a$.

\begin{definition}[Random order streams]\label{def:random order streams}
A random order stream is defined by a multiset $S$ of $n$ elements from the universe $U$, and a \emph{uniformly random} bijective map $\Pi:S\rightarrow [n]$.
\end{definition}

In the following, we define partially random order streams for every fixed $\epsilon > 0$.

\begin{definition}[Partially random order streams] \label{def:partially random order streams}
A partially random order stream is defined by a multiset $S$ of $n$ elements from the universe $U$, and a bijective map $\Pi:S\rightarrow [n]$ such that the restricted map $\Pi|_{S^\epsilon}: S^{\epsilon}\rightarrow [n]$ is uniformly random, where ${S^\epsilon}$ is the subset of $\ell_2$ $\epsilon$-heavy hitters in $S$.
\end{definition}
\subsection{Concentration bounds}

We now recall some standard concentration bounds like Chebyshev's inequality, Chernoff bound, and Hoeffding's inequality. We also state and prove a (somewhat standard) Azuma-Hoeffding style inequality. Similar inequalities have been proved in other works in the literature (for example, see Lemma 2.5 in \cite{KK19}). 

\begin{theorem}[Chebyshev's inequality]\label{thm:Chebyshev}
    Let $X$ be a random variable with variance $\sigma^2$. For any $a>0$, we have
    \[\Pr[|X-\Exp[X]|> a] \le \frac{\sigma^2}{a^2}\, .\]
\end{theorem}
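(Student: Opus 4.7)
The plan is to derive Chebyshev's inequality as a standard consequence of Markov's inequality applied to the nonnegative random variable $Y := (X - \mathbb{E}[X])^2$. First I would recall Markov's inequality: for any nonnegative random variable $Y$ and any $t > 0$, $\Pr[Y \ge t] \le \mathbb{E}[Y]/t$. This follows immediately from $\mathbb{E}[Y] \ge \mathbb{E}[Y \cdot \mathbf{1}[Y \ge t]] \ge t \cdot \Pr[Y \ge t]$, using nonnegativity of $Y$.

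Next I would set $Y = (X - \mathbb{E}[X])^2$, which is nonnegative, and note that $\mathbb{E}[Y] = \sigma^2$ by definition of the variance. The event $|X - \mathbb{E}[X]| > a$ is identical to the event $Y > a^2$, so
\[
    \Pr\bigl[|X - \mathbb{E}[X]| > a\bigr] = \Pr[Y > a^2] \le \Pr[Y \ge a^2] \le \frac{\mathbb{E}[Y]}{a^2} = \frac{\sigma^2}{a^2},
\]
where the second inequality is Markov applied with $t = a^2 > 0$.

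There is essentially no obstacle here; the only thing to be slightly careful about is the strict-versus-weak inequality in the tail event, which is handled by the trivial bound $\Pr[Y > a^2] \le \Pr[Y \ge a^2]$ before invoking Markov. This proof is entirely self-contained once Markov's inequality is established, and the statement follows in two lines.
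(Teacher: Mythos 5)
Your proof is correct and is the standard derivation of Chebyshev's inequality from Markov's inequality applied to $(X-\Exp[X])^2$. The paper states this as a known result without providing a proof, so there is nothing to compare against; your argument, including the careful handling of the strict inequality via $\Pr[Y > a^2] \le \Pr[Y \ge a^2]$, is complete and sound.
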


\begin{theorem}[Chernoff bound]\label{thm:chernoff}
Let $X_1,\dots,X_n$ be independent random variables taking values in $\{0,1\}$. Let $X = X_1 + \cdots + X_n$ and $\mu = \Exp[X]$. For every $\delta>0$, we have
\[
\Pr[X\ge(1+\delta)\mu] \le \left(\frac{e^\delta}{(1+\delta)^{1+\delta}}\right)^\mu \, .
\]
\end{theorem}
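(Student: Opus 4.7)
The plan is to use the standard moment-generating-function argument (the exponential Markov trick). For any parameter $t>0$, the event $\{X \ge (1+\delta)\mu\}$ coincides with $\{e^{tX} \ge e^{t(1+\delta)\mu}\}$ because $x\mapsto e^{tx}$ is monotone increasing. Applying Markov's inequality to the nonnegative random variable $e^{tX}$ therefore gives
\[
\Pr[X \ge (1+\delta)\mu] \;\le\; \frac{\Exp[e^{tX}]}{e^{t(1+\delta)\mu}}.
\]
The remaining work is to upper bound the moment generating function $\Exp[e^{tX}]$ and then to optimize over $t$.

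Next, I would exploit the independence of $X_1,\dots,X_n$ to factor the MGF as $\Exp[e^{tX}] = \prod_{i=1}^n \Exp[e^{tX_i}]$. For each $i$, letting $p_i = \Pr[X_i=1]$, a direct computation gives $\Exp[e^{tX_i}] = 1 + p_i(e^t-1)$, and the elementary inequality $1+x \le e^x$ yields $\Exp[e^{tX_i}] \le \exp\!\bigl(p_i(e^t-1)\bigr)$. Taking the product over $i$ and using $\mu = \sum_i p_i$, we obtain
\[
\Exp[e^{tX}] \;\le\; \exp\!\bigl(\mu(e^t - 1)\bigr).
\]
Substituting this back into the Markov step yields $\Pr[X \ge (1+\delta)\mu] \le \exp\!\bigl(\mu(e^t-1) - t(1+\delta)\mu\bigr)$ for every $t>0$.

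The final step is to optimize the exponent $\mu(e^t-1)-t(1+\delta)\mu$ over $t>0$. Differentiating in $t$ and setting the derivative to zero gives $e^t = 1+\delta$, i.e., $t = \ln(1+\delta)$, which is strictly positive because $\delta > 0$ (so the Markov application above remains valid). Plugging $t=\ln(1+\delta)$ back in reduces the exponent to $\mu\bigl(\delta - (1+\delta)\ln(1+\delta)\bigr)$, and exponentiating recovers exactly $\bigl(e^\delta/(1+\delta)^{1+\delta}\bigr)^\mu$, as claimed. There is no real obstacle in this argument; the only points requiring a small amount of care are verifying that the optimizer $t=\ln(1+\delta)$ is positive (so the monotonicity step is legitimate) and that the Bernoulli MGF bound $1+p(e^t-1)\le\exp(p(e^t-1))$ holds for all $t$, both of which are immediate.
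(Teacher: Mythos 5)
Your proof is correct and is the standard moment-generating-function argument: the Markov step, the factorization by independence, the bound $\Exp[e^{tX_i}] = 1 + p_i(e^t - 1) \le \exp(p_i(e^t-1))$, and the optimization $t = \ln(1+\delta)$ are all carried out correctly. The paper states this Chernoff bound as a well-known fact and gives no proof of its own, so there is nothing to compare against beyond noting that yours is the canonical derivation.
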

\begin{theorem}[Hoeffding's inequality]\label{thm:hoeffding's}
Let $X_1,\dots,X_n$ be independent bounded random variables such that $a_i\le X_i\le b_i$. Consider the sum of these random variables, $S_n = X_1 + \cdots + X_n$. Then Hoeffding's theorem states that for all $t>0$,
\[\Pr[|S_n-\Exp[S_n]|\ge t]\le 2\exp\left(-\frac{2t^2}{\sum_{i=1}^n(b_i-a_i)^2}\right) \, .\]
\end{theorem}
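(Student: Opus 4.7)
The plan is to prove Hoeffding's inequality by the standard Chernoff-style argument, i.e., the exponential moment method. First I would center the variables: let $Y_i := X_i - \Exp[X_i]$, so that $\Exp[Y_i]=0$, $Y_i \in [a_i - \Exp[X_i], b_i - \Exp[X_i]]$, and the range $b_i - a_i$ is unchanged. It then suffices to bound $\Pr[\sum_i Y_i \ge t]$ (the lower-tail bound follows by applying the same argument to $-Y_i$ and doubling via the union bound, which gives the factor of $2$ in the conclusion).

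Next, for any parameter $\lambda > 0$, apply Markov's inequality to the nonnegative random variable $e^{\lambda \sum_i Y_i}$:
\[
    \Pr\!\left[\sum_{i=1}^n Y_i \ge t\right] \;=\; \Pr\!\left[e^{\lambda \sum_i Y_i} \ge e^{\lambda t}\right] \;\le\; e^{-\lambda t}\, \Exp\!\left[e^{\lambda \sum_i Y_i}\right] \;=\; e^{-\lambda t} \prod_{i=1}^n \Exp\!\left[e^{\lambda Y_i}\right],
\]
where the last step uses independence of the $Y_i$'s.

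The main technical step, and the one I expect to be the main obstacle, is Hoeffding's lemma: for any mean-zero random variable $Y$ supported in $[a,b]$, one has $\Exp[e^{\lambda Y}] \le \exp(\lambda^2 (b-a)^2/8)$. I would prove this by convexity: since $e^{\lambda y}$ is convex in $y$, for $y \in [a,b]$ we have $e^{\lambda y} \le \frac{b-y}{b-a} e^{\lambda a} + \frac{y-a}{b-a} e^{\lambda b}$. Taking expectations and using $\Exp[Y]=0$ gives $\Exp[e^{\lambda Y}] \le \frac{b}{b-a} e^{\lambda a} - \frac{a}{b-a} e^{\lambda b}$. Writing this as $e^{\phi(\lambda)}$ where $\phi(\lambda) := \ln\!\left(\frac{b}{b-a} e^{\lambda a} - \frac{a}{b-a} e^{\lambda b}\right)$, one checks $\phi(0)=\phi'(0)=0$, and by a direct computation (writing $\phi''$ in the form $p(1-p)(b-a)^2$ with $p \in [0,1]$) shows $\phi''(\lambda) \le (b-a)^2/4$ for all $\lambda$. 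Taylor's theorem then yields $\phi(\lambda) \le \lambda^2 (b-a)^2/8$, which is the lemma.

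Plugging Hoeffding's lemma into the displayed bound and using that each $Y_i$ has range at most $b_i - a_i$ gives
\[
    \Pr\!\left[\sum_{i=1}^n Y_i \ge t\right] \;\le\; \exp\!\left(-\lambda t + \tfrac{\lambda^2}{8}\sum_{i=1}^n (b_i-a_i)^2\right).
\]
Finally, I would optimize the right-hand side by setting $\lambda = 4t / \sum_i (b_i-a_i)^2$, yielding the one-sided bound $\exp\!\left(-2t^2 / \sum_i (b_i-a_i)^2\right)$. Combining with the symmetric lower-tail bound completes the proof.
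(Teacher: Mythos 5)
Your proof is correct and is the standard textbook argument (exponential moment method, Hoeffding's lemma via convexity plus a second-derivative bound, then optimize over $\lambda$). The paper itself states Hoeffding's inequality as a known concentration bound in its preliminaries and does not prove it, so there is nothing to compare against; your write-up supplies exactly the canonical proof one would cite for this result.
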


\begin{theorem}[Azuma-Hoeffding style concentration inequality]\label{thm:azuma-hoeffing}
Let $X=\sum_{i\in N} X_i$ where $X_i$ are Bernoulli random variables such that for every $k\in [N]$, $\Exp[X_k\mid X_1,\dots,X_{k-1}] \ge p$ for some $p\in (0,1)$. Let $\mu = Np$. Then 
\begin{itemize}
    \item for $0<\Delta \le \mu/2$, we have \[\Pr[X\le \mu-\Delta] \le \exp\left(-\frac{\Delta^2}{4(\mu -\Delta)}\right)\, , \]
    \item for $\Delta>\mu/2$, we have \[\Pr[X\le \mu-\Delta] \le \exp\left(\frac{-3\Delta+\mu}{4}\right)\, . \]
\end{itemize}
\end{theorem}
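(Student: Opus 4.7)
The plan is to apply the exponential-moment (Chernoff) method adapted to the conditional setting of the hypothesis. For any $t > 0$, Markov's inequality yields
\[
\Pr[X \le \mu-\Delta] \;=\; \Pr\bigl[e^{-tX} \ge e^{-t(\mu-\Delta)}\bigr] \;\le\; e^{t(\mu-\Delta)}\cdot \Exp[e^{-tX}],
\]
so the task reduces to bounding the Laplace transform $\Exp[e^{-tX}]$. Writing $p_k := \Exp[X_k \mid X_1,\dots,X_{k-1}]$ and using that $X_k$ is Bernoulli, one has $\Exp[e^{-tX_k}\mid X_1,\dots,X_{k-1}] = 1 - p_k(1-e^{-t})$. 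Since $1-e^{-t} > 0$ for $t > 0$ and $p_k \ge p$ by hypothesis, this is at most $1 - p(1-e^{-t}) \le \exp(-p(1-e^{-t}))$. Iterating the tower property from the innermost conditioning outward yields $\Exp[e^{-tX}] \le \exp(-\mu(1-e^{-t}))$, and hence $\Pr[X \le \mu-\Delta] \le \exp\bigl(t(\mu-\Delta) - \mu(1-e^{-t})\bigr)$.

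Next, I would optimize over $t$. The unconstrained minimizer of the right-hand side is $t = \ln(\mu/(\mu-\Delta))$, and substituting gives the unified Chernoff-style bound
\[
\Pr[X \le \mu-\Delta] \;\le\; \exp\bigl((\mu-\Delta)\ln(\mu/(\mu-\Delta)) - \Delta\bigr).
\]
The two inequalities in the statement then follow by bounding this single exponent with elementary one-variable analysis tuned to each regime.

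For Case 1 ($0<\Delta\le \mu/2$), I would set $x := \Delta/(\mu-\Delta) \in (0,1]$ so that the exponent becomes $(\mu-\Delta)\bigl(\ln(1+x)-x\bigr)$, and apply the elementary inequality $\ln(1+x) \le x - x^2/4$ valid on $[0,1]$ (established by noting that the derivative of $x - x^2/4 - \ln(1+x)$ equals $x(1-x)/(2(1+x)) \ge 0$ on that interval). Substituting gives the stated bound $-\Delta^2/(4(\mu-\Delta))$. For Case 2 ($\Delta > \mu/2$), I would substitute $\delta := \Delta/\mu \in (1/2,1]$ and reduce to the scalar inequality $(1-\delta)\ln(1-\delta) + \delta \ge (3\delta-1)/4$. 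The difference of the two sides is strictly convex on $(-\infty,1)$, with unique global minimizer $\delta^\ast = 1 - e^{-3/4}$ and minimum value $1/2 - e^{-3/4} > 0$, so the inequality holds throughout $(1/2,1]$ and Case 2 follows.

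The conditional hypothesis $p_k \ge p$ is used only at the per-coordinate moment step via monotonicity of $1 - p_k(1-e^{-t})$ in $p_k$; everything after the Chernoff exponent is deterministic calculus. The main obstacle, I expect, is isolating the right elementary inequalities for the two regimes: the bounds in Cases 1 and 2 must agree at the boundary $\Delta=\mu/2$ (both should give exponent $-\mu/8$), and in Case 2 the linear-in-$\delta$ bound $-(3\Delta-\mu)/4$ must still be dominated by the true Chernoff exponent near $\delta = 1$, where the latter grows like $-\Delta$. Verifying the convex comparison at the critical point $\delta^\ast = 1-e^{-3/4}$ is the only nontrivial estimate in the whole argument.
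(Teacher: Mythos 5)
Your proof is correct, and it takes a genuinely different route from the paper after the common Laplace-transform step $\Exp[e^{-tX}]\le(1-p(1-e^{-t}))^N$. You first apply $1-y\le e^{-y}$ to obtain the standard Poisson-type bound $\exp(-\mu(1-e^{-t}))$, then \emph{exactly} minimize over $t$ to land on the unified Chernoff exponent $(\mu-\Delta)\ln\bigl(\mu/(\mu-\Delta)\bigr)-\Delta$ (the KL-divergence form), and finally derive each regime by a separate one-variable estimate ($\ln(1+x)\le x-x^2/4$ on $[0,1]$ for Case~1; the convex comparison $(1-\delta)\ln(1-\delta)+\delta\ge(3\delta-1)/4$ for Case~2). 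The paper instead substitutes $v=1-e^{-u}$, reduces to $\min_{0<v<1}(1-pv)^N/(1-v)^{\mu-\Delta}$, and applies the two-sided elementary bound $e^{-v-v^2}\le 1-v\le e^{-v}$ on $v\in[0,1/2]$; this collapses the exponent into a single upward parabola $g(v)=v^2(\mu-\Delta)-\Delta v$ whose constrained minimizer on $(0,1/2]$ lands at the interior critical point $v=\Delta/(2(\mu-\Delta))$ exactly when $\Delta\le\mu/2$ and at the boundary $v=1/2$ otherwise, producing both regimes from one expression with no further calculus. Your route is the textbook-tight Chernoff form and makes transparent why the two bounds agree at $\Delta=\mu/2$ (both give $-\mu/8$), at the cost of two bespoke elementary inequalities; the paper's route is slightly more self-contained since the regime split falls automatically out of where the parabola's vertex lies. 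One small gap to patch: when $\Delta\ge\mu$ the optimal $t^*=\ln(\mu/(\mu-\Delta))$ diverges and your change of variable $\delta=\Delta/\mu$ exits $(1/2,1]$; you should note that in this range $\Pr[X\le\mu-\Delta]=0$ (as $X\ge0$) so the claim is vacuous, and handle $\delta=1$ by the continuous extension $(1-\delta)\ln(1-\delta)\to0$, giving $\phi(1)=1/2>0$ directly.
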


\begin{proof}
First, observe that for random variables $Z,Y$ if $\Exp[Z|Y]\le a$, then $\Exp[ZY]\le a \Exp[Y] $. For $u>0$, consider the random variables $e^{-uX_1},\dots,e^{-uX_N}$. We have for every $k\in [N]$, $\Exp[e^{-uX_k}\mid e^{-uX_1},\dots,e^{-uX_{k-1}}] \le (1-p)+p e^{-u}$. Now applying our previous observation, we have
\[\Exp[e^{-uX}] \le (1-p+pe^{-u}) \Exp\left[\prod_{i=1}^{N-1} e^{-uX_i}\right] \le \cdots \le (1-p+pe^{-u})^N \, . \]
Applying Markov's inequality, we have $\Pr[e^{-uX} \ge e^{-u(\mu - \Delta)}] \le \frac{(1-p+pe^{-u})^N}{e^{-u(\mu - \Delta)}}$.
Therefore,
\[\Pr[X\le \mu - \Delta] \le \min_{u>0} \frac{(1-p+pe^{-u})^N}{e^{-u(\mu - \Delta)}} \, .\]
Substituting $v=1-e^{-u}$ gives us
\[\Pr[X\le \mu - \Delta] \le \min_{0<v<1} \frac{(1-pv)^N}{(1-v)^{\mu - \Delta}} \, .\] 
Using the fact that $e^{-x-x^2}\le 1-x\le e^{-x}$ for $0\le x\le 1/2$, we can bound $\Pr[X\le \mu - \Delta]$ as
\[\Pr[X\le \mu - \Delta] \le \min_{0<v\le 1/2} \exp\left(-\mu v + (v+v^2)(\mu -\Delta)\right) = \min_{0<v\le 1/2} \exp\left(v^2(\mu -\Delta) - \Delta v \right) \, .\]
$g(v) = v^2(\mu -\Delta) - \Delta v$ is an upward-facing parabola that attains minima at $v=\frac{\Delta}{2(\mu-\Delta)}$. When $\Delta \le \mu/2$, $\frac{\Delta}{2(\mu-\Delta)}\le 1/2$, and hence, we can bound \[\Pr[X\le \mu - \Delta] \le \exp\left(-\frac{\Delta^2}{4(\mu -\Delta)}\right) \, .\] For $\Delta > \mu/2$, the minimum value for $v\in(0,1/2]$ is achieved at $v=1/2$. Substituting $v=1/2$ in $g(v)$, we get
\[\Pr[X\le \mu-\Delta] \le \exp\left(\frac{-3\Delta+\mu}{4}\right)\, .\]
\end{proof}

We also prove the following concentration bound for sampling without replacement. 
\begin{theorem}[Sampling without replacement]\label{thm:sampling without replacement}
Consider a bag containing $n$ balls numbered $1,\dots,n$. Let $k<n$ balls be sampled at random without replacement from this bag. Let $\ell\in [n]$ and $i\in[0,n-\ell]$. Let $X_{\ell,i}$ denote the number of balls that are sampled whose numbers lie in the range $i+1$ to $i+\ell$. Then,
\[
\Pr\left[X_{\ell,i} < 2\ell k/3n\right] \le 2\exp(-\ell k/18 n) \, .
\]
    
\end{theorem}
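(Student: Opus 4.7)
The random variable $X_{\ell,i}$ is hypergeometric with mean $\mu := \ell k / n$, and the claimed inequality is a Chernoff-style lower tail bound at relative deviation $\delta = 1/3$. My plan is to realize $X_{\ell,i}$ as a sum of dependent Bernoulli random variables revealed sequentially, and then to apply the Azuma--Hoeffding style inequality in \cref{thm:azuma-hoeffing}. Concretely, I would reveal the $k$ samples one at a time and let $T_j\in\{0,1\}$ indicate that the $j$-th draw lies in $[i+1, i+\ell]$, so that $X_{\ell,i} = \sum_{j=1}^k T_j$ and a direct calculation gives
\[\Exp[T_j \mid T_1,\ldots,T_{j-1}] \;=\; \frac{\ell - S_{j-1}}{n-j+1}, \qquad S_{j-1} := T_1+\cdots+T_{j-1}.\]
These conditional expectations are \emph{not} uniformly bounded below -- they shrink as the partial sum grows -- so \cref{thm:azuma-hoeffing} does not apply to $(T_j)$ directly.

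The key step is a truncation (optional-stopping) trick. Set $\tau := 2\ell k/(3n)$ and define a modified sequence $(Z_j)$ by $Z_j := T_j$ as long as $S_{j-1} < \tau$, and $Z_j := 1$ once the partial sum first reaches $\tau$. A brief case analysis then shows $\{X_{\ell,i} < \tau\} = \{\sum_{j=1}^k Z_j < \tau\}$: if the original partial sum ever hits $\tau$ then the forced $1$'s boost $\sum Z_j$ above $\tau$, while otherwise $\sum Z_j = X_{\ell,i}$. On the modified sequence, both before truncation (where $S_{j-1} < \tau$ and $n-j+1 \le n$) and after truncation (where $Z_j = 1$), the conditional expectations satisfy
\[\Exp[Z_j \mid Z_1,\ldots,Z_{j-1}] \;\ge\; p, \qquad p := \frac{\ell-\tau}{n} = \frac{\ell}{n}\!\left(1-\frac{2k}{3n}\right).\]
By the hypergeometric symmetry between the roles of $\ell$ and $k$, I may assume $k\le n/2$ (the complementary case $\ell > n/2,\; k > n/2$ reduces similarly by passing to $Y := k - X_{\ell,i}$, which is again hypergeometric but with a small success-count parameter $n-\ell$); this gives $p\ge 2\ell/(3n)$ and $\mu_0 := kp > \tau$.

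The finish is then a direct application of \cref{thm:azuma-hoeffing} to $\sum_{j=1}^k Z_j$ with $N=k$ and $\Delta := \mu_0 - \tau > 0$, yielding $\Pr[X_{\ell,i} < \tau] \le \exp(-\Omega(\ell k / n))$, which matches the shape of the claimed bound. The main obstacle is precisely the dependence introduced by sampling without replacement, which precludes a direct application of \cref{thm:azuma-hoeffing} to the natural sequence $(T_j)$; the truncation step is what converts the partial-sum-dependent conditional expectations into a uniform lower bound exactly on the event of interest. Pinning down the exact constant $1/18$ in the exponent may require slightly sharper bookkeeping, for instance a direct MGF computation using the classical fact that $X_{\ell,i}$ is dominated in the convex order by $\mathrm{Bin}(k,\ell/n)$, after which the standard binomial Chernoff bound produces the stated constant (with the factor of $2$ absorbing the looseness in the truncation).
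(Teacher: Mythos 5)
Your route is genuinely different from the paper's, which works directly with the ball-inclusion indicators $B_j=\mathbf 1[\text{ball $j$ sampled}]$ for $j\in[i+1,i+\ell]$, cites Dubhashi--Priebe--Ranjan to establish that these are negatively associated, and then applies the Chernoff bound for negatively associated variables (\cref{thm:NA chernoff}) with $\delta=1/3$; the constant $1/18$ and the case-free form of the statement fall out immediately. Your sequential-reveal decomposition together with the truncation at $\tau$ is a nice idea, and the equivalence $\{X_{\ell,i}<\tau\}=\{\sum_j Z_j<\tau\}$ is correct, but the argument after that has two real gaps.

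First, the reduction for the complementary case $k>n/2$ does not go through as sketched: the hypergeometric symmetry lets you swap $k$ and $\ell$, so you may assume $k\le\ell$, but when both exceed $n/2$, passing to $Y:=k-X_{\ell,i}$ converts $\{X_{\ell,i}<\tau\}$ into the \emph{upper}-tail event $\{Y>k-\tau\}$, whereas \cref{thm:azuma-hoeffing} controls only lower tails. You would need an upper-tail analogue (conditional expectations bounded above by some fixed $p$), which the paper does not state and you do not supply. Second, even on the favorable case $k\le n/2$ the exponent never reaches $1/18$ and in fact collapses: with $p=(\ell-\tau)/n$, $\mu_0=kp$, and $\Delta=\mu_0-\tau$, a short computation gives $\Delta=\frac{\ell k}{3n}\bigl(1-\frac{2k}{n}\bigr)$, so \cref{thm:azuma-hoeffing} yields $\exp\bigl(-\Delta^2/(4\tau)\bigr)=\exp\bigl(-\frac{\ell k}{n}\cdot\frac{(1-2k/n)^2}{24}\bigr)$, which is strictly weaker than the target and degenerates to $1$ as $k\to n/2$. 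This is not a matter of ``slightly sharper bookkeeping''; the bound genuinely vanishes. The alternative you mention in passing at the end --- Hoeffding's convex-order domination of the hypergeometric by $\mathrm{Bin}(k,\ell/n)$ followed by the binomial lower-tail Chernoff bound --- is a clean, correct, case-free proof (it even sheds the factor of $2$), and it is closer in spirit to the paper's negative-association argument since both reduce the without-replacement dependence to an independent-case bound; that should be the main argument rather than the afterthought.
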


Before we prove \cref{thm:sampling without replacement}, we state a few facts from probability theory that are useful to prove this result.
A collection of random variables are said to be \emph{negatively associated} if they satisfy the following conditions.
\begin{definition}[Negatively associated random variables]\label{def:NA random variables}
The random variables $\mathbf{X}=(X_1,\dots,X_n)$ are negatively associated if for every disjoint index sets $I,J\subseteq [n]$,
\[
\Exp[f(X_i,i\in I)g(X_j,j\in J] \le \Exp[f(X_i,i\in I)] \Exp[g(X_j,j\in J] \, ,
\] for all functions $f:\mathbb{R}^{|I|}\rightarrow \mathbb{R}$ and $g:\mathbb{R}^{|J|}\rightarrow \mathbb{R}$ that are either both non-increasing or both non-decreasing.
\end{definition}
A theorem due to Dubhashi and Ranjan \cite[Proposition 5]{BallsAndBins} states that Chernoff-Hoeffding bounds can be applied on negatively associated random variables. In particular, we have the following.

\begin{theorem}[Chernoff bound for negatively associated random variables]\label{thm:NA chernoff}
Let $\mathbf{X}=(X_1,\dots,X_n)$ be negatively associated identically distributed Bernoulli random variables. Consider the sum of these random variables, $S_n = X_1 + \cdots + X_n$. Let $\mu = \Exp[S_n]$. Then for all $0\le \delta\le 1$,
\[\Pr[S_n\le (1-\delta)\mu]\le 2\exp\left(-\delta^2\mu /2\right) \, .\]
\end{theorem}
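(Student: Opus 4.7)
The plan is to express $X_{\ell,i}$ as a sum of Bernoulli indicators that form a negatively associated family, and then invoke Theorem~\ref{thm:NA chernoff}. For each $j\in[n]$, let $Y_j$ be the indicator of the event that ball $j$ is one of the $k$ sampled balls. Since the sample is a uniformly random $k$-subset of $[n]$, each $Y_j\sim\mathrm{Bernoulli}(k/n)$, so the $Y_j$'s are identically distributed. Moreover,
\[
X_{\ell,i}=\sum_{j=i+1}^{i+\ell}Y_j,\qquad \mu:=\Exp[X_{\ell,i}]=\ell k/n.
\]

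The substantive step is to verify that $(Y_1,\dots,Y_n)$ is negatively associated in the sense of Definition~\ref{def:NA random variables}. The joint law of $(Y_1,\dots,Y_n)$ is precisely the uniform distribution on all binary strings of length $n$ with exactly $k$ ones; this is the classical ``$0$--$1$ permutation distribution,'' a textbook example of a negatively associated family (Joag-Dev and Proschan). Intuitively, conditioning on $Y_j=1$ strictly decreases the chance that any other $Y_{j'}$ equals $1$, since one of the $k$ sampling slots has been committed. I will cite this fact rather than reproving it in full, and then use the standard closure property that any subset of a negatively associated family is itself negatively associated, to conclude that $(Y_{i+1},\dots,Y_{i+\ell})$ is a negatively associated, identically distributed Bernoulli family.

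With this in hand, Theorem~\ref{thm:NA chernoff} applied with $\delta=1/3$ yields
\[
\Pr\left[X_{\ell,i}\le (2/3)\mu\right]\le 2\exp\!\left(-\delta^2\mu/2\right)=2\exp(-\mu/18)=2\exp(-\ell k/18n).
\]
Since the event $\{X_{\ell,i}<2\ell k/3n\}$ is contained in $\{X_{\ell,i}\le(2/3)\mu\}$, the stated bound follows. The only nontrivial step is the negative-association fact about sampling without replacement; once that is granted, the conclusion is a direct substitution of $\delta=1/3$ into the stated Chernoff bound, and no further calculation is required.
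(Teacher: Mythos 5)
Your proposal does not prove the stated theorem. The statement in front of you is \cref{thm:NA chernoff} itself --- the lower-tail Chernoff bound for an arbitrary family of negatively associated, identically distributed Bernoulli variables --- but your argument instead proves the downstream application, \cref{thm:sampling without replacement}, and does so by \emph{invoking} \cref{thm:NA chernoff} as a black box (``\ldots and then invoke Theorem~\ref{thm:NA chernoff}''). As a proof of the stated result this is circular: the one piece of mathematics that actually needs to be supplied --- why the standard Chernoff argument survives when independence is weakened to negative association --- appears nowhere in your write-up. Concretely, what is needed is that for $u>0$ the functions $e^{-uX_i}$ are non-increasing functions of disjoint (singleton) index sets, so repeated application of \cref{def:NA random variables} gives $\Exp\bigl[e^{-uS_n}\bigr]\le\prod_{i=1}^n\Exp\bigl[e^{-uX_i}\bigr]$; after that, the usual exponential-moment and Markov computation (as in the proof of \cref{thm:azuma-hoeffing}) yields the claimed tail bound. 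The paper does not reprove this either --- it cites Proposition 5 of Dubhashi--Ranjan --- so either reproducing that citation for \emph{this} statement or carrying out the MGF factorization would be acceptable; supplying neither is not.

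For what it is worth, read as a proof of \cref{thm:sampling without replacement}, your argument is correct and essentially identical to the paper's: the paper also writes $X_{\ell,i}$ as a sum of sampling indicators $B_j$, cites Dubhashi--Priebe--Ranjan for their negative association, uses closure of negative association under restriction to a subset of indices (item 1 of \cref{prop:NA}), and applies \cref{thm:NA chernoff} with $\delta=1/3$ to get $2\exp(-\ell k/18n)$. But that is a different theorem from the one you were asked to prove.
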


\begin{theorem}[Hoeffding's inequality for negatively associated random variables]\label{thm: NA Hoeffding}
Let $X_1,\dots,X_n$ be negatively associated bounded random variables such that $a_i\le X_i\le b_i$. Consider the sum of these random variables, $S_n = X_1 + \cdots + X_n$. Then Hoeffding's theorem states that for all $t>0$,
\[\Pr[|S_n-\Exp[S_n]|\ge t]\le 2\exp\left(-\frac{2t^2}{\sum_{i=1}^n(b_i-a_i)^2}\right) \, .\]    
\end{theorem}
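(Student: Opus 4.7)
The plan is to follow the standard Cramér–Chernoff moment generating function (MGF) proof of the classical Hoeffding inequality, and the only adjustment needed is to replace the step where independence is used to factor the MGF of the sum by an inequality coming from negative association.

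First, I would fix $\lambda>0$ and apply Markov's inequality to the non-negative random variable $\exp(\lambda (S_n-\Exp[S_n]))$, giving
\[
\Pr[S_n - \Exp[S_n] \ge t]\le e^{-\lambda t}\, \Exp\!\left[\prod_{i=1}^{n} e^{\lambda (X_i - \Exp[X_i])}\right].
\]
The key step is to bound the expectation of the product by the product of expectations. Since $x\mapsto e^{\lambda(x-\Exp[X_i])}$ is non-decreasing for every $i$, the defining property in \cref{def:NA random variables} applied iteratively (splitting off one coordinate at a time, using $I=\{1,\dots,k\}$ and $J=\{k+1\}$ and noting that a product of non-decreasing non-negative functions of disjoint coordinate blocks is again non-decreasing in each block) yields
\[
\Exp\!\left[\prod_{i=1}^{n} e^{\lambda (X_i - \Exp[X_i])}\right]\le \prod_{i=1}^{n}\Exp\!\left[e^{\lambda (X_i - \Exp[X_i])}\right].
\]
This is the only place the negative association hypothesis enters, and it serves as the direct substitute for the independence-based factorization.

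Next I would invoke the standard Hoeffding lemma: for any bounded random variable $Y$ with $a\le Y\le b$ and $\Exp[Y]=0$, one has $\Exp[e^{\lambda Y}]\le \exp(\lambda^2 (b-a)^2/8)$. Applying this to each centered $X_i-\Exp[X_i]$ (which lies in an interval of length $b_i-a_i$), the upper-tail bound becomes
\[
\Pr[S_n - \Exp[S_n] \ge t]\le \exp\!\left(-\lambda t+\tfrac{\lambda^2}{8}\sum_{i=1}^{n}(b_i-a_i)^2\right).
\]
Optimizing in $\lambda$ by choosing $\lambda=4t/\sum_i(b_i-a_i)^2$ gives $\exp(-2t^2/\sum_i(b_i-a_i)^2)$.

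For the lower tail, I would repeat the same argument with $-\lambda$ in place of $\lambda$; the functions $x\mapsto e^{-\lambda(x-\Exp[X_i])}$ are now all non-increasing, so the negative association inequality in \cref{def:NA random variables} again applies (the definition allows both all non-increasing or all non-decreasing). A union bound over the two tails yields the factor of $2$ in the stated bound. The only technical care point, and thus the part I would write out most carefully, is verifying the iterated monotone-product step from the pairwise NA definition; beyond that, the proof is a cosmetic modification of the classical Hoeffding argument.
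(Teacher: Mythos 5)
Your proposal is correct, and it is exactly the standard argument behind the result the paper relies on: the paper does not prove this theorem itself but simply cites Dubhashi and Ranjan (Proposition 5 of \cite{BallsAndBins}), whose proof is the Cram\'er--Chernoff argument you describe, with negative association substituting for independence in the MGF factorization step. The one point you flag as needing care --- iterating the two-block NA inequality to peel off one coordinate at a time --- is indeed where the closure of NA under taking subsets (part 1 of \cref{prop:NA}) is needed so that the induction hypothesis applies to $(X_1,\dots,X_{n-1})$, and with that noted the argument is complete.
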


The following proposition from \cite[Proposition 7]{BallsAndBins} states some useful properties of negatively associated random variables.
\begin{proposition}\label{prop:NA}
Let $\mathbf{X}=(X_1,\dots,X_n)$ be negatively associated random variables.
\begin{enumerate}
    \item For any index set $I\subseteq [n]$, the random variables $(X_i,i\in I)$ are also negatively associated.\label{prop:NA subset}
    \item If the random variables $\mathbf{Y}=(Y_1,\dots,Y_m)$ are negatively associated and are mutually independent from $\mathbf{X}$, then the augmented vector $(\mathbf{X},\mathbf{Y})$ of random variables are also negatively associated.\label{prop: NA augmented}
    \item Let $I_1,\dots,I_k\subseteq[n]$ be disjoint index sets, for some positive integer $r$. For $j\in [k]$, let $h_j:\mathbb{R}^{|I_j|}\rightarrow\mathbb{R}$ be functions that are all non-decreasing or all non-increasing, and define $Y_j:=h_j(X_i,i\in I_j)$. Then the vector of random variables $\mathbf{Y}=(Y_1,\dots,Y_k)$ are also negatively associated. In other words, non-decreasing (or non-increasing) functions of disjoint subsets of negatively associated variables are also negatively associated.\label{prop: NA composition}
\end{enumerate}
\end{proposition}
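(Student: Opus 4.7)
The plan is to prove each of the three closure properties by unpacking the definition of negative association and exploiting the fact that monotone maps compose nicely; none of the arguments require anything beyond elementary manipulations with conditional expectations. All three parts rely only on the single defining inequality of NA applied to carefully chosen functions.

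Part 1 (restriction) is essentially immediate. For disjoint $I', J' \subseteq I$, both are also disjoint subsets of $[n]$. Any monotone function $f$ of $(X_i : i \in I')$ can be viewed, after padding with dummy coordinates, as a monotone function of the full vector that simply ignores the other coordinates; likewise for $g$ on $J'$. The NA inequality for $\mathbf{X}$ then applies verbatim to $f$ and $g$.

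Part 3 (composition) is almost as direct. Given disjoint $A, B \subseteq [k]$ and, say, both non-decreasing $f, g$, define $F(X_i : i \in \bigcup_{j \in A} I_j) := f(h_j(X_i : i \in I_j) : j \in A)$ and $G$ analogously on $B$. Since compositions of non-decreasing maps are non-decreasing in each argument, $F$ and $G$ are monotone in the $X$-coordinates in the same direction, and because the $I_j$'s are pairwise disjoint with $A \cap B = \emptyset$, their domains $\bigcup_{j \in A} I_j$ and $\bigcup_{j \in B} I_j$ are disjoint. Applying the NA inequality of $\mathbf{X}$ to $F$ and $G$ yields exactly the required inequality $\Exp[fg] \le \Exp[f]\Exp[g]$ in the $Y$-variables.

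Part 2 (augmentation) is the main obstacle, as it needs to blend the NA properties of $\mathbf{X}$ and $\mathbf{Y}$ through a conditioning argument that relies on their independence. Let disjoint $A, B \subseteq [n+m]$ be given and split each as $A = A_X \sqcup A_Y$ and $B = B_X \sqcup B_Y$, according to whether coordinates lie in the $\mathbf{X}$-block or the $\mathbf{Y}$-block. Assume $f, g$ are both non-decreasing (the non-increasing case is symmetric). For fixed $\mathbf{Y}$, the functions $f$ and $g$ are non-decreasing functions of the disjoint $\mathbf{X}$-coordinate sets $A_X, B_X$, so NA of $\mathbf{X}$ gives
\[
\Exp[fg \mid \mathbf{Y}] \le \Exp[f \mid \mathbf{Y}] \cdot \Exp[g \mid \mathbf{Y}].
\]
The delicate step is to argue that $\Exp[f \mid \mathbf{Y}]$ is a non-decreasing function of $(Y_i : i \in A_Y)$, and similarly for $\Exp[g \mid \mathbf{Y}]$ on $B_Y$. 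This uses independence of $\mathbf{X}$ and $\mathbf{Y}$: the conditional expectation reduces to integration of a coordinate-wise non-decreasing function against a fixed marginal, which preserves monotonicity coordinate-wise in the remaining arguments. Now taking expectation over $\mathbf{Y}$ and applying the NA of $\mathbf{Y}$ to the two monotone functions on the disjoint sets $A_Y, B_Y$ gives $\Exp[fg] \le \Exp[f]\Exp[g]$, completing the proof.
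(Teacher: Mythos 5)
Your proof is correct. Note that the paper itself gives no proof of this proposition --- it is quoted verbatim from Dubhashi--Ranjan (Proposition 7 of the cited reference, going back to Joag-Dev and Proschan) --- and your argument is precisely the standard one found there: parts 1 and 3 by padding/composing monotone functions so that the defining NA inequality of $\mathbf{X}$ applies directly, and part 2 by conditioning on $\mathbf{Y}$, using independence both to apply the NA inequality of $\mathbf{X}$ under the conditional law and to see that $\Exp[f\mid \mathbf{Y}]$ remains coordinate-wise monotone in the $\mathbf{Y}$-variables, then finishing with the NA inequality of $\mathbf{Y}$. All steps are sound, including the correct identification of part 2 as the only place where a genuine argument (rather than a change of variables) is needed.
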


We are now ready to prove \cref{thm:sampling without replacement}.

\begin{proof}
For $j\in [n]$, let $B_j$ be the indicator random variable for the event that ball numbered $j$ is sampled. Dubhashi, Priebe, and Ranjan proved in \cite[Corollary 11]{Dubhashi_Priebe_Ranjan_1996} that the random variables $(B_j:j\in [n])$ are negatively associated. It follows from \cref{prop:NA subset} in \cref{prop:NA} that $(B_{j}:i+1\le j \le i+\ell)$ are also negatively associated. Since $X_{\ell,i} = \sum_{j=i+1}^{i+\ell} B_j$, $\Exp[X_{\ell,j}]=\ell k/n$, and the variables $B_j$ are identically distributed, the desired bound follows from applying \cref{thm:NA chernoff}. 
\end{proof}
\section{Streaming algorithm for detecting $\ell_2$ \emph{super} heavy hitter}\label{sec:KSHH}

In this section, we describe our subroutine for detecting super-heavy hitters and prove some lemmas that will be useful in the analysis of our main heavy-hitter algorithms in \cref{sec:SHH} and \cref{sec:epsilon-heavy}.

Let $\s$ be an input stream divided into $L$ windows, each of length $W$. Let $m = L W$ denote the total length of the stream. Let $U$ be a universe of size at most $\poly\, n$\footnote{Every element in the universe can be represented using $O(\log n)$ bits.}.
$K = (\log n)^{100}$ is a fixed parameter.
We say that an element $x$ is a $K$-super heavy hitter if $f_x^2\geq K\cdot \sum_{y\neq x} f_y^2$, or equivalently it is a $\left(1-\frac{1}{K+1}\right)$-heavy hitter. We are given a subset of the universe $V\subset U$, and we focus on the substream consisting of elements in $V$. The positive case is that there exists a $K$-super heavy hitter $x^*\in V$ (with respect to the substream restricted to $V$) of frequency $L$ and \emph{evenly distributed} in the original stream. In this section, we develop algorithms that return YES if we are in the positive case, with high probability, return NO if we are far from it (i.e., every element in $V$ has frequency at most $K^{-1/8} L $), and with further guarantees for the intermediate cases.
In the following section, we first give a streaming algorithm for the case when $L\ge K^{3/16}$, and in the end, we handle the case when $L< K^{3/16}$ separately.
\subsection{$L\ge K^{3/16}$}\label{sec:largeL}
\begin{algorithm}[H]
\caption{Streaming algorithm for evenly distributed $\ell_2$ super-heavy hitters ($K$-SHH)}
\label{Algorithm 1}
\begin{algorithmic}[1]
\Input a stream $S$ divided into $L$ windows, each of length $W$ ($m = L W$ denotes the total length of the stream), parameter $K = (\log n)^{100}$ such that $L\ge K^{3/16}$, and a subset $V$ of the universe $U$ of size $\poly\, n$
\State set $T=100 \log m$ and $p=(2/5)T/L$.
\State initialize $\mathsf{counter}=0$.
\State subdivide the stream into $T$ consecutive segments, each containing $L/T$ windows.
\For{each segment}
\State process the segment by sampling each element of the stream that is in $V$ independently with probability $p$ until an element is sampled or there are no more elements left in the current segment
\State proceed to the following step if the current window is not \emph{blocked}. Else, continue the sampling process\label{algo_KSHH_line-block}
\If{an element is sampled}
\State set $y$ to be this sampled element and $\mathsf{t}=1$.
\Else
\State go to the next segment.
\EndIf
\If{the total number of remaining windows within the segment is less than $L/K^{1/8}$}
\State go to the next segment.
\ElsIf{$y$ occurs in each of the next $L/K^{1/8}$ windows} \label{alg1_check_line-14}
\State increment $\mathsf{counter}$ by $1$, and go to the next segment.
\Else
\State return to Line $5$.
\EndIf
\EndFor
\Output return YES if $\mathsf{counter}>T/4$ and NO otherwise.
\end{algorithmic}
\end{algorithm}

Consider the following (hypothetical) algorithm $\mathcal{A}$\footnote{We define $\mathcal{A}$ purely for the sake of the analysis.} which is a ``space-unconstrained'' modification of \cref{Algorithm 1}: The execution of $\mathcal{A}$ is same as that of \cref{Algorithm 1} except that in Line 5, $\mathcal{A}$ continuously samples elements in the segment. In particular, even when an element is already sampled in Line 5, $\mathcal{A}$ does not pause sampling elements in the current segment. For all the sampled elements, the Lines 7-18 are executed concurrently with the sampling process in Line 5. In addition, $\mathcal{A}$ and \cref{Algorithm 1} share the same random coin tosses\footnote{For the analysis sake, we can imagine that \cref{Algorithm 1} also tosses a random coin whenever it sees an element of the stream that is in $V$, but unlike $\mathcal{A}$ discards the outcomes of these random coins when there is already a sampled element that is being checked.}. The hypothetical algorithm has no space constraints and hence Line 6 is never executed. Finally, since Lines 7-18 are executed independently for all the sampled elements, we find that compared to \cref{Algorithm 1}, it is easier to analyze $\mathcal{A}$ directly. In the following lemmas, we first analyze $\mathcal{A}$ and then use it to analyze \cref{Algorithm 1}.

In the following, we analyze the space used by $\mathcal{A}$ in terms of the number of windows in which it uses $\log n$ bits.

\begin{lemma}\label{lem_KSHH_expectation}
The expected total number of windows in which $\mathcal{A}$ uses $\log n$ bits is at most
\[
\left(\sum_{x:f_x\leq L/K^{1/8}} f_x^2\right) \cdot O(T/L) + \left(\sum_{x:f_x> L/K^{1/8}} f_x \right) \cdot O(T/K^{1/8}) \, .
\]
\end{lemma}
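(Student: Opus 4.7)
The plan is to compute the expectation by linearity over the sample events, each triggered independently by the algorithm's $p$-probability coin tosses at individual occurrences. For each element $x$ and each of its $f_x$ occurrences, with probability $p$ this occurrence is sampled, and since $\mathcal{A}$ never pauses sampling, this initiates a check that runs for a deterministic number $D_i$ of windows (determined entirely by the stream, not by the coin tosses). During those $D_i$ windows the check stores the sampled element $y$ using $\log n$ bits, and parallel checks in $\mathcal{A}$ contribute additively to the total $\log n$-bit window count. Thus
\[
\E[\text{total windows using }\log n\text{ bits}] \;=\; p \sum_{x \in U} \sum_{i=1}^{f_x} D_i,
\]
and it remains to bound $\sum_i D_i$ for each fixed $x$ in a purely combinatorial way.

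The key deterministic claim I would prove is that $D_i \le \min\bigl(f_x,\, L/K^{1/8}\bigr)$. The $L/K^{1/8}$ bound is immediate from the cap built into Line~\ref{alg1_check_line-14}. For the $f_x$ bound, the check advances past the window $w_i + j$ only if $x$ appears in that window, so $D_i$ is bounded by the number of windows after $w_i$ that contain $x$; since each such window accounts for at least one distinct occurrence of $x$ and there are only $f_x$ occurrences in total, this count is at most $f_x$. Summing over the $f_x$ occurrences yields $\sum_i D_i \le f_x \cdot \min(f_x, L/K^{1/8})$.

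To finish, I would split the universe into non-heavy elements ($f_x \le L/K^{1/8}$), where the first side of the minimum is tighter, and heavy ones ($f_x > L/K^{1/8}$), where the second is tighter. Substituting $p = (2/5)T/L = O(T/L)$ then gives
\[
p \sum_{f_x \le L/K^{1/8}} f_x^2 \;+\; p \sum_{f_x > L/K^{1/8}} f_x \cdot \tfrac{L}{K^{1/8}} \;=\; \left(\sum_{f_x\le L/K^{1/8}} f_x^2\right) \cdot O(T/L) \;+\; \left(\sum_{f_x > L/K^{1/8}} f_x\right) \cdot O(T/K^{1/8}),
\]
which is the stated bound.

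The main obstacle I expect is cleanly justifying that in $\mathcal{A}$ each sampled occurrence of $x$ contributes exactly $D_i$ $\log n$-bit windows to the counted total, independent of other concurrent checks. This requires explicitly invoking the definition of $\mathcal{A}$ as a ``space-unconstrained'' variant whose check routines are run in parallel without interference, and the observation that each $D_i$ depends only on the stream (it is just asking ``does $x$ appear in each of the next $L/K^{1/8}$ windows?''). There is also a minor off-by-one issue about whether the window that first witnesses $x$'s absence is charged to $D_i$, but this at worst replaces $f_x$ by $f_x+1$ in the deterministic bound and is absorbed into the $O(\cdot)$.
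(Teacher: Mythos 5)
Your proof is correct and follows the same approach as the paper: linearity of expectation over the (independent) coin tosses at each occurrence of $x$, combined with a per-sample bound on the number of windows the check can occupy before terminating. One thing worth flagging: the paper's proof sketch states the per-sample check duration as "at most $\max\{L/K^{1/8}, f_x\}$ windows," which appears to be a typo for $\min\{L/K^{1/8}, f_x\}$ — only the $\min$ is consistent with the lemma statement (yielding $f_x^2\cdot O(T/L)$ for low-frequency and $f_x\cdot O(T/K^{1/8})$ for high-frequency elements) and with how the bound is reused in \cref{lem:low_freq_conc}, and $\min$ is exactly what you derived. Your additional care in spelling out why $D_i\le f_x$ (the check advances one window per witnessed occurrence and stops at the first miss) and in noting the absorbable off-by-one is a welcome tightening of the paper's one-line justification.
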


\begin{proof}
Note that $\mathcal{A}$ uses $\log n$ bits only when executing Lines 7-18 and checking for the occurrences of the sampled element.
    For any element $x$ in the universe, the expected number of times it is sampled is $f_x \cdot O(T/L)$. Each time $x$ is sampled, the Lines 7-18 are executed in at most $\max\{L/K^{1/8}, f_x\}$ windows.
\end{proof}

\begin{lemma}[Concentration lemma for the high-frequency elements]\label{lem:high_freq_conc}
Suppose that $\sum_{x:f_x> L/K^{1/8}} f_x \le L \gamma$. With probability at least $1-1/\poly\, m$, the total number of windows in which $\mathcal{A}$ uses $\log n$ bits after sampling a high-frequency (frequency larger than $L/K^{1/8}$) element is at most $L \gamma (\log m)^{2}/K^{1/8}$.
\end{lemma}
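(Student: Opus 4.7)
The plan is to upper bound the total $\log n$-bit windows by $M\cdot L/K^{1/8}$, where $M$ counts the total samples of high-frequency elements made in Line~5 of $\mathcal{A}$, and then concentrate $M$ by Chernoff. First, I would observe that whenever Line~5 samples an element $y$, the subsequent check in Lines~7--18 examines at most the next $L/K^{1/8}$ windows before either succeeding at Line~14 and advancing to the next segment, or failing and returning to Line~5. So each sample occupies $\log n$ bits for at most $L/K^{1/8}$ windows, and summing over samples of high-frequency elements gives an upper bound of $M\cdot L/K^{1/8}$ windows of $\log n$-bit usage attributable to such samples. (In $\mathcal{A}$, checks may run concurrently for overlapping sample sets, but they never share bits, so this linear accounting is valid.)

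Next, I would concentrate $M$. Because $\mathcal{A}$ never pauses sampling, each occurrence in the stream is independently sampled with probability $p=2T/(5L)=40(\log m)/L$. Hence $M$ is a sum of independent Bernoulli$(p)$ indicators, one per occurrence of a high-frequency element, with at most $\sum_{x:f_x>L/K^{1/8}} f_x \leq L\gamma$ terms, so $\E[M]\leq pL\gamma = 40\gamma\log m$. Applying the multiplicative Chernoff bound (\cref{thm:chernoff}) with target $(1+\delta)\mu = \gamma(\log m)^2$, so that $1+\delta \geq (\log m)/40$, yields
\[
\Pr\!\left[M \geq \gamma(\log m)^2\right] \;\leq\; \left(\frac{40e}{\log m}\right)^{\gamma(\log m)^2} \;\leq\; \frac{1}{\poly\,m}
\]
for $m$ in the relevant regime. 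On the complementary event, the quantity of interest is at most $M\cdot L/K^{1/8}\leq L\gamma(\log m)^2/K^{1/8}$, as claimed.

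The main technical care is in making the Chernoff tail above genuinely polynomially small in $m$ for every relevant regime of $\gamma$. When the set of high-frequency elements is empty, the claim is trivial; otherwise $\gamma \geq 1/K^{1/8} = 1/(\log n)^{12.5}$ automatically, and combined with the $\log m$-fold amplification sitting inside the target threshold (we aim for $\Theta(\log m)$ times the mean), this should push the Chernoff tail below $1/\poly\,m$. Aside from this bookkeeping, the argument is a routine sample-count concentration that closely parallels the expectation bound of \cref{lem_KSHH_expectation}.
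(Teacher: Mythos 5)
Your proposal takes essentially the same approach as the paper's proof: bound the number $M$ of high-frequency samples by a multiplicative Chernoff bound, and multiply by $L/K^{1/8}$ windows per sample. Your bookkeeping is in fact a bit more careful than the paper's, which writes $\E[X]\le\gamma\log m$ and so silently drops the constant $40$ coming from $p=(2/5)T/L=40\log m/L$; your form $\left(40e/\log m\right)^{\gamma(\log m)^2}$ is the correct Chernoff tail.

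The one step you flagged but did not actually close is the final assertion that this tail is $1/\poly\, m$. From $\gamma\ge K^{-1/8}$ alone it is not: taking logarithms one needs roughly $\gamma\,(\log m)\,(\log\log m)=\Omega(1)$, and with $\gamma$ as small as $K^{-1/8}=1/\poly\log n$ while $\log m$ is itself only polylogarithmic in $n$ (which is possible, since the only a priori lower bound is $m\ge L\ge K^{3/16}$), the exponent $\gamma(\log m)^2$ can be far below $\log m$ and the bound degrades to trivial. To be fair, the paper's own proof is equally terse here and shares this imprecision. In the place the lemma is actually invoked — the proof of \cref{clm:blk_window_I} inside \cref{lem:block_window_concentrate} — one has $m=2^lW\ge n/\log^{10}n$ and $\gamma=\Theta(\poly\log n)\gg K^{-1/8}$, which makes $\gamma(\log m)^2\gg\log m$ and the tail negligible, so the downstream uses are sound. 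But the sentence asserting that $\gamma\ge K^{-1/8}$ combined with $\log m$-fold amplification ``should push the tail below $1/\poly\, m$'' is not a proof and is not true at that level of generality.
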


\begin{proof}
It suffices to show that the total number of high-frequency elements sampled by $\mathcal{A}$ is at most $\gamma (\log m)^{2}$ with probability at least $1-1/\poly\, m$. Let $R=\sum_{x: f_x > L/K^{1/8}} f_x$. For $i\in [R]$, let $X_i$ be an indicator random variable for the event that the $i$-th high-frequency element in the stream is sampled by $\mathcal{A}$. We know that each element is sampled independently with probability $p$. Let $X=\sum_{i\in [R]} X_i$. By assumption, we have $R \le L \gamma$. Therefore, $\Exp[X]\le \gamma \log m$. Applying Chernoff bound (see \cref{thm:chernoff}), we conclude that $\Pr[X > \gamma (\log m)^{2}] \le 1/\poly\, m$. 
\end{proof}

\begin{lemma}[Concentration lemma for the low-frequency elements]\label{lem:low_freq_conc}
   Suppose that $\sum_{x:f_x\leq L/K^{1/8}} f_x^2\le L^2/(\log m)^c$. With probability at least $1-1/\poly\, m$, the total number of windows in which $\mathcal{A}$ uses $\log n$ bits after sampling a low-frequency (frequency at most $L/K^{1/8}$) element is at most $L/(\log m)^{c/2-1}$.
\end{lemma}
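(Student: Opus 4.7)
The plan is to apply Bernstein's inequality to a sum of independent bounded random variables, after reducing to per-sample contributions of $O(f_x)$ for each low-frequency element $x$.

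\emph{Per-sample bound.} I first claim that if $\mathcal{A}$ samples an occurrence of an element $x$ with $f_x \le L/K^{1/8}$, the resulting check on Line~\ref{alg1_check_line-14} uses $\log n$ bits for at most $f_x + 1$ windows. Indeed, $x$ occupies at most $f_x$ of the $L$ windows, so no $f_x + 1$ consecutive windows can all contain $x$; implementing the check by short-circuit evaluation---which does not change the algorithm's output---it terminates at the first window missing $x$. (This matches the per-sample bound implicit in \cref{lem_KSHH_expectation}.)

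\emph{Independence structure.} For each low-frequency $x$ and $j \in \{1,\ldots,f_x\}$, let $B_{x,j} \in \{0,1\}$ indicate that the $j$-th occurrence of $x$ in the stream is sampled by $\mathcal{A}$. Since $\mathcal{A}$ samples each occurrence independently with probability $p = (2/5)T/L = \Theta(\log m/L)$, the $B_{x,j}$ are mutually independent $\mathrm{Bernoulli}(p)$, and the total number of $\log n$-bit windows due to low-frequency samples satisfies
\[
Y \;\le\; Z \;:=\; 2\sum_{x:\,1 \le f_x \le L/K^{1/8}} f_x \sum_{j=1}^{f_x} B_{x,j},
\]
so $Z$ is a sum of independent random variables each bounded by $2L/K^{1/8}$. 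Using the hypothesis $\sum f_x^2 \le L^2/(\log m)^c$ together with $f_x \le L/K^{1/8}$,
\[
\Exp[Z] = O\!\left(p\textstyle\sum_x f_x^2\right) = O\!\left(L/(\log m)^{c-1}\right), \qquad \mathrm{Var}(Z) = O\!\left(p\textstyle\sum_x f_x^3\right) \le O\!\left(L^2/(K^{1/8}(\log m)^{c-1})\right).
\]

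\emph{Concentration.} Setting $t = L/(4(\log m)^{c/2-1})$, which dominates $\Exp[Z]$ whenever $c \ge 2$, Bernstein's inequality yields
\[
\Pr[Z > 2t] \;\le\; \exp\!\left(-\Omega\!\left(\min\!\Bigl(\tfrac{t^2}{\mathrm{Var}(Z)},\ \tfrac{tK^{1/8}}{L}\Bigr)\right)\right) \;\le\; \exp\!\left(-\Omega\!\left(\tfrac{K^{1/8}}{(\log m)^{c/2-1}}\right)\right) \;\le\; 1/\poly(m),
\]
because $K^{1/8} = (\log n)^{12.5}$ and $\log m = O(\log n)$ force the exponent to be $\Omega(\log m)$ for any constant $c$. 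The only real subtlety in the argument is the per-sample bound---Line~\ref{alg1_check_line-14} is written as a single If-condition rather than a loop with early termination---but once granted, the rest is a routine Bernstein calculation.
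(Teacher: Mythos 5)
Your decomposition (by occurrence) and choice of concentration inequality (Bernstein) differ from the paper's, and the difference matters. The paper defines, for each low-frequency element $a$, the single random variable $X_a$ counting windows attributable to $a$, observes $0\le X_a\le 2f_a$ with the $X_a$ independent in $\mathcal{A}$, and applies Hoeffding with denominator $\sum_a(2f_a)^2=4\sum f_a^2$. The key feature is that both the threshold $t^2 = L^2/(\log m)^{c-2}$ and the Hoeffding denominator $4\sum f_a^2\le 4L^2/(\log m)^c$ shrink by the same $(\log m)^{-c}$ factor under the hypothesis, so the resulting exponent $(\log m)^2/2$ is \emph{independent of $c$}. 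Your Bernstein bound instead has a worst-case per-term bound $M=2L/K^{1/8}$ which does \emph{not} shrink with $c$, so its $t/M$ side of the $\min$ is $\Theta\bigl(K^{1/8}/(\log m)^{c/2-1}\bigr)$, which strictly decreases as $c$ grows. With $K^{1/8}=(\log n)^{12.5}$ and $\log m=\Theta(\log n)$ this is $(\log n)^{13.5-c/2}$: it is $\Omega(\log m)$ only when $c\le 25$, and for $c=30$ it is $(\log n)^{-1.5}=o(1)$, making the bound vacuous. So the final step --- ``force the exponent to be $\Omega(\log m)$ for any constant $c$'' --- is false, and the proof as written only covers $c\lesssim\tfrac14\log K/\log\log m$, whereas the lemma (and the paper's proof of it) holds uniformly in $c$.

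Your per-sample observation (a check on a low-frequency $x$ terminates within $f_x+1$ windows) and your variance computation $\mathrm{Var}(Z)=O(p\sum f_x^3)$ are both correct, and the overall structure is salvageable. The cheapest fix is to drop Bernstein in favor of Hoeffding applied to the element-grouped variables $X_a\le 2f_a$ (as the paper does); the per-element ranges then shrink with the hypothesis and the unfavorable $t/M$ term disappears entirely.
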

\begin{proof}
    For an element $a$, let $f_a$ denote its frequency and let $X_a$ be a random variable that denotes the number of windows in algorithm $\mathcal{A}$ in which either $a$ is sampled or the window is used to check for the occurrence of $a$. Observe that $X_a$ is a bounded random variable, i.e., $X_a\le 2f_a$. We have $\Exp[X_a] \le p \cdot f_a^2$ since in every window that $a$ occurs, it is sampled with probability at most $p$ and then at most the following $f_a$ windows are used to check for the occurrence of $a$. Observe also that by definition of $\mathcal{A}$, the $X_a$'s are independent random variables. Let $N_\mathcal{A}$ denote the random variable for the total number of windows in which either a low-frequency element is sampled or checked by $\mathcal{A}$. We have $N_\mathcal{A}\le \sum_{a: f_a \le L/k^{1/8}} X_a$ and by linearity of expectation, $\Exp[N_\mathcal{A}]\le p \cdot \sum_{a:f_a\leq L/K^{1/8}} f_a^2 \le O(L/(\log m)^{c-1})$. By applying Hoeffding's inequality (see \cref{thm:hoeffding's}), we have
\[\Pr\left[N_\mathcal{A} \ge L/(\log m)^{c/2-1}\right] \le 2\exp\left(-2 \frac{ L^2/(\log m)^{ c-2} }{4 \sum_{a:f_a\leq L/K^{1/8}} f_a^2 }\right) = 2\exp\left(-\Theta(\log m)^{2}\right) \, .\]    
\end{proof}

In the following, we analyze the YES and the NO cases, and prove guarantees for the correctness of the algorithm.
\begin{lemma}[YES case]\label{lem:Yes_case}
Let $x^*\in V$ occur at least once in every window and let $\sum_{y\ne x^*\in V} f_y^2 \le L^2 \log^5 n/K$. In addition, suppose there is a set of at most $L/c$ \emph{blocked} windows ($c>=168$) where \cref{Algorithm 1} is not allowed to sample any new elements. Then \cref{Algorithm 1} returns YES with probability at least $1-\exp(-\Theta(T))$.
\end{lemma}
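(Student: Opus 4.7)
I aim to show that for each segment $s$, the event $E_s$ that the counter is incremented in segment $s$ satisfies $\Pr[E_s] \ge 1/4 + \Omega(1)$ and that these events are mutually independent across segments; a Chernoff bound (\cref{thm:chernoff}) then yields $\Pr[\mathsf{counter} \le T/4] \le \exp(-\Theta(T))$.

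\textbf{Setting up.} Since Algorithm 1's check windows are a subset of those of the hypothetical algorithm $\mathcal{A}$, I would first apply \cref{lem:low_freq_conc} (with $c$ taken roughly $95$) together with \cref{lem:high_freq_conc}, invoking the hypothesis $\sum_{y\neq x^*}f_y^2\le L^2\log^5 n/K$ and $K=(\log n)^{100}$, to obtain an event $\mathcal{E}$ of probability $\ge 1-1/\poly(m)$ under which the total check time across the algorithm is $M:=L/\poly(\log n)=o(L/T)$. I would then call a segment \emph{good} if it contains at most $L/(12T)$ blocked windows; since $\sum_s B_s\le L/c\le L/168$, Markov's inequality gives that at least $13T/14$ segments are good.

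\textbf{Per-segment probability.} For each $x^*$ position $i_j$ in the segment, let $L_{i_j}$ be the indicator that (i) $i_j$ lies in an unblocked window among the first $L/T-L/K^{1/8}$ windows of the segment, (ii) Algorithm 1 is in sampling mode at the window containing $i_j$, and (iii) no V-position before $i_j$ in that window had a heads coin flip. Since each $L_{i_j}$ depends only on coin flips strictly before $i_j$, while the coin at $i_j$ is an independent Bernoulli($p$), a tower-of-expectations argument yields
\[
\Pr[\text{no }x^*\text{ sampled in }s]=\Exp\bigl[(1-p)^{\sum_j L_{i_j}}\bigr].
\]
Any live $x^*$ sample (coin at $i_j$ heads while $L_{i_j}=1$) triggers a check on $x^*$ that trivially succeeds since $x^*$ appears in every subsequent window, so $\Pr[E_s]\ge 1-\Exp[(1-p)^{\sum_j L_{i_j}}]$. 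For a good segment, combining $\mathcal{E}$ with a Markov argument on per-segment non-$x^*$ mass (using $\sum_y f_y\le\sum_y f_y^2\le L^2/\poly(\log n)$) deterministically gives $\sum_j L_{i_j}\ge 5L/(6T)$ on a sub-event of probability $\ge 1-1/\poly(m)$: at most $M+L/(12T)+o(L/T)$ windows fail (ii) or (i), and at most a further $o(L/T)$ fail (iii) due to nearby non-$x^*$. Since $p\cdot 5L/(6T)=1/3$, we obtain $\Pr[E_s]\ge 1-e^{-1/3}-o(1)>0.28>1/4$.

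\textbf{Concentration and obstacle.} The coin flips in distinct segments are independent and the blocked window set is adversarially fixed, so the indicators $\{\mathbf{1}[E_s]\}_{s\text{ good}}$ are mutually independent Bernoullis. Applying \cref{thm:chernoff} to $X=\sum_{s\text{ good}}\mathbf{1}[E_s]$ with $\Exp[X]\ge 0.28\cdot 13T/14 > 0.26T > T/4$ gives $\Pr[X\le T/4]\le\exp(-\Theta(T))$, and since $\mathsf{counter}\ge X$ this finishes the proof. The most delicate step is the apparent circular dependence between ``sampling mode'' and the coin flips that determine when checks begin; this is resolved by observing that sampling mode at position $i$ is a function only of coins strictly before $i$, which validates the tower expansion and reduces the probability of missing $x^*$ to a single exponential in $\sum_j L_{i_j}$, whose lower bound is controlled by $\mathcal{E}$.
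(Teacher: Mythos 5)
Your overall strategy --- per-segment success probability $\ge 1/4+\Omega(1)$, independence of segments (because the algorithm resets at segment boundaries), then a Chernoff lower tail --- matches the paper's in spirit; the paper instead conditions on the per-segment bad-window counts $N_1,\ldots,N_T$ and applies Hoeffding, but both exploit the same segment-level independence. The one genuinely different choice you make is to invoke \cref{lem:low_freq_conc,lem:high_freq_conc} to get a global check-time bound, whereas the paper re-derives a Hoeffding bound on the bad-window count $N_{\mathcal A}$ directly inside the proof; this is a harmless change of packaging.

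There is, however, a real gap in the tower-of-expectations step. You write
\[
\Pr[\text{no }x^*\text{ sampled in }s]=\Exp\bigl[(1-p)^{\sum_j L_{i_j}}\bigr],
\]
with $L_{i_j}$ the \emph{actual} indicator of ``in sampling mode and nothing already sampled in this window.'' But this $L_{i_j}$ is itself killed by an $x^*$-coin coming up heads: if $x^*$ is sampled at some early position $i_{j_0}$, the algorithm enters check mode, so $L_{i_j}=0$ for all later $j$ in that segment, and $\sum_j L_{i_j}$ collapses to $O(1)$. Consequently (a) the displayed identity is not an equality (one can check on a two-position example that it is only a one-sided $\le$), and, more importantly, (b) your claim that $\sum_j L_{i_j}\ge 5L/(6T)$ ``on a sub-event of probability $\ge 1-1/\poly(m)$'' is false: it fails on the event that $x^*$ is sampled early, which has constant probability $\approx pZ$. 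The fix is to replace $L_{i_j}$ by the \emph{counterfactual} indicator $\tilde L_{i_j}$ defined in the run where all $x^*$-coins are forced to tails; $\tilde L_{i_j}$ is a function of the environment coins only, satisfies $\tilde L_{i_j}\ge L_{i_j}$, and the peeling then gives the correct identity $\Pr[\text{no sample}]=\Exp[(1-p)^{\sum_j \tilde L_{i_j}}]$, for which your high-probability lower bound on the exponent does hold. This decoupling is exactly what the paper accomplishes by passing to the hypothetical algorithm $\mathcal A$ (whose bad-window set is unaffected by $x^*$'s coins) before counting bad windows. With this substitution your argument goes through; without it, the step that lower-bounds $\sum_j L_{i_j}$ simply fails.
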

\begin{proof}
We say that a window is bad if an element $y\ne x^*$ is sampled in it or if it is used in Lines 7-18 to check for the occurrence of $y\ne x^*$. We first obtain a bound on the total number of bad windows in \cref{Algorithm 1}.
Observe that for any fixed outcomes of the random coin tosses, the number of bad windows in \cref{Algorithm 1} is upper bounded by the number of bad windows in $\mathcal{A}$.
For an element $a\ne x^*\in V$, let $f_a$ denote its frequency and let $X_a$ be a random variable that denotes the number of windows in algorithm $\mathcal{A}$ in which either $a$ is sampled or the window is used to check for the occurrence of $a$. Observe that $X_a$ is a bounded random variable, i.e., $X_a\le 2f_a$. We have $\Exp[X_a] \le p \cdot (f_a)^2$ since in every window that $a$ occurs, it is sampled with probability at most $p$ and then at most the following $f_a$ windows are used to check for the occurrence of $a$. Observe also that by definition of $\mathcal{A}$, the $X_a$'s are independent random variables. Let $N_\mathcal{A}$ denote the random variable for the total number of bad windows in $\mathcal{A}$. We have $N_\mathcal{A}\le \sum_{a\ne x^*\in V} X_a$ and by linearity of expectation, $\Exp[N_\mathcal{A}]\le p \cdot F \le L/(\log m)^{2}$, where the last inequality follows from the fact that $F\le L^2/K^{0.99}$.
By applying Hoeffding's inequality (see \cref{thm:hoeffding's}), we have
\[\Pr\left[N_\mathcal{A} \ge L/(\log m)\right] \le 2\exp\left(-\frac{(L/(\log m))^2}{4\sum_{a\ne x^*\in V}f_a^2}\right) =  \exp\left(-\Theta(\log m)\right)\, .\]

Let $N$ denote the random variable for the total number of bad windows and blocked windows in \cref{Algorithm 1}. Since the number of bad windows in \cref{Algorithm 1} is always less than $N_\mathcal{A}$ and the number of blocked windows is at most $L/c$, we get that 
\[\Pr\left[N \ge 2L/c\right] \le \exp\left(-\Theta(\log m)\right)\, .\]
Conditioned on $N<2L/c$, we will now upper bound the probability that at the end of \cref{Algorithm 1}, the $\mathsf{counter}$ value is less than $T/4$.
Let $Z = L/T - L/K^{1/8} \ge 6L/7T$. For segment $j$, let $I_j$ be the indicator random variable for the event that $x^*$ is sampled within the first $Z$ windows in the segment. Observe that the $\mathsf{counter}$ is incremented in Line 15 whenever $I_j=1$ since $x^*$ appears in every window. Hence, $\mathsf{counter} \ge \sum_{j\in [T]} I_j$. Let $N_j$ denote the total number of bad and blocked windows within the first $U$ windows in segment $j$. We have \[\Exp\left[I_j\bigg| N_1,\dots,N_T,N<2L/c\right]\ge 1 - (1-p)^{Z-N_j} \ge 1 - (1-p(Z-N_j)+(p(Z-N_j))^2/2) \ge 3p(Z-N_j)/4\, ,\] since $pZ\le 1/2$.
Hence, \[\Exp\left[\mathsf{counter}\bigg| N_1,\dots,N_T,N< 2L/c\right]\ge \frac{3pZT}{4} - (3p/4)N \ge \frac{3pZT}{4} - \frac{3pL}{2 c} \ge \frac{T}{4} \cdot \frac{71}{70} \, . \]
Observe that conditioned on $N_1,\dots,N_T$, the $I_j$'s are independent.
Hence, by applying Hoeffding's inequality, we have
\[\Pr\left[\mathsf{counter} < T/4 \bigg| N_1,\dots,N_T,N < 2L/c \right] \le \exp\left(-\Theta(T)\right)\, .\]  
\end{proof}

\begin{lemma}[NO case(s)]\label{lem:KSHH_no}
If there is no element with frequency $L/K^{1/8}$ or
if there is only one element with frequency at least $L/K^{1/8}$ but at most $L/2$, then \cref{Algorithm 1} returns NO with probability at least $1-\exp(-\Theta(T))$.
\end{lemma}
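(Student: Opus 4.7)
The plan is to upper bound $\Pr[\mathsf{counter}>T/4]$; since Algorithm~\ref{Algorithm 1} outputs YES exactly when $\mathsf{counter}>T/4$, an exponentially small bound on this probability gives the lemma. For the first case -- no element of frequency $\ge L/K^{1/8}$ -- the bound is immediate: any element $y$ appears in strictly fewer than $L/K^{1/8}$ distinct windows (since each window containing $y$ consumes at least one of its $f_y<L/K^{1/8}$ occurrences), so the check on Line~\ref{alg1_check_line-14}, which requires $y$ to appear in each of $L/K^{1/8}$ consecutive windows, must fail for every $y$ that ever gets sampled. The counter is stuck at $0<T/4$ and the algorithm returns NO with probability $1$.

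For the second case, let $y^*$ denote the unique element with $f_{y^*}\in[L/K^{1/8},L/2]$. The same observation rules out the check at Line~\ref{alg1_check_line-14} succeeding on any $y\ne y^*$, so every counter increment must come from sampling $y^*$ at a position whose check succeeds. Call an occurrence of $y^*$ \emph{good} if the window $w$ containing it has the property that each of $w+1,\ldots,w+L/K^{1/8}$ also contains $y^*$. The total number of good occurrences is clearly at most $f_{y^*}\le L/2$. Since the counter increments at most once per segment, and only when Algorithm~\ref{Algorithm 1} actually samples $y^*$ at a good occurrence within that segment, we have $\mathsf{counter}\le Z$, where $Z$ counts the good occurrences of $y^*$ that are sampled by Line~5 across the whole stream. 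The Bernoulli-$p$ sampling bits for different occurrences are mutually independent, and the blocking rule of Line~\ref{algo_KSHH_line-block} can only cause samples to be discarded (never created), so $Z$ is stochastically dominated by $\mathrm{Bin}(L/2,p)$ with $p=(2/5)T/L$. In particular $\E[Z]\le pL/2 = T/5$, and a standard Chernoff bound (\cref{thm:chernoff}) with $\mu\le T/5$ and $(1+\delta)\mu=T/4$ ($\delta=1/4$) yields $\Pr[Z>T/4]\le\exp(-\Theta(T))$, hence $\Pr[\mathsf{counter}>T/4]\le\exp(-\Theta(T))$.

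The only mildly subtle point is that within a single segment the algorithm may go through several sample-and-check rounds (after each failed check, Line~17 returns control to Line~5), so a naive per-segment bound becomes awkward. Bounding $\mathsf{counter}$ by the stream-wide quantity $Z$ sidesteps this cleanly: every counter increment is charged to a distinct good sample of $y^*$, and there are simply not enough good occurrences in the stream (at most $L/2$) for too many of them to be sampled. With this observation in place, the remaining ingredients -- the deterministic bound $f_y<L/K^{1/8}\Rightarrow$ check fails, the independence of the sampling bits, and the Chernoff inequality -- are entirely routine.
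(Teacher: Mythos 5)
Your proof is correct and takes essentially the same route as the paper: the first case is a deterministic impossibility argument, and the second case bounds $\mathsf{counter}$ by the number of sampled occurrences of the unique heavy element $y^*$, invokes independence of the sampling coins (equivalently, the paper's hypothetical algorithm $\mathcal{A}$ which never pauses), and applies a Chernoff bound with $\mu \le T/5$ against the threshold $T/4$. Your ``good occurrence'' refinement and explicit stochastic-dominance phrasing are cosmetic differences from the paper's presentation; incidentally, the paper writes $\Exp[X]\le T/6$ where $f\cdot p\le (L/2)\cdot(2/5)T/L=T/5$ is the correct bound (your $T/5$ is right, and the Chernoff gap still holds either way).
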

\begin{proof}
The $\mathsf{counter}$ is incremented only if the sampled element has frequency at least $L/K^{1/8}$. Hence, if there is no element with frequency at least $L/K^{1/8}$, the algorithm always returns NO.

In the second case where there is only one element $x$ with frequency at least $L/k^{1/8}$, the element must be sampled at least $T/4$ times for \cref{Algorithm 1} to return YES. We will show that with probability at least $1-\exp(-\Theta(T))$, $x$ is sampled less than $T/4$ times in \cref{Algorithm 1}. 
Consider algorithm $\mathcal{A}$.
Since whenever an element is sampled in \cref{Algorithm 1}, it is also sampled in $\mathcal{A}$, it suffices to show that with probability at least $1-\exp(-\Theta(T))$, $x$ is sampled less than $T/4$ times in $\mathcal{A}$.
Let $X_i$ be the indicator random variable for the event that $x$ is sampled by $\mathcal{A}$ at its $i$-th occurrence in the stream. We are interested in the sum $X=\sum_{i=1}^{f_x} X_i$, where $f_x$ is the frequency of $x$. Each $X_i$ is an independent Bernoulli random variable with $\Exp[X_i]=p$. By linearity of expectation, we have
\[\Exp[X] = f\cdot p \le T/6\, .\]
By applying the Chernoff bound, we have
\[\Pr[X\ge T/4]\le \exp(-\Theta(T))\, .\]    
\end{proof}

We now address the case when the total number of windows is too small, i.e., $L < K^{3/16}$. In this case, we simply hash every element in $V$ using a uniform $O(\log K)$-bit hash function and return YES if there are at least $L$ elements in the substream restricted to $V$ and all their hash values are equal, else return NO. We will now argue that \cref{lem_KSHH_expectation,lem:high_freq_conc,lem:low_freq_conc,lem:Yes_case,lem:KSHH_no} hold in this case as well. Observe that \cref{lem_KSHH_expectation,lem:high_freq_conc,lem:low_freq_conc} hold trivially because the algorithm never uses more than $O(\log K)$ bits. In the YES case corresponding to \cref{lem:Yes_case}, the condition that $\sum_{y\ne x^*\in V} f_y^2 \le L^2 \log^5 n/K$ implies that there is no element other than $x^*$ in the substream restricted to $V$. Therefore, the algorithm always outputs YES in this case. On the other hand, consider the NO case. If the substream restricted to $V$ has fewer than $L$ elements, then the algorithm always outputs NO. If the substream has at least $L$ elements, then the conditions in the NO case imply that there are at least $K^{1/8}/2$ distinct elements in the substream. The probability that they all have the same hash value is at most $1/\poly\, n$.




\section{Streaming algorithm for detecting $\ell_2$ heavy hitter}\label{sec:SHH}
In this section, we present the algorithm for finding an $\ell_2$ heavy hitter with high probability using $O(\log n)$-bits of space, assuming the \emph{heavy hitter} is evenly distributed in the stream.
Here, we assume that the length of the stream is $n$, the second moment of the frequency vector is $\approx C^2 n$, and we assume access to free randomness.

The algorithm is as follows (\cref{alg:ell2}, \cref{alg:check}).
At a high level, we sample $O(\log n/\log\log n)$ independent hash functions with $O(\log\log n)$-bit hash values.
The goal is to compute for (most of) the hash functions, what is the hash value of the heavy hitter, if exists.
To this end, we partition the stream into windows of size approximately $\sqrt{n}$ so that each window has on average on occurrence of the heavy hitter.
Then we sample a random set $S_i\subseteq U$ for each window, each element is contained in $S_i$ with probability $\approx 1/\sqrt{n}$ independently.
Note that, with constant probability, some window will sample the heavy hitter, \emph{and} no other element appearing in the window is sampled.
We hope this happens, and for the sampled elements, we remember its hash value, and run a $\checkfn$ procedure to check if this is the hash value of the heavy hitter.

For the $\checkfn$ procedure, the high level idea is to gradually increase the length of the interval.
In the $l$-th iteration, we will examine $2^l$ consecutive windows, whether there is a heavy hitter with the given hash value.
As argued in the overview, this is done by invoking the $K$-super-heavy-hitter subroutine from the previous section.


\begin{algorithm}[H]\caption{Algorithm for $\ell_2$ heavy hitters}\label{alg:ell2}
\begin{algorithmic}[1]
\Input a stream of length $n$ of elements from set $U$ of size $\poly\, n$ such that some element $x^*$ appears at least $C\sqrt{n}$ times
\State \emph{assumptions: dividing the stream into windows of length $W:=\sqrt{n}/C$, there is at least one occurrence of $x^*$ in each window; $\ell_2^2$ of the frequency vector is at most $(1+2^{-8})C^2 n$}
\Output $x^*$
\State fix parameters $q=\frac{1}{C\sqrt{n}}$, $K=\log^{1000} n$ and $J=128\log U/\log K$
\State let $h^{(1)},\ldots,h^{(J)}$ be independent random hash functions $h^{(j)}: U\rightarrow [K]$
\State for each window $i$ and each $j\in[J]$, sample a random set $S^{(j)}_i\subseteq U$ such that each $x\in U$ is in $S^{(j)}_i$ with probability $q$ independently \emph{using free random bits}
\For{each window $i$}
\State for $j\in[J]$, if there is some $x\in S^{(j)}_i$ that appears in the window, record $(j, h^{(j)}(x))$ (if there is more than one such $x$ for some $j$, record the first one)
\For {each hash $(j, v)$ recorded at the end of the window}
\If {at most $100$ instances of $\checkfn$ are still running for this $j$}\label{alg_ell2_line-8}
\State start $\checkfn(j, v)$ \emph{in parallel}
\EndIf
\EndFor
\EndFor
\State For each $j$, record the first $(j, v)$ such that $\checkfn(j, v)$ returned YES
\State If there is an $x$ such that $h^{(j)}(x)=v$ for at least $4\log U/\log K$ of them, return $x$ (if there is more than one such $x$, return any)
\end{algorithmic}
\end{algorithm}

For every $l=0,\ldots,\maxl = \log(C\sqrt{n}/8)$, we divide the stream into intervals of $2^l$ windows, i.e., for $l=0$, each segment is a single window; for $l=1$, the first segment consists of the first two windows, then the second consists of the next two windows.
Each length-$2^l$ interval is subdivided into two length-$2^{l-1}$ intervals.
We now build a binary forest on this multi-level division: For each length-$2^{l-1}$ interval, its parent is the \emph{chronologically next} interval of length-$2^l$.\footnote{Note that there is more than one node with no parent, hence, it is a forest.}
The $\checkfn$ function runs $K$-SHH on these intervals in increasing levels, the $l$-th round will be on the parent interval of the $l-1$-th round.
We call this binary forest the \emph{execution forest} of $\checkfn$ (see \cref{fig:execution_forest} below).

\begin{figure}[h!]
    \centering
    \includegraphics[width=0.5\linewidth]{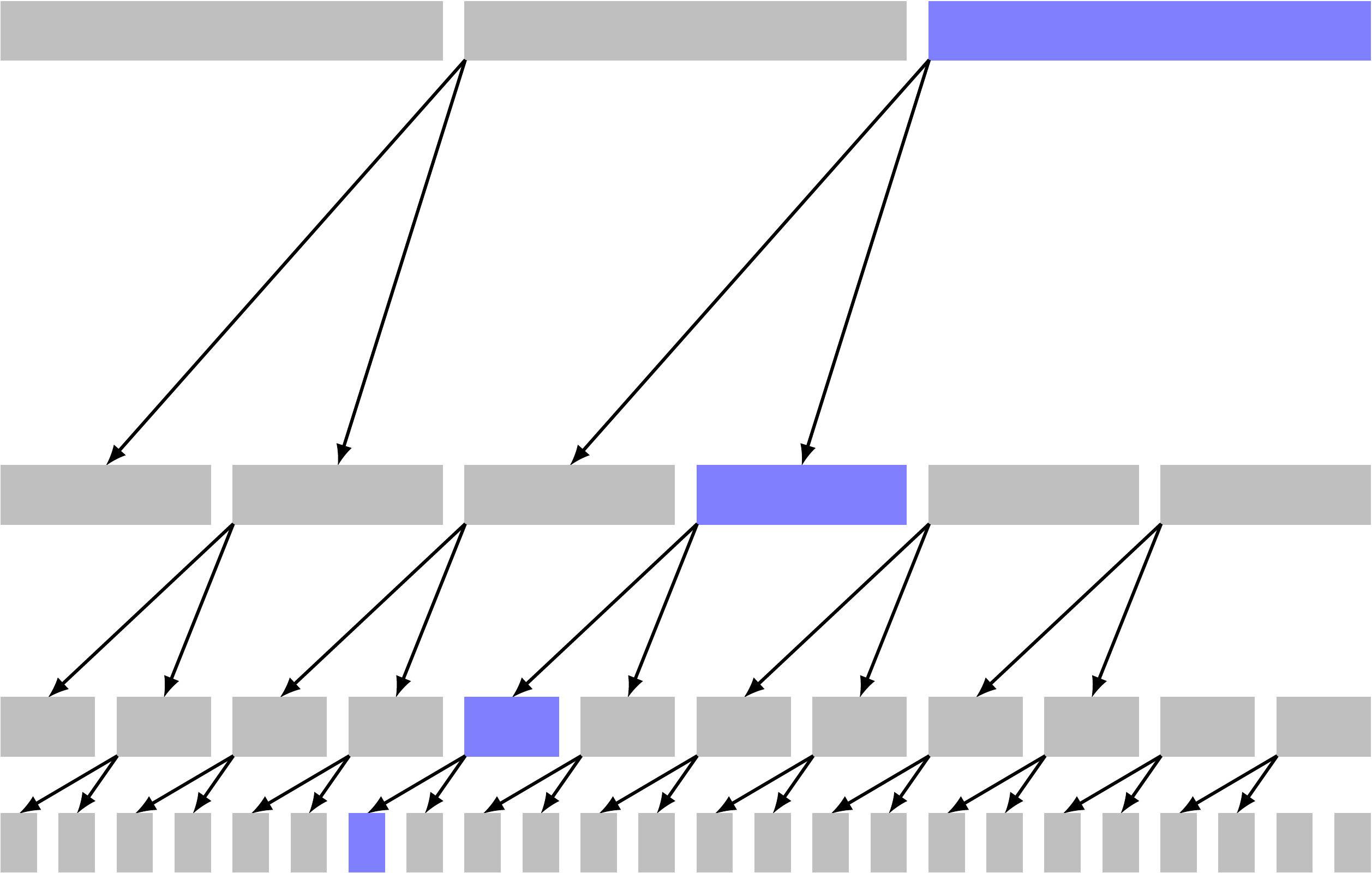}
    \caption{Execution forest of $\checkfn$. An example execution path of a $\checkfn$ function initiated at one of the leaf nodes is highlighted in blue.}
    \label{fig:execution_forest}
\end{figure}

\begin{algorithm}[H]\caption{$\checkfn(j,v)$ checks if $h^{(j)}(x^*)=v$  (when $x^*$ exists)}\label{alg:check}
\begin{algorithmic}[1]
    \State let $\maxl=\log(C\sqrt{n}/8)$
    \For {$l=0,\ldots,\maxl$}
	\State currently at the beginning of a level-$l$ node $I$ in the execution forest, let it be the beginning of the $i$-th window such that $2^l\mid i$; the current $\checkfn$ must be initiated from a window in the range $[\max\{0, i-2^{l+1}+1\}, i-2^l]$
        \State\label{line:unique_jv} if there is another instance of $\checkfn$ with the same $j$ and $v$ is about to start the same round $l$ (hence on the same $I$), then keep only one instance and terminate the others
        \State let $U_{I, j}\leftarrow \bigcup_{k\in [\max\{0, i-2^{l+1}+1\}, i-2^l]} S_k^{(j)}$ be the union of all $S_k^{(j)}$ of windows that may have initiated this $\checkfn$
	\State run $K$-SHH (\cref{Algorithm 1}) on $I$, restricted to elements in $U_{I, j}$ with hash value $v$ (i.e., set the parameter $V$ to $U_{I,j,v}$, $n$ is the length of the original stream, and we assume that the universe size is at most $\poly\,  n$, $W$ and $K$ are fixed parameters that do not vary across the different executions of $K$-SHH)
 $$U_{I, j, v}:=\{x: h^{(j)}(x)= v\}\cap U_{I, j},$$ 
 denote this instance of $K$-SHH by $A_{I,j,v}$;
 \textbf{only $K$-SHH instances $A_{I,j,v}$ with the same $j$ can use $O(\log n)$ bits at the same time (any $A_{I,j,v}$ that is using $O(\log n)$ \emph{blocks} all other $A_{I',j',v'}$ for $j'\neq j$, i.e., they do not proceed after \cref{algo_KSHH_line-block} in \cref{Algorithm 1})}\label{alg_check_line-6}
	\If {$K$-SHH returns NO}
            \State return NO
        \EndIf
        \State wait until the start of the next window $i$ such that $2^{l+1}\mid i$ 
    \EndFor
    \State return YES
\end{algorithmic}
\end{algorithm}

Like in the previous section, in order to analyze the algorithm, we need to analyze a \emph{hypothetical version} of it.
\begin{definition}
    The \emph{hypothetical version} of \cref{alg:ell2} may have any number of instances of $\checkfn$ running at the same time for each $j$ (no restriction in \cref{alg_ell2_line-8}). Also it uses the hypothetical versions of $K$-SHH, and the $K$-SHH subroutines may use $O(\log n)$ bits at the same time (no restriction in \cref{alg_check_line-6} in \cref{alg:check}).
\end{definition}

It is easy to verify that when using the same randomness, if the real algorithm has some $K$-SHH subroutine use $O(\log n)$ bits in a window, then the the same subroutine must also be using $O(\log n)$ bits in the same window in the hypothetical version.
Hence, the hypothetical version generally uses more space, but it has less dependence between different subroutines, making it easier to analyze.
In the following, most of the lemmas will be proving bounds on the hypothetical version directly, while they should also carry over to the actual algorithm.

\begin{lemma}\label{lem:bound_checkfn_time}
    In the hypothetical version of the algorithm, for each hash function $j\in[J]$, the expected number of windows with at most $100$ instances of $\checkfn$ running is at least $(3/4-o(1))\cdot n/W$. 
\end{lemma}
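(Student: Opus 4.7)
The plan is to reduce to a bound on the expected total running time $\mathbb{E}[T_j] := \sum_i \mathbb{E}[N_j(i)]$, where $N_j(i)$ is the number of concurrent $\checkfn(j,\cdot)$ instances in window $i$ of the hypothetical algorithm. By Markov's inequality applied pointwise, $\sum_i \Pr[N_j(i) > 100] \le \mathbb{E}[T_j]/101$, so it suffices to show $\mathbb{E}[T_j] \le (25+o(1))\cdot n/W$; the lemma then follows since $n/W - (1/4+o(1))n/W = (3/4-o(1))n/W$.

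To bound $\mathbb{E}[T_j]$, I would use that the deduplication step (line~\ref{line:unique_jv} of \cref{alg:check}) forces at most one instance of $\checkfn(j,v)$ per level-$l$ interval $I$, contributing exactly $2^l$ windows of work. Decomposing
$T_j = \sum_{l,I,v} 2^l \cdot \mathbf{1}[\text{instance running on }(l,I,v)]$
and splitting by $v = h^{(j)}(x^*)$ versus $v \ne h^{(j)}(x^*)$: for the correct value $v = h^{(j)}(x^*)$, because $x^*$ occurs in every window, the probability that an instance occupies $(l,I)$ is at most $\min(1, q\cdot 2^l)$, so summing $2^l\cdot \min(q\cdot 2^l,1)\cdot n/(W\cdot 2^l)$ using the identity $qW = 1/C^2$ yields a geometric series totaling $O(n/W)$.

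For the wrong values $v \ne h^{(j)}(x^*)$, the chain terminates at a level $L^*(v)$ where $K$-SHH returns NO; \cref{lem:KSHH_no} controls this by the largest frequency of any non-$x^*$ element of $V_v$ sampled into the witness set $U_{I,j,v}$. Letting $m_0^v$ denote the number of initiations of $\checkfn(j,v)$, with $\mathbb{E}[m_0^v]\le q F_v$ (where $F_v := \sum_{x\in V_v} f_x$), the contribution of $(j,v)$ is at most $\sum_{l\le L^*(v)}\min(q F_v,\, n/(W\cdot 2^l))\cdot 2^l$; I would sum this geometric quantity over $v\ne h^{(j)}(x^*)$ using the input's $\ell_2^2$-bound $\|f\|_2^2\le (1+2^{-8})C^2 n$ together with the spread of the random hash $h^{(j)}$ across its $K$ buckets.

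The main obstacle is making the wrong-hash contribution tight to within an $o(1)$ factor of $n/W$: a naive bound introduces $\operatorname{polylog}(n)$ slack from the $\log n$ levels of the execution forest and the $K^{1/8}$ slack in the $K$-SHH termination threshold. The plan to close the gap is twofold: (i) use the pairwise-independence of $h^{(j)}$ to concentrate $\sum_v F_v\cdot f_v^{\max}$ around $\|f\|_2^2$ up to an $O(n^2/K)$ collision correction, and (ii) exploit the tighter YES-characterization in \cref{lem:Yes_case} (which, combined with the $\sum_{y\ne x^*} f_y^2 \le L^2\log^5 n/K$ hypothesis, allows replacing $L^*(v) \le \log(f_v^{\max} K^{1/8})$ by $\log f_v^{\max} + O(1)$ in the dominant cases), so the leftover error shrinks like $K^{-\Omega(1)}$ and is absorbed into the $o(1)$ slack of the statement.
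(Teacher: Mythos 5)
Your Markov reduction at the top is sound, and you correctly identify that the deduplication in \cref{line:unique_jv} limits each $(j,v)$ to one instance per level-$l$ node $I$. But the core of your argument diverges from what the structure of the problem requires, and the divergence leaves real gaps.

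The paper does not split by $v=h^{(j)}(x^*)$ versus $v\neq h^{(j)}(x^*)$; it splits by the \emph{reason} the $K$-SHH subroutine fails to return NO, as given by \cref{lem:KSHH_no}: either there is one element with frequency at least $2^{l-1}$ in $I$ (``Type~1''), or there are two elements with frequency at least $2^l K^{-1/8}$ in $I$ (``Type~2''). That split is unified over all hash values $v$. Your ``correct~$v$'' bound $\min(1,q\cdot 2^l)$ only accounts for $\checkfn(j,h^{(j)}(x^*))$ surviving \emph{because} $x^*$ itself is in $U_{I,j,v}$; it ignores the possibility that $\checkfn(j,h^{(j)}(x^*))$ continues to run because some \emph{other} element $x\neq x^*$ with $h^{(j)}(x)=h^{(j)}(x^*)$ was sampled and is heavy in $I$ (Type~1) or because two medium elements both landed in $U_{I,j,v}$ (Type~2). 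So your correct-$v$ estimate is an underestimate and does not yield the claimed $O(n/W)$ contribution.

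Your wrong-$v$ treatment has the complementary problem. Treating $L^*(v)$ as a single termination level (``controlled by the largest frequency of any non-$x^*$ element sampled'') is not accurate: which level $K$-SHH returns NO varies across different nodes $I$ at the same level, and more importantly, \cref{lem:KSHH_no} allows $K$-SHH to \emph{not} return NO whenever \emph{two} elements of frequency at least $2^l K^{-1/8}$ are present in $U_{I,j,v}$, which is exactly the Type~2 situation. The paper handles this with a separate counting argument: it introduces $\beta_I$ (the number of medium elements in $I$), bounds $\Pr[\text{Type~2}]\leq(\beta_I K^{-1} 2^l q)^2$, and shows via an averaging/Markov argument over the nodes at each level that only a $K^{-1/4}\log^2 n=o(1)$ fraction of windows can host such a check. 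Your proposed fixes do not reach this: pairwise independence of $h^{(j)}$ concentrating $\sum_v F_v f_v^{\max}$ does not control the event of \emph{two} medium elements landing in the same bucket, and invoking \cref{lem:Yes_case} is a misuse --- its hypothesis $\sum_{y\neq x^*}f_y^2\leq L^2\log^5 n/K$ is a premise you do not have in general here, and the YES-case guarantee is not the relevant tool for bounding how long a spurious check survives.

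In short: the Markov/execution-forest skeleton is fine, but the load-bearing step --- bounding the probability that $\checkfn(j,v)$ reaches level $l$ on $I$ --- needs to be organized around the two termination failure modes of \cref{lem:KSHH_no} (the $\alpha_I$/Type~1 sum controlled by \cref{clm:alpha_I}, and the $\beta_I$/Type~2 Markov argument), rather than around whether $v$ is the heavy hitter's hash value.
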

\begin{proof}
    Consider any node $I$ in the \emph{execution forest} of $\checkfn$ in level $l<\maxl$, fix a hash function index $j\in[J]$ and one possible hash value $v\in[K]$.
    Now let us analyze the probability that a $\checkfn$ procedure on $(j, v)$ passes the round-$l$ test on $I$, i.e., the probability that it advances to round $l+1$ on the parent node of $I$.
    Note that there can be at most \emph{one} $\checkfn$ on any particular $(j, v)$ that runs the round-$l$ test on $I$ due to \cref{line:unique_jv}.

    To this end, we will apply Lemma~\ref{lem:KSHH_no} and bound the probability of the necessary conditions for $\checkfn(j,v)$ to pass round $l$ on $I$.
    Recall that $K$-SHH is run on the substream restricted to the set of elements $U_{I,j,v}$.
    Lemma~\ref{lem:KSHH_no} ensures that if \emph{neither} of the following happens, then $K$-SHH must return NO except with $1/\poly\, n$ probability:
    \begin{enumerate}
        \item there exists an element in $U_{I,j,v}$ with frequency at least $2^{l-1}$ in $I$;
        \item there exist at least two elements in $U_{I,j,v}$ with frequency at least $2^l\cdot K^{-1/8}$ in $I$.
    \end{enumerate}
    Hence, it suffices to consider $\checkfn(j, v)$ of these two types, i.e., we say $\checkfn(j, v)$ has \emph{Type 1} with respect to $I$ if it passes round $l$ on $I$ and it satisfies the first condition above; similarly, $\checkfn(j, v)$ has \emph{Type 2} with respect to $I$ if it passes round $l$ on $I$ and satisfies the second condition.

    Let us first consider Type 1 $\checkfn$ procedures.
    For a fixed $I$, let $\alpha_I$ be the number of elements with frequency at least $2^{l-1}$ in $I$.
    For any given element, with probability $K^{-1}$, it has hash value $v$, and with probability at most $2^l\cdot q$, it belongs to one of $S_k^{(j)}$ that could have initiated $\checkfn$ to run on $I$ in round $l$.
    Each fixed element belongs to $U_{I,j,v}$ with probability at most $K^{-1}\cdot 2^l\cdot q$, hence, the probability that $\checkfn(j, v)$ has Type 1 with respect to $I$ is at most
    \[
        \alpha_I\cdot K^{-1}\cdot 2^l\cdot q.
    \]

    If $\checkfn(j,v)$ has Type 1 with respect to $I$, then it may wait for at most $2^l$ windows and will continue to execute on the parent node of $I$ for $2^{l+1}$ windows, a total of at most $3\times 2^l$ windows.
    Now for a fixed pair $(j, v)$, we will upper bound the \emph{total running time} (including the waiting time to get to the parent node) of $K$-SHH on all the \emph{parent nodes} of nodes where $\checkfn(j, v)$ is executed and has Type 1 with respect to that node.
    By linearity of expectation, the expected number of windows is at most
    \begin{equation}\label{eqn:runtime_alpha}
        \sum_{l\geq 0}\sum_{\textrm{$I$ in level }l} (\alpha_I\cdot K^{-1}\cdot 2^l\cdot q)\cdot 3\cdot 2^{l} = 3K^{-1} q \sum_{l\geq 0}\sum_{\textrm{$I$ in level }l} \alpha_I\cdot 4^l\, .
    \end{equation}
    The following claim lets us upper bound the RHS.
    \begin{claim}\label{clm:alpha_I}
    We have 
    \[
        \sum_{l\geq 0}\sum_{\textrm{$I$ in level }l} \alpha_I\cdot 4^l \leq (8+2^{-5})C^2 n.
    \]
    \end{claim}

    \begin{proof}[Proof of \cref{clm:alpha_I}]
        Recall that $\alpha_I$ is the number of elements with frequency at least $2^{l-1}$ in $I$.
    Consider any element $x$ with frequency $f_x$, its contribution to LHS in the claim is equal to
    \[
        \sum_{l\geq 0}\sum_{\textrm{$I$ in level }l: f_{x,I}\geq 2^{l-1}}  4^l,
    \]
    where $f_{x,I}$ is the frequency of $x$ in $I$.
    Observe that for any $I$ in level $l$, $\sum_{I'\subseteq I, I'\textrm{in level }l'<l}4^{l'}\leq 4^l$, where a node $I'$ is said to be contained in a node $I$ if the windows of $I'$ are a subset of the windows of $I$.  
    We say that a node $I$ at level $l$ is maximal with respect to $x$ if $f_{x,I} \ge 2^{l-1}$ and there exists no node $I'\supset I$ at level $l'>l$ such that $f_{x,I'} \ge 2^{l'-1}$.
    Hence, the above sum is bounded by twice the sum over all \emph{maximal} nodes, and these nodes must be disjoint, i.e., not share any windows.
    Therefore, we have
    \begin{align*}
        \sum_{l\geq 0}\sum_{I: \textrm{$I$ in level }l} \alpha_I\cdot 4^l &= \sum_x \sum_{l\geq 0}\sum_{\textrm{$I$ in level }l: f_{x,I}\geq 2^{l-1}}  4^l \\
        &\le 2\sum_x \sum_{l\geq 0}\sum_{\textrm{maximal $I$ in level }l \textrm{ w.r.t. } x}  4^l \\
        & \le 8\sum_x \sum_{l\geq 0}\sum_{\textrm{maximal $I$ in level }l \textrm{ w.r.t. } x}  f_{x,I}^2 \tag{$f_{x,I}\ge 2^{l-1}$}\\ 
        & \le 8\sum_x f_x^2 \tag{We are summing over maximal $I$ and $a^2 + b^2 \le (a+b)^2$ for $a,b\ge 0$}\\
        &\le (8+2^{-5})C^2 n. 
    \end{align*}
    The claim holds.
    \end{proof}
    
    Now, it follows from \cref{clm:alpha_I} that \eqref{eqn:runtime_alpha} is at most
    \[
        25C^2 n\cdot K^{-1}q.
    \]
    Taking the sum over all $v$, the expected total running time for a fixed hash function index $j\in [J]$ is at most
    \[
        25C^2 nq.
    \]
    Since there are $n/W=C^2 nq$ windows, by Markov's inequality, at most $1/4$ fraction of the windows have more than $100$ $\checkfn$s running due to passing the previous round with Type 1. 
    

    \bigskip

    To analyze Type-2 $\checkfn$ procedures, let $\beta_I$ be the number of elements with frequency at least $2^l\cdot K^{-1/8}$ in $I$.
    The probability that $\checkfn(j, v)$ has Type 2 with respect to $I$ is at most
    \[
        \binom{\beta_I}{2}\cdot (K^{-1}\cdot 2^l\cdot q)^2\leq 
        (\beta_I\cdot K^{-1}\cdot 2^l\cdot q)^2.
    \]
    We have the following claim, since all intervals $I$ in the same level are disjoint.
    \begin{claim}
    For each fixed $l$, we have 
    \[
        \sum_{\textrm{$I$ in level }l} \beta_I\cdot 4^l\cdot K^{-1/4} \leq (1+2^{-8})C^2 n.
    \]
    \end{claim}
    Since there are $n/(2^l W)$ nodes in level $l$, for an average node $I$, we have $$\beta_I\cdot 4^l\cdot K^{-1/4}\leq \frac{(1+2^{-8}) C^2 n}{n/(2^lW)}=(1+2^{-8})C^2 2^l W.$$ 
    That is, for an average node $I$,
    \[
        \beta_I\cdot K^{-1}\cdot 2^l\cdot q\leq (1+2^{-8})C^2 2^l W\cdot K^{-3/4}\cdot 2^{-l}\cdot q=O(K^{-3/4}).
    \]
    By Markov's inequality, except for at most $O(K^{-1/4}\log n)$ fraction of the nodes, we have 
    \begin{equation}\label{eqn:type2_bound}
        \beta_I\cdot K^{-1}\cdot 2^l\cdot q\leq K^{-1/2}\log^{-1} n.
    \end{equation}
    That is, there are at most $O(K^{-1/4}\log n)$-fraction of level $l+1$ nodes such that $\checkfn(j,v)$ can execute on them due to passing level $l$ on a node violating~\eqref{eqn:type2_bound}.
    Summing over all levels $l$, at most $K^{-1/4}\log^2 n=o(1)$-fraction of the windows can have such a $\checkfn$ running.
    

For a node $I$ satisfying~\eqref{eqn:type2_bound}, the probability that $\checkfn(j,v)$ has Type 2 with respect to $I$ is at most $K^{-1}\log^{-2} n$. Taking a union bound over all $v$, the probability that a $\checkfn$ corresponding to a fixed hash function $j\in [J]$ is of Type 2 with respect to $I$ is at most $\log^{-2} n$. If $\checkfn(j,v)$ has Type 2 with respect to $I$, then it will continue to execute on the parent node of $I$ for a total of at most $3\times 2^{l}$ windows. By linearity of expectation, the expected total number of windows executing such a check is at most
      \begin{equation}
        \sum_{l\geq 0}\sum_{\textrm{$I$ in level $l$ satisfying ~\eqref{eqn:type2_bound}}} \log^{-2} n \cdot 3 \cdot {2^\ell} \le 3 \log^{-2} n \sum_{l\geq 0}\sum_{\textrm{$I$ in level $l$}} {2^\ell} \le 3 \log^{-1} n  \cdot \left(n/W\right).
    \end{equation} By Markov's inequality, at most $o(1)$ fraction of all the windows have even a single instance of such a check running.

Combining the analysis for the two types, we prove the lemma.
\end{proof}

\begin{lemma}\label{lem:block_exp}
    Let $I$ be a node in the execution forest of $\checkfn$, and $I_1,I_2$ be its children.
    Suppose $I$ is in level $l$ such that $2^lW\leq n/\log^{10}n$, and $F_{I_1},F_{I_2},F_I\leq O(2^lW\log^4 n)$. 
    For a fixed $j\in [J]$ and hash function $h^{(j)}$, the expected total number of windows $A_{I,j,v}$ blocks for all $v$, \emph{conditioned on $h^{(j)}$}, is at most $O(2^l/\log^4 n)$.
\end{lemma}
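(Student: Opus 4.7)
The plan is to apply \cref{lem_KSHH_expectation} to each instance $A_{I,j,v}$ individually, sum over $v\in[K]$, and then take the outer expectation over the random sample sets defining $U_{I,j}$.

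The first step is to observe that $A_{I,j,v}$ runs $K$-SHH on the interval $I$ (of $L=2^l$ windows, total length $m=2^l W$) restricted to the universe $U_{I,j,v}$, so $T=100\log m=O(\log n)$. To apply \cref{lem_KSHH_expectation}, it will be convenient to absorb its high-frequency term into the low-frequency term using the inequality $f\leq f^2\cdot K^{1/8}/L$ valid for $f\geq L/K^{1/8}$. This yields the single-term bound
\[
    \mathbb{E}[B_{I,j,v}\mid h^{(j)},\{S_k^{(j)}\}] \;\leq\; \frac{O(\log n)}{2^l}\cdot\sum_{x\in U_{I,j,v}} f_{x,I}^2,
\]
where $B_{I,j,v}$ denotes the number of windows in which $A_{I,j,v}$ uses $\log n$ bits and $f_{x,I}$ is the frequency of $x$ in $I$.

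The second step is to sum over $v$ and take expectation over the sample sets $S_k^{(j)}$ defining $U_{I,j}$, conditional on $h^{(j)}$. Because each $x$ belongs to at most one $U_{I,j,v}$---namely the one with $v=h^{(j)}(x)$, and only when $x\in U_{I,j}$---the inner sum telescopes to $\sum_{x\in U_{I,j}} f_{x,I}^2$. Since $U_{I,j}=\bigcup_k S_k^{(j)}$ is the union of $2^l$ sample sets indexed by windows strictly preceding $I$, these sets are \emph{independent} of the stream content inside $I$, and a union bound yields $\Pr[x\in U_{I,j}]\leq 2^l q$. Hence
\[
    \sum_v \mathbb{E}[B_{I,j,v}\mid h^{(j)}] \;\leq\; \frac{O(\log n)}{2^l}\cdot 2^l q\cdot F_I \;=\; O(q\log n)\cdot F_I.
\]

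The final step is to plug in the hypothesis $F_I=O(2^l W\log^4 n)$ together with the identity $qW=1/C^2$:
\[
    \sum_v \mathbb{E}[B_{I,j,v}\mid h^{(j)}] \;\leq\; O\!\left(\frac{2^l\log^5 n}{C^2}\right),
\]
which is $O(2^l/\log^4 n)$ in the parameter regime $C^2\geq\log^9 n$ relevant to this lemma.

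The hard part will be correctly separating the two layers of randomness in the second step: the internal coins of the $K$-SHH subroutine inside each $A_{I,j,v}$ (controlled by the per-$v$ application of \cref{lem_KSHH_expectation}) versus the outer sample sets $\{S_k^{(j)}\}$ (controlled by independence and a union bound). The assumed bounds on $F_{I_1},F_{I_2}$ on the children of $I$ are not actually used in this argument, but they reflect the uniform second-moment bound inductively carried across levels of the execution forest and presumably will be needed for the analyses immediately following this lemma.
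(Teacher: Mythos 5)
Your proposal has a fundamental gap: it bounds the expected number of windows that $A_{I,j,v}$ would block \emph{if it were always running}, but never accounts for the probability that $A_{I,j,v}$ is actually initiated. Recall that $A_{I,j,v}$ is the $K$-SHH instance at level $l$ on $I$, and it executes only when $\checkfn(j,v)$ has passed the round-$(l-1)$ check on $I_1$ or $I_2$. By \cref{lem:KSHH_no}, that happens only when $U_{I,j,v}$ contains a very heavy element in $I_1$ (or $I_2$) or two moderately heavy ones, and each such event requires a specific element to land in $U_{I_1,j}$, which happens with probability only $\le 2^{l-1}q$. It is this extra factor of $\Theta(2^l q)$, multiplied by the (bounded) count of heavy/medium elements $|\cH_{I_1}|, |\cM_{I_1}|$, that makes the target bound attainable.

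Concretely, your bound $O(q\log n)\cdot F_I = O(2^l(qW)\log^5 n) = O(2^l\log^5 n/C^2)$ is not $O(2^l/\log^4 n)$, because $C$ is a fixed constant throughout the paper ($W=\sqrt n/C$ in \cref{sec:SHH} with $C=O(1)$; $W=\sqrt{n/\epsilon}/C$ in \cref{sec:epsilon-heavy}). Your final step requires $C^2\ge\log^9 n$, which is false — a signal that the argument is off by a $\text{polylog}(n)$ factor. In the regime $2^lW\le n/\log^{10}n$ we have $2^lq\le 1/\log^{10}n$, so the missing initiation-probability factor is exactly what would close the gap. The paper's proof writes the target expectation as $\sum_v\E[\#\text{blocked by }A_{I,j,v}\cdot\mathbf{1}(\checkfn(j,v)\text{ passes on }I_1\text{ or }I_2)]$, union-bounds over the Type-1/Type-2 trigger elements in $\cH_{I_1,v}$ and $\cM_{I_1,v}$, and factors each term as (probability the trigger is sampled) $\times$ (expected blocked windows conditioned on that sampling, via \cref{lem_KSHH_expectation}). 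Incidentally, you remark that the hypotheses $F_{I_1},F_{I_2}\le O(2^lW\log^4 n)$ are unused in your argument; in the paper's proof they are essential, since they bound $|\cH_{I_1}|$ and $|\cM_{I_1}|$, and hence the probability of initiation — which confirms that the route you took cannot produce the stated bound.
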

\begin{proof}

    Recall that $\checkfn(j,v)$ passes round $l-1$ on $I_1$ if either
    \begin{itemize}
        \item Type 1: $\exists x^*\in U_{I,j,v}$ such that $f_{x^*, I_1}\geq 2^{l-2}$, or
        \item Type 2: $\exists x^*_1,x^*_2\in U_{I,j,v}$ such that $f_{x^*_1,I_1},f_{x^*_2,I_1}\geq 2^{l-1}\cdot K^{-1/8}$.
    \end{itemize}
    We will analyze the expected number of blocked windows conditioned on each singleton event, then apply the law of total probability.
    
    To this end, we apply Lemma~\ref{lem_KSHH_expectation}, which asserts that the expected number of blocked windows, conditioned on $U_{I,j,v}$, is at most\footnote{Note that Lemma~\ref{lem_KSHH_expectation} is stated in a slightly different form, which is equivalent to the expression here.}
    \[
        \sum_{x\in U_{I,j,v}} \min\left\{f_{x,I}^2\cdot O(2^{-l}\cdot \log n), f_{x,I}\cdot O(K^{-1/8}\cdot \log n)\right\}.
    \]

    Let $\cH_{I_1}$ be the set of heavy elements in $I_1$, $\{x:f_{x,I_1}\geq 2^{l-2}\}$.
    Let $\cH_{I_1,v}\subseteq \cH_{I_1}$ be the set of heavy elements in $I_1$ with hash value $v$, $\{x:h^{(j)}(x)=v,f_{x,I_1}\geq 2^{l-2}\}$.
    Similarly, let $\cM_{I_1}$ be the set of medium heavy elements in $I_1$, $\{x:f_{x,I_1}\geq 2^{l-1}\cdot K^{-1/8}\}$.
    Let $\cM_{I_1,v}\subseteq \cM_{I_1}$ be the set of medium heavy elements in $I_1$ with hash value $v$, $\{x:h^{(j)}=v,f_{x,I_1}\geq 2^{l-1}\cdot K^{-1/8}\}$.
    Note that these sets are fixed given the stream and $h^{(j)}$.
    By Chernoff bound (Theorem~\ref{thm:chernoff}), we have $|\cH_{I_1,v}|\leq O(|\cH_{I_1}|/K+\log n)$ and $|\cM_{I_1,v}|\leq O(|\cM_{I_1}|/K+\log n)$ with probability $1-1/\poly\, n$.
    We assume below that this happens, since it only adds at most $o(1)$ to the expectation otherwise.
    
    For Type 1, fix $x^*\in \cH_{I_1,v}$.
    Since $U_{I,j,v}\supset U_{I_1,j,v}$ and $\Pr\left[x\in U_{I,j,v}\mid h^{(j)}\right]\leq 2^{l}\cdot q$ for any element $x\in U$ such that $h_j(x) =v$, taking expectation over $U_{I,j,v}$, we have 
    \begin{align*}
        &\quad \E_{U_{I,j,v}}\left[\sum_{x\in U_{I,j,v}} \min\left\{f_{x,I}^2\cdot 2^{-l}\cdot \log n, f_{x,I}\cdot K^{-1/8}\cdot \log n\right\}\mid x^*\in U_{I_1,j,v},h^{(j)}\right] \\
        &\leq \E_{U_{I,j,v}}\left[\sum_{x\in U_{I,j,v},x\neq x^*} f_{x,I}^2\,\middle|\, h^{(j)}\right]\cdot 2^{-l}\cdot \log n+f_{x^*,I}\cdot K^{-1/8}\cdot \log n \\
        &\leq 2^l\cdot q\cdot \sum_{x:h^{(j)}(x)=v} f_{x,I}^2\cdot 2^{-l}\cdot \log n+f_{x^*,I}\cdot K^{-1/8}\cdot \log n \\
        &=\sum_{x:h^{(j)}(x)=v} f_{x,I}^2\cdot q\log n+f_{x^*,I}\cdot K^{-1/8}\cdot \log n.
    \end{align*}

    The calculation is similar for Type 2.
    Fix $x^*_1,x^*_2\in\cM_{I_1,v}$, we have that
    \begin{align*}
        &\quad \E_{U_{I,j,v}}\left[\sum_{x\in U_{I,j,v}} \min\left\{f_{x,I}^2\cdot 2^{-l}\cdot \log n, f_{x,I}\cdot K^{-1/8}\cdot \log n\right\}\mid x^*_1,x^*_2\in U_{I_1,j,v}, h^{(j)}\right] \\
        &\leq \sum_{x:h^{(j)}(x)=v} f_{x,I}^2\cdot q\log n+(f_{x^*_1,I}+f_{x^*_2,I})\cdot K^{-1/8}\cdot \log n.
    \end{align*}

    Summing up over the two types and over $v$, we have that
    \begin{align}
        \nonumber&\quad\sum_v\E[\#\textrm{ blocked windows by }A_{I,j,v} \cdot \mathbf{1}(\textrm{$\checkfn(j,v)$ passes on $I_1$})] \\ \nonumber
        &\leq \sum_v\sum_{x^*\in \cH_{I_1,v}}\E[\#\textrm{ blocked windows by }A_{I,j,v} \cdot \mathbf{1}(x^*\in U_{I_1,j,v})] \\ \nonumber
        &\qquad\qquad +\sum_v\sum_{x^*_1,x^*_2\in\cM_{I_1,v}}\E[\#\textrm{ blocked windows by }A_{I,j,v} \cdot \mathbf{1}(x^*_1,x^*_2\in U_{I_1,j,v})] \\ 
        &\leq \sum_v|\cH_{I_1,v}|\cdot 2^{l-1}q\cdot O\left( \sum_{x:h^{(j)}(x)=v} f_{x,I}^2\cdot q\log n\right) \label{eqn:block_I1_1}\\
        &\qquad +\sum_v(|\cM_{I_1,v}|\cdot 2^{l-1}q)^2\cdot O\left( \sum_{x:h^{(j)}(x)=v} f_{x,I}^2\cdot q\log n\right) \label{eqn:block_I1_2}\\ 
        &\qquad +\sum_v\sum_{x^*\in \cH_{I_1,v}} 2^{l-1}q\cdot O(f_{x^*,I}\cdot K^{-1/8}\cdot \log n)\label{eqn:block_I1_3}\\
        &\qquad +\sum_v\sum_{x^*\in\cM_{I_1,v}} |\cM_{I_1,v}|(2^{l-1}q)^2\cdot O(f_{x^*,I}\cdot K^{-1/8}\cdot \log n).\label{eqn:block_I1_4}
    \end{align}
    Note that 
    \[
        |\cH_{I_1}|\leq 2^{-2(l-2)}\cdot \sum_{x} f_{x,I_1}^2\leq O(2^{-2l}\cdot F_{I_1})\leq O(2^{-l}W\log^4 n),
    \]
    and
    \[
        |\cM_{I_1}|\leq K^{1/4}\cdot 2^{-2(l-1)}\cdot \sum_{x} f_{x,I_1}^2\leq O(K^{1/4}\cdot 2^{-2l}\cdot F_{I_1})\leq O(K^{1/4}2^{-l}W\log^4 n).
    \]
    Thus, we have
    \[
        |\cH_{I_1,v}|\leq O\left(|\cH_{I_1}|\cdot K^{-1}+\log n\right)\leq O\left(K^{-1}2^{-l}W\log^4 n+\log n\right),
    \]
    and
    \[
        |\cM_{I_1,v}|\leq O\left(|\cM_{I_1}|\cdot K^{-1}+\log n\right)\leq O\left(K^{-3/4}2^{-l}W\log^4 n+\log n\right).
    \]
    By Cauchy-Schwarz, we have
    \begin{align*}
        \sum_v\sum_{x^*\in \cH_{I_1,v}} f_{x^*,I}&\leq \sqrt{|\cH_{I_1}|\cdot \sum_{x^*\in \cH_{I_1}} f_{x^*,I}^2} \\
        &\leq O\left(\sqrt{2^{-l}W\log^4 n\cdot 2^lW\log^4 n}\right) \\
        &\leq O(W\log^4 n),
    \end{align*}
    and
    \begin{align*}
        \sum_v\sum_{x^*\in \cM_{I_1,v}} f_{x^*,I}&\leq \sqrt{|\cM_{I_1}|\cdot \sum_{x^*\in \cM_{I_1}} f_{x^*,I}^2} \\
        &\leq O\left(\sqrt{K^{1/4}2^{-l}W\log^4 n\cdot 2^lW\log^4 n}\right) \\
        &\leq O\left(K^{1/8}W\log^4 n\right).
    \end{align*}

    Thus, we have
    \begin{align*}
        \eqref{eqn:block_I1_1}&\leq \sum_v O\left(|\cH_{I_1,v}|2^lq^2\log n\cdot F_{I,v}\right) \\
        &\leq \sum_v O\left((K^{-1}2^{-l}W\log^4 n+\log n)2^lq^2\log n\cdot F_{I,v}\right) \\
        &=O\left((K^{-1}q\log^5 n+2^lq^2\log^2 n)F_I\right) \\
        &\leq O\left((K^{-1}q\log^5 n+2^lq^2\log^2 n)2^lW\log^2 n\right) \\
        &\leq O(K^{-1}2^l\log^7 n+2^{2l}q\log^4 n) \\
        &\leq O(2^l/\log^4 n), \\
        \eqref{eqn:block_I1_2}&\leq \sum_v O\left(\left((K^{-3/4}2^{-l}W\log^4 n+\log n)2^lq\right)^2\cdot q\log n\cdot F_{I, v}\right) \\
        &\leq O\left((K^{-3/2}q\log^9 n+2^{2l}q^3\log^3 n)\cdot F_I\right) \\
        &\leq O\left((K^{-3/2}q\log^9 n+2^{2l}q^3\log^3 n)\cdot 2^lW\log^4 n\right) \\
        &\leq O\left(K^{-3/2}2^l\log^{13} n+2^{3l}q^2\log^7 n\right) \\
        &\leq O(2^l/\log^4 n),\\
        \eqref{eqn:block_I1_3}&\leq O\left(K^{-1/8}2^lqW\log^5 n\right) \\
        &\leq O\left(2^l/\log^4 n\right), \\
        \eqref{eqn:block_I1_4}&\leq \sum_v\sum_{x^*\in\cM_{I_1,v}}O\left((K^{-3/4}2^{-l}W\log^4 n+\log n)2^{2l}q^2\cdot f_{x^*,I}\cdot K^{-1/8}\log n\right) \\
        &\leq O\left((K^{-3/4}2^{-l}W\log^4 n+\log n)2^{2l}q^2\cdot K^{1/8}W\log^4 n\cdot K^{-1/8}\log n\right) \\
        &\leq O\left(K^{-3/4}2^{l}\log^9 n+2^{2l}q\log^6 n\right) \\
        &\leq O(2^l/\log^4 n).
    \end{align*}
    The four bounds combined imply that
    \[
        \sum_v\E[\#\textrm{ blocked windows by }A_{I,j,v} \cdot \mathbf{1}(\textrm{$\checkfn(j,v)$ passes on $I_1$})] \leq O(2^l/\log^4 n).
    \]
    A similar bound can be obtained for $I_2$.
    Since running $A_{I,j,v}$ requires $\checkfn(j,v)$ to pass either on $I_1$ or $I_2$, we prove the lemma.
\end{proof}

\begin{lemma}\label{lem:block_window_concentrate}
    In the hypothetical version, for each $j\in[J]$, all $K$-super heavy hitter algorithms initiated by hash function $j$ block at most $O(1/\log^4 n)$-fraction of the windows in total, with probability $1-1/\poly\, n$.
\end{lemma}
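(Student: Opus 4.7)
The plan is to fix $j \in [J]$ and bound the total number of windows blocked by $X_j := \sum_I \sum_v N_{I,j,v}$, where $N_{I,j,v}$ denotes the number of windows during which $A_{I,j,v}$ uses $\Omega(\log n)$ bits, and the outer sum ranges over all nodes $I$ in the execution forest of $\checkfn$. I would partition the nodes into three types: \emph{top} nodes at levels with $2^l W > n/\log^{10} n$, which together cover only $O(\log^{11} n)$ windows and are therefore negligible; \emph{heavy} nodes violating the moment condition $\max\{F_{I_1}, F_{I_2}, F_I\} \le O(2^l W \log^4 n)$ of Lemma~\ref{lem:block_exp}, which by $\sum_{I\text{ at level }l} F_I \le F_2 \le (1+2^{-8})C^2 n$ and Markov's inequality form only an $O(1/\log^4 n)$-fraction of level-$l$ nodes; and the remaining \emph{good} nodes, to which Lemma~\ref{lem:block_exp} applies directly.

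For each good node, Lemma~\ref{lem:block_exp} gives $\E\bigl[\sum_v N_{I,j,v} \mid h^{(j)}\bigr] = O(2^l/\log^4 n)$, and summing over at most $n/(2^l W)$ nodes per level contributes $O(n/(W\log^4 n))$ expected blocked windows. For a heavy node I would use the crude bound $\sum_v N_{I,j,v} \le O(2^l)$ (since $A_{I,j,v}$ runs for at most $O(2^l)$ windows per fixed $v$, and distinct $v$'s active in the same round can be charged to disjoint windows); combined with the $O(1/\log^4 n)$-fraction of heavy nodes, this also contributes $O(n/(W\log^4 n))$ per level. Summing over the $O(\log n)$ levels yields an expectation bound matching the desired fraction up to an absorbed polylog constant.

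To boost expectation to probability $1 - 1/\poly(n)$, the key observation is that conditional on $h^{(j)}$, the per-node totals $Z_I := \sum_v N_{I,j,v}$ for nodes at the same level depend only on the disjoint sample sets $\{S_k^{(j)}\}_{k \in I}$ and the internal randomness of the $K$-SHH subroutine, and so are independent across $I$ at that level. Each $Z_I$ is bounded by $O(2^l)$ with conditional expectation $O(2^l/\log^4 n)$, so Bernstein's inequality applied at each level concentrates $\sum_I Z_I$ within a $\poly\log n$ factor of its expectation with probability $1 - 1/\poly(n)$; a union bound over the $O(\log n)$ levels delivers the claimed tail bound on $X_j$.

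The main obstacle will be setting up Bernstein's inequality tightly: the variables $Z_I$ are heavily skewed (most contribute nothing, while a few can be as large as $\Theta(2^l)$), so the second-moment estimate must be extracted from the first-moment analysis of Lemma~\ref{lem:block_exp} by retracing its calculations with squares rather than reusing the mean bound directly. A secondary subtlety is that Lemma~\ref{lem:block_exp} is conditioned on $h^{(j)}$ together with the high-probability events $|\cH_{I_1,v}| \le O(|\cH_{I_1}|/K + \log n)$ and $|\cM_{I_1,v}| \le O(|\cM_{I_1}|/K + \log n)$; a union bound over the at most $\poly(n)$ pairs $(I, v)$ ensures none of these concentration events fails within the $1/\poly(n)$ slack.
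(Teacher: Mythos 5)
Your proof has two significant gaps. The first concerns the top levels with $2^l W > n/\log^{10}n$, which you dismiss as covering ``only $O(\log^{11}n)$ windows.'' This is false: at each such level there are at most $\log^{10}n$ nodes, but each covers $2^l \geq n/(W\log^{10}n)$ windows, so the nodes at a \emph{single} top level together span all $n/W$ windows, and a single $A_{I,j,v}$ there can block a large fraction of them. The paper handles this regime with a separate argument (\cref{clm:blk_window_I}), applying \cref{lem:high_freq_conc,lem:low_freq_conc} directly to each instance to get a $2^l/K^{1/17}$ bound on blocked windows per $A_{I,j,v}$, combined with a combinatorial argument that only $O(\log^{20}n)$ distinct $v$'s can initiate a check on a fixed $I$ in this regime.

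The second gap is the crude bound $\sum_v N_{I,j,v}\leq O(2^l)$ you invoke for heavy nodes. The justification---``distinct $v$'s active in the same round can be charged to disjoint windows''---does not hold: in the hypothetical version multiple $A_{I,j,v}$ can use $\Omega(\log n)$ bits in the same window simultaneously, so the sum over $v$ double-counts. In the worst case $\min\{2^l, K\}$ distinct $v$'s can initiate a check on $I$ (one per window of the initiating range), each running up to $2^l$ windows, so $\sum_v N_{I,j,v}$ can reach $\Omega(4^l)$, which destroys your heavy-node estimate for large $l$. The paper sidesteps this by taking the per-node quantity $X_{I,j}$ to be the number of \emph{distinct} windows blocked, trivially at most $|I|=2^l$, while the expectation bound from \cref{lem:block_exp} still applies to $X_{I,j}$ by subadditivity. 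Finally, as a minor simplification, the paper uses Hoeffding's inequality rather than Bernstein's, so the second-moment concern you flag is unnecessary: the range bound $X_{I,j}\leq 2^l$ together with the level cutoff $2^lW < n/\log^{10}n$ already yields concentration within $1/\poly\, n$.
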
 
\begin{proof}
We first consider $l$ such that $2^lW<n/\log^{10} n$, and bound the total number of blocked windows by all $A_{I,j,v}$ for $I$ of size $2^l$.
    Fix such a node $I$ with children nodes $I_1,I_2$.
    If $F_{I_1},F_{I_2},F_I\leq O(2^lW\log^4 n)$, then by Lemma~\ref{lem:block_exp}, the expected total number of blocked windows by $A_{I,j,v}$ for all $v$ is $O(2^l/\log^4 n)$ \emph{conditioned on} $h^{(j)}$.
    Denote this random variable by $X_{I,j}$.
    Note that conditioned on $h^{(j)}$, $X_{I,j}$ only depends on the random sets $S_i^{(j)}$ (as well as the randomness used by $A_{I,j,v}$), and for different $I$ of size $2^l$, they depend on different sets $S_i^{(j)}$. 
    Hence, $X_{I,j}$ are independent for all $I$ of the same size $2^l$ \emph{conditioned on} $h^{(j)}$.

    Since there can be at most $O(n/2^lW\log^4 n)$ $I$'s with at least one of $F_{I_1},F_{I_2},F_I$ more than $\Omega(2^lW\log^4 n)$, and $X_{I,j}\leq |I|$, we have
    \[
        \E\left[\sum_{I:|I|=2^l} X_{I,j}\mid h^{(j)}\right]\leq O(n/2^lW\log^4 n)\cdot 2^l+n/2^lW\cdot O(2^l/\log^4 n)=O(n/(W\log^4 n)).
    \]
    By Theorem~\ref{thm:hoeffding's}, and the fact that $X_{I,j}\leq 2^l$,
    we have $\sum_{I:|I|=2^l} X_{I,j}\leq O(n/(W\log^4 n))$ except with probability $1/\poly\, n$. 

    Now for $l$ such that $2^lW\geq n/\log^{10}n$, we will apply Lemma~\ref{lem:low_freq_conc} and Lemma~\ref{lem:high_freq_conc} to bound the number of blocked windows.
    Fix such a node $I$ of length $2^l$.
Let $I_1$ and $I_2$ be its children. We claim the following.
\begin{claim}\label{clm:blk_window_I}
    For each $(j,v)$, the number of windows blocked by $A_{I,j,v}$ is at most $2^l/K^{1/17}$ with probability at least $1-1/\poly\, n$. 
\end{claim}
Before we prove this claim, let us see why this claim suffices to prove the desired bound. Consider the definitions of $H_{I_1},H_{I_1,v}, M_{I_1}, M_{I_1,v}$ from the proof of \cref{lem:block_exp}. We execute $A_{I,j,v}$ only if $\checkfn(j,v)$ passes round $l-1$ on $I_1$ or on $I_2$. Recall that $\checkfn(j,v)$ passes round $l-1$ on $I_1$ only if either
    \begin{itemize}
        \item Type 1: $\exists x^*: h^{(j)}(x^*) = v$ and $f_{x^*, I_1}\geq 2^{l-2}$, or
        \item Type 2: $\exists x^*_1,x^*_2: h^{(j)}(x^*_1) = h^{(j)}(x^*_2) = v$ and $f_{x^*_1,I_1},f_{x^*_2,I_1}\geq 2^{l-1}\cdot K^{-1/8}$.
    \end{itemize}

We now bound the total number of check functions that are running on $I$ for $\checkfn(j,v)$ that passes round $l-1$ on $I_1$. Recall that for fixed $j$ and $v$, we run only a single instance of $\checkfn(j,v)$ on $I$. Therefore, the total number of check functions (for each $j$) is at most
\[ \sum_v \mathbbm{1}[\exists x^*: h^{(j)}(x^*) = v \text{ and } f_{x^*, I_1}\geq 2^{l-2}] + \mathbbm{1}[\exists x_1^*, x_2^*: h^{(j)}(x_1^*) = h^{(j)}(x_2^*) = v \text{ and } f_{x_1^*, I_1}, f_{x_2^*, I_1}\geq 2^{l-1} \cdot K^{-1/8}] \]
Observe that $ \sum_v \mathbbm{1}[\exists x^*: h^{(j)}(x^*) = v \text{ and } f_{x^*, I_1}\geq 2^{l-2}] = |H_{I_1}|\le O(\log^{20} n)$. As for the second term, we want to bound the number of hash values $v$ such that at $|M_{I_1,v}|\ge 2$. Since $2^lW<n/\log^{10} n$, we have $|M_{I_1}|\le O(K^{1/4}\log^{20} n)$.
For any $t\ge 1$, the probability that there are $t$ hash values with $|M_{I_1,v}|\ge 2$ is at most
\[\binom{|M_{I_1}|}{2t} \cdot 2^t\cdot t! \cdot \left(\tfrac{1}{K}\right)^t \le K^{-t/3} \, .\] Since this probability is exponentially small in $t$, the number of hash values $v$ such that at $|M_{I_1,v}|\ge 2$ is at most $O(\log n)$ with probability at least $1-1/\poly\, n$. The analysis is similar for $I_2$.

Combining this with Claim~\ref{clm:blk_window_I} and summing and taking a union bound over all the nodes with $2^lW\geq n/\log^{10}n$, we conclude that at most $O(1/\log n)$-fraction of the windows are blocked in total, with probability at least $1-1/\poly\, n$. We will now prove the claim by applying \cref{lem:high_freq_conc} and \cref{lem:low_freq_conc}. Consider an instance of $\checkfn(j,v)$ that is initiated on $I$. In order to apply \cref{lem:high_freq_conc}, we need to bound the total number of occurrences of elements with hash value $v$ and frequency at least $2^l/K^{1/8}$ in $I$. The total number of such elements is at most $O(K^{1/4} \log^{20} n)$. The hash function $h^{(j)}$ assigns each of these elements to hash value $v$ independently at random with probability $1/K$. Applying the additive form of Chernoff bound, we conclude that the total number of elements assigned to hash value $v$ is at most $O(\log n)$ with probability at least $1-1/\poly\, n$. 
Since each element has frequency at most $O(\sqrt{n})$, we conclude that the total number of occurrences of elements with hash value $v$ and frequency at least $2^l/K^{1/8}$ in $I$, is at most $O(\sqrt{n} \log n) \le O\left(2^l \left(\tfrac{W}{\sqrt{n}}\right) \log^{11} n\right)$, with probability at least $1-1/\poly\, n$. By applying \cref{lem:high_freq_conc}, we conclude that the number of windows blocked due to elements of frequency at least $2^l/K^{1/8}$ is at most $2^l/K^{1/17}$ with probability at least $1-1/\poly\, n$.

To apply \cref{lem:low_freq_conc}, we need to bound the second moment of the frequencies of elements which occur less than $2^l/K^{1/8}$ times in $I$. For an element $a$ of frequency at most $2^l/K^{1/8}$, let $X_a$ be the indicator random variable for the event that $h^{(j)}(a) = v$ and let $Y_a = f_{a,I}^2 X_a$. We want to bound $Y=\sum_{a:f_{a,I}< 2^l/K^{1/8}} Y_a$. Note that 
\[E[Y] = (1/K)\sum_{a:f_{a,I}< 2^l/K^{1/8}} f_{a,I}^2 \le n/K\, .\]
Since $0\le Y_a \le f_a^2$, we can apply Hoeffding's inequality (\cref{thm:hoeffding's}) to conclude that with probability at least $1-1/\poly(n)$, $Y\le n/K^{1/8}$. Now, by applying \cref{lem:low_freq_conc}, we conclude that the number of windows blocked due to elements of frequency less than $2^l/K^{1/8}$ is at most $2^l/K^{1/17}$ with probability at least $1-1/\poly\, n$. Thus, we have proved our claim.
\end{proof}


The main lemma that leads to the final claim on the success probability of the algorithm is the following.
We focus on one $j$, and fix the random bits used by all $j'\neq j$, i.e., hash function $h^{(j')}$, the sets $S_i^{(j')}$, and the random bits used by $K$-SHH instances that are initiated by some instance of $\check(j', v)$.
Then we will analyze regardless of fixing of these random bits, we can always find the correct hash value $v_j=h_j(x^*)$ with some constant probability (only over the randomness used by $j$).

\begin{lemma}\label{lem:prob_hash_value}
    For each $j$, conditioned on all the random bits used by $j'\neq j$, the probability that we find the correct $v_j=h_j(x^*)$ is at least $1/25$, and we find an incorrect hash value with probability $O(K^{-1/2})$. 
\end{lemma}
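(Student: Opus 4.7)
The plan is to prove the two halves separately: the $1/25$ lower bound on recovering the correct $v_j = h^{(j)}(x^*)$, and the $O(K^{-1/2})$ upper bound on ever producing an incorrect hash value. Both parts rely on chaining the YES/NO case analyses (\cref{lem:Yes_case}, \cref{lem:KSHH_no}) across the $\maxl$ levels of $\checkfn$, with the blocking bounds from \cref{lem:block_window_concentrate} supplying the needed "few blocked windows" hypothesis.

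For the positive direction, I would first show that over the randomness in $\{S_i^{(j)}\}$ a single window \emph{isolates} $x^*$ (i.e., $x^* \in S_i^{(j)}$ and no other element of window $i$ lies in $S_i^{(j)}$) with probability $q(1-q)^{W-1} = \Theta(1/(C\sqrt{n}))$; summing over $n/W = C\sqrt{n}$ windows gives $\Theta(1)$ expected isolating windows, and a second-moment argument yields a constant probability that at least one exists. Applying Markov to \cref{lem:bound_checkfn_time} shows all but an $O(1)$ fraction of windows have fewer than $100$ running $\checkfn(j,\cdot)$ instances, and \cref{lem:block_window_concentrate} applied to each fixed $j' \neq j$ shows only an $O(J/\log^4 n) = o(1)$ fraction of windows are blocked by other hash functions. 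So a constant fraction of windows are good (below the instance cap and unblocked), and with constant probability at least one isolating window is good and therefore triggers $\checkfn(j, v_j)$. Along this execution, for each level $l = 0, \ldots, \maxl$ the $K$-SHH on the level-$l$ interval $I$ restricted to $U_{I,j,v_j}$ satisfies the hypotheses of \cref{lem:Yes_case}: $x^* \in U_{I,j,v_j}$ appears in every window of $I$ since it is evenly distributed; the sum of squared frequencies of the other elements in $U_{I,j,v_j}$ restricted to $I$ is controlled by a Markov-style bound on $\sum_y f_y^2 \leq (1+2^{-8}) C^2 n$ combined with the hash/sampling dilution factor $K^{-1} \cdot 2^l q$; and the blocked windows within $I$ number at most $2^l/168$ by \cref{lem:block_window_concentrate}. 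Union-bounding \cref{lem:Yes_case} over the $O(\log n)$ levels gives $\checkfn(j, v_j) \to \mathsf{YES}$ with probability $1 - o(1)$ conditioned on reaching a good isolating window, comfortably exceeding $1/25$.

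For the negative direction, I fix $v \neq v_j$ and bound the probability that some $\checkfn(j, v)$ passes all $\maxl$ levels. By \cref{lem:KSHH_no}, at each level $l$ the round on interval $I$ returns NO except with $1/\poly\,n$ probability unless either (Type 1) some element in $U_{I,j,v}$ has $f_{x,I} \geq 2^{l-1}$, or (Type 2) at least two elements in $U_{I,j,v}$ have $f_{x,I} \geq 2^{l-1} K^{-1/8}$. Because $h^{(j)}(x^*) = v_j \neq v$, $x^*$ itself never contributes, and we bound the Type 1 probability by $|\cH_I| \cdot K^{-1} \cdot 2^{l+1} q$ with $|\cH_I| \leq O(F_I / 4^l)$, and Type 2 by the analogous quadratic quantity. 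Following exactly the accounting pattern used in \cref{lem:bound_checkfn_time} and the hash-collision argument of \cref{lem:block_window_concentrate}, the expected total "passing mass" summed over levels, over all $v \neq v_j$, and over all possible initiating windows is $O(K^{-1/2}) \cdot n/W$, so Markov's inequality gives the desired $O(K^{-1/2})$ probability that any incorrect hash value is ever reported YES.

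The main obstacle is the second bound: one must simultaneously control passage through all $\maxl$ levels and union-bound over the $K-1$ wrong hash values, while respecting the conditioning on the random bits of $j' \neq j$. The structural point that makes $O(K^{-1/2})$ come out is that the top levels are essentially impossible to pass for $v \neq v_j$ (since after removing $x^*$ the residual second moment is only $2^{-8} C^2 n$), while the lowest levels are cheap to pass but cost little; the $K^{-1/2}$ rate emerges as the geometric mean between these regimes, and the bookkeeping parallels that of \cref{lem:bound_checkfn_time} and \cref{lem:block_window_concentrate} already proved above.
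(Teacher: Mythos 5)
Your positive-direction argument glosses over the single subtlety that consumes most of the paper's proof. You write that since a constant fraction of windows are good (unblocked and under the $\checkfn$ cap) and each window independently isolates $x^*$ with probability $\Theta(q)$, "with constant probability at least one isolating window is good." The problem is that the set of good windows is \emph{not} independent of the $\{S_i^{(j)}\}$ randomness: which windows have at most $100$ running $\checkfn(j,\cdot)$ instances is itself determined by which earlier windows triggered a sample via $S_{i'}^{(j)}$. The paper explicitly flags this coupling ("we cannot bound it in this way, since the set of this $1/4-o(1)$-fraction of windows can depend on the randomness used by hash function $j$") and handles it with a backward induction over windows using the concave function $\gamma(y)=\exp(-y/3-5y^2/9)$ and the martingale-like quantity $G_i$. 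Without that induction (or some substitute that respects the adaptivity), your claim that an isolating good window exists with constant probability is unjustified, and the $1/25$ constant cannot be extracted.

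For the negative direction you take a genuinely different route, and it is not clear it closes. The paper's argument is short and local: to report a wrong $v$, $\checkfn(j,v)$ must pass the \emph{last} round $l=\maxl$ on an interval of length $2^{\maxl}=C\sqrt{n}/8$; since no element other than $x^*$ has frequency $\geq 2^{\maxl-1}$ anywhere, Lemma~\ref{lem:KSHH_no} forces two elements of frequency $\geq 2^{\maxl}K^{-1/8}$ to collide on hash value $v$, of which there are only $O(K^{1/4})$ in the whole stream, giving $O(K^{-3/2})$ per $v$ and $O(K^{-1/2})$ after union over $v$. You instead try to bound an expected "passing mass" over all levels and windows and then invoke Markov, but the quantity you compute, $O(K^{-1/2})\cdot n/W$, gives an $\Omega(1)$ probability under Markov rather than $O(K^{-1/2})$, because passage of intermediate levels is not a rare event (it is exactly what Lemma~\ref{lem:bound_checkfn_time} is controlling in expectation without making it small). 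The structural insight you gesture at ("the top levels are essentially impossible to pass for $v\neq v_j$") is precisely the paper's argument if you push it all the way to the single top level; your geometric-mean framing across levels does not obviously produce the stated bound. I would recommend adopting the last-level-only argument for the upper bound, and importing the paper's induction for the lower bound.
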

\begin{proof}
    Fix the random bits used by all $j'\neq j$.
    This fixes the set of windows blocked by $j'\neq j$ in the hypothetical algorithm.
    By Lemma~\ref{lem:block_window_concentrate}, at most $1/\log^4 n$-fraction of the windows are blocked in total.
    Recall that the windows blocked by the real algorithm are a subset, the remaining at least $(1-1/\log^4 n)$-fraction of the windows are guaranteed to not be blocked by $j'\neq j$, regardless of the randomness used by hash function $j$.

    For $j$ fixed, the algorithm records the first $(j, v)$ such that $\checkfn(j, v)$ returns YES.
    We will prove 
    \begin{itemize}
        \item for all $v\neq h^{(j)}(x^*)$, where $x^*$ is the heavy hitter, $\checkfn(j, v)$ will return YES with probability $O(K^{-1/2})$;
        \item for $v=h^{(j)}(x^*)$, $\checkfn(j, v)$ will be initiated and will return YES with probability $\Omega(1)$.
    \end{itemize} 

    For the first bullet point, consider the last round of $\checkfn$, i.e., $l=\maxl$.
    Consider a node $I$ in level $l$ of the execution forest, the length of $I$ is $2^l=C\sqrt{n}/8$.
    Fix $v\neq h^{(j)}(x^*)$, we now upper bound the probability that $\checkfn(j, v)$ passes the last round on $I$.
    First observe that no element other than the heavy-hitter in $U_{I,j,v}$ can have frequency at least $2^{l-1}=C\sqrt{n}/16$.
    Hence, by Lemma~\ref{lem:KSHH_no}, this requires at least two elements in $U_{I,j,v}$ with frequency at least $2^l\cdot K^{-1/8}=C\sqrt{n}K^{-1/8}/8$.
    But there can be at most $O(K^{1/4})$ elements with this frequency, even in the whole stream.
    The probability there exists two of them with hash value $v$ is at most $O\left(\left(\frac{K^{1/4}}{K}\right)^2\right)=O\left(K^{-3/2}\right)$.
    In particular, $\checkfn(j, v)$ passes the last round on $I$ with probability at most $K^{-3/2}$.
    There are $O(1)$ nodes in the last level, taking a union bound over $v$ and the nodes in this level, we conclude that the algorithm records a wrong hash value for $j$ with probability at most $O(K^{-1/2})$.

    In the remainder of the proof, we will show that we will sample and start $\checkfn(j, v)$ from a ``good'' window with constant probability, and a $\checkfn(j, v)$ that is started from a ``good'' window will return YES with constant probability.

    Consider a window $i$, suppose we start $\checkfn$ in parallel at the end of this window.
    Let $I_0,I_1,\ldots,I_{\maxl}$ be the nodes in the execution forest that this check will run $K$-SHH on, where $I_l$ is in level $l$ (in particular, $I_0$ consists of window $i+1$).
    We say window $i$ is ``good'' if
    \begin{enumerate}
        \item $I_0,\ldots,I_{\maxl}$ exist, i.e., $i$ is not close to the end of the stream,
        \item for each $l=0,\ldots,\maxl$, $I_l$ has at most $2^l\log^{-1} n$ blocked windows, and
        \item for each $l=0,\ldots,\maxl$, the second moment of $I_l$ is at most $2^lW\log^3 n$, i.e., $\sum_x f_{x, I_l}^2\leq 2^lW\log^3 n$.
    \end{enumerate}
    Note that the definition of a good window depends on the randomness used when checking other hash functions $j'\neq j$ (for the second condition), but not the randomness used by $j$.

     For each good window, $v=h^{(j)}(x^*)$, and $l=0,\ldots,\maxl$, we have
    \[
        \mathbb{E}\left[\sum_{x\in U_{I_l,j,v},x\neq x^*} f_{x,I_l}^2\right] \leq \Pr[x\in U_{I_l,j,v}] \cdot  2^lW\log^3 n \le (2^l q) \cdot 1/K \cdot 2^lW\log^3 n = 4^l\log^3 n/K.
    \] Therefore, by Markov's inequality and taking the union bound over all $l$, we have that with probability at least $1-o(1)$, for every  $l=0,\ldots,\maxl$, 
    \[ \sum_{x\in U_{I_l,j,v},x\neq x^*} f_{x,I_l}^2 \leq 4^l\log^5 n/K\, .\]
    
    Assuming the bound holds, if we start $\checkfn(j, v)$ for $v=h^{(j)}(x^*)$ from a good window, then for each round $l$, $K$-SHH returns YES with probability $1-1/\poly\, n$ by Lemma~\ref{lem:Yes_case}.
    Taking a union bound over all levels, $\checkfn$ returns YES except with $o(1)$ probability.
    It remains to show that we will start such a $\checkfn$ with constant probability.
    
    We show that at least a constant fraction of the windows are good.
    \begin{claim}\label{clm:good_window}
         With probability $1-1/\poly\, n$, at least $1/2-o(1)$ fraction of the windows are good.
    \end{claim}
    To see this, we analyze the fraction of windows that is \emph{not} good due to the violation of each of the conditions above.
    \begin{enumerate}
        \item Note that $2^{\maxl}=C\sqrt{n}/8=n/8W$.
        As long as $i+4\cdot 2^{\maxl}\leq n/W$, i.e., $i\leq n/2W$, $I_0,\ldots,I_{\maxl}$ exist.
        At most half of the windows violate this condition.
        \item By applying \cref{lem:block_window_concentrate} and taking the sum over all hash functions $j'\ne j$, we observe that at most $O\left((n/W)\cdot \log^{-3} n\right)$ windows are blocked with probability $1-1/\poly\, n$. For a fixed $l$, at most $\frac{O((n/W)\cdot \log^{-3} n)}{2^l\log^{-1}n}=O\left((n/2^lW)\log^{-2} n\right)$ nodes in level $l$ can violate the condition.
        For each such node, there are exactly $2^l$ windows $i$ that could lead to this node in level $l$.
        At most $O((n/W)\log^{-2} n)$ windows violate this condition for each $l$, thus, a total of $O((n/W)\log^{-1} n)$ windows violate this condition.
        This is $o(1)$-fraction.
        \item Similarly, since the second moment of the stream is $O(n)$, fix $l$, at most $\frac{O(n)}{2^lW\log^3 n}=O((n/2^lW)\log^{-3} n)$ nodes in level $l$ can violate this condition.
        At most $O((n/W)\log^{-3} n)$ windows violate this condition for each $l$.
        Thus, a total of $o(1)$-fraction windows violate this condition.
    \end{enumerate}

    Given the claim, we now prove that with $\Omega(1)$ probability, the (actual) algorithm will start $\checkfn(j,v)$ for the correct $v$ from some ``good'' window.
    Concretely, we prove the following claim.
    \begin{claim}\label{clm:good_start}
        With $1/20$ probability, there exists a \emph{good} window $i$ such that
        \begin{itemize}
            \item $x^*$ is in $S_i^{(j)}$, and it is the only element in $S_i^{(j)}$ that appears in window $i$;
            \item at the end of window $i$, at most $100$ $\checkfn(j,\cdot)$ are still running.
        \end{itemize}
    \end{claim}
    It is easy to verify that when the event in the claim happens, we start $\checkfn(j, v)$ for the correct $v$ from that window, and hence, the lemma holds.
    
    First observe that for each window $i$, the probability that the first condition holds is at least
    \[
        q\cdot (1-q)^W\approx q(1-1/C^2).
    \]
    To prove the claim, we need to prove that this happens in a good window and when there are at most $100$ other $\checkfn$ running.
    We apply Lemma~\ref{lem:bound_checkfn_time}, which asserts that at most $1/4+o(1)$ of the windows in expectation can have more than $100$ $\checkfn(j,\cdot)$ running.
    Combined with Claim~\ref{clm:good_window}, in expectation, at least $1/4-o(1)$-fraction of the windows are simultaneously good and have at most $100$ $\checkfn$ running.
    Ideally, we would like to argue that in at least one of these $(1/4-o(1))n/W$ windows, we sample $x^*$ in $S_i^{(j)}$ but not other elements with constant probability by bounding it by $1-(1-q/3)^{(1/4-o(1))n/W}=\Omega(1)$.
    However, we cannot bound it in this way, since the set of this $1/4-o(1)$-fraction of windows can depend on the randomness used by hash function $j$ (in particular, the sets $S_i^{(j)}$).
    To this end, we will prove the following claim by induction, which implies the bound we aim for.


     Let $G_i$ be the random variable indicating the number of windows that are simultaneously good and have at most $100$ $\checkfn$ running \emph{from window $i+1$ to the end}.
     We have $\E[G_0]\geq (1/4-o(1))n/W$.
     Let $R_i$ be the randomness used by the algorithm in window $i$, i.e., the random sets $S^{(j)}_1,\ldots,S^{(j)}_i$, and all the random bits used by $K$-SHH in window $i$.
   
    \begin{claim}
     Fix all randomness used by hash functions $j'\neq j$.
     Let $E$ be the (bad) event that we \emph{don't} have a window $i$ that is simultaneously good and have at most $100$ $\checkfn$ running such that $x^*$ is in $S_{i}^{(j)}$, and it is the only element in $S_{i}^{(j)}$ that appears in the window.
     Then we have
     \[
        \Pr[E\mid h^{(j)}, R_{\leq i}] \leq \gamma\left(\frac{q}{2}\cdot \E[G_i\mid h^{(j)}, R_{\leq i}]\right),
     \]
     where $\gamma(y):=\exp(-y/3-5y^2/9)$.
    \end{claim}
    As we will use it later, observe that the function $\gamma$ on the RHS is concave.
    This is because $\frac{1}{2}q\cdot \E[G_i\mid h^{(j)}, R_{\leq i}]\leq \frac{1}{2}q\cdot n/W\leq 1/2$, and for $y\in[0,1/2]$, the second derivative of $\gamma(y)=e^{-y/3-5y^2/9}$ is $((1/3+10y/9)^2-10/9)e^{-y/3-5y^2/9}<0$.

    Now to see this claim is sufficient for proving Claim~\ref{clm:good_start}, we set $i=0$.
    We have that 
    \[
        \Pr[E\mid h^{(j)}]\leq \gamma\left(\frac{q}{2}\cdot \E[G_0\mid h^{(j)}]\right).
    \]
    Taking expectation over $h^{(j)}$ and using the concavity and monotonicity of $\gamma$, we have
    \begin{align*}
        \Pr[E]&\leq \E_{h^{(j)}}\left[\gamma\left(\frac{q}{2}\cdot \E[G_0\mid h^{(j)}]\right)\right] \\
        &\leq \gamma\left(\frac{q}{2}\cdot \E[G_0]\right) \\
        &\leq \gamma\left(\frac{q}{2}\cdot (1/4-o(1))n/W\right) \\
        &\leq \gamma\left(1/8-o(1)\right) \\
        &\leq \exp(-1/24-5/576+o(1)) \\
        &\leq 19/20.
    \end{align*}
    That is, the bad event $E$ happens with probability at most $19/20$, we must start $\checkfn$ on the correct hash value $v=h^{(j)}(x^*)$ with probability at least $1/20$.
    This implies Claim~\ref{clm:good_start}.

        
    It remains to prove the claim.
    We prove by induction on $i$ from $i=n/W$ to $0$.
    The base case is when $i=n/W$.
    The bound trivially holds, since $G_i=0$, the RHS is equal to 1. 
    Now suppose the bound holds for $i+1$, and let us consider $i$.
    There are two cases: whether window $i+1$ is simultaneously good and has at most $100$ $\checkfn$ running.
    Note that this is completely determined by $h^{(j)}$ and $R_{\leq i}$.
    \begin{itemize}
        \item Suppose $h^{(j)}$ and $R_{\leq i}$ are such that window $i+1$ is \emph{either not} good \emph{or} has \emph{more than} $100$ $\checkfn$ running.
        Then we have $G_{i+1}=G_i$ conditioned on $h^{(j)}$ and $R_{\leq i}$.
        In this case, we have
        \begin{align*}
            \Pr[E\mid h^{(j)}, R_{\leq i}]&\leq \E_{R_{i+1}\mid h^{(j)}, R_{\leq i}}\left[\Pr[E\mid h^{(j)}, R_{\leq i+1}]\right] \\
            &\leq \E_{R_{i+1}\mid h^{(j)}, R_{\leq i}}\left[\gamma\left(\frac{q}{2}\cdot \E[G_{i+1}\mid h^{(j)}, R_{\leq i+1}]\right)\right] \\
            &= \E_{R_{i+1}\mid h^{(j)}, R_{\leq i}}\left[\gamma\left(\frac{q}{2}\cdot \E[G_{i}\mid h^{(j)}, R_{\leq i+1}]\right)\right] \\
            \intertext{which by the concavity, is}
            &\leq \gamma\left(\frac{q}{2}\cdot \E_{R_{i+1}\mid h^{(j)}, R_{\leq i}}\left[\E[G_{i}\mid h^{(j)}, R_{\leq i+1}]\right]\right) \\
            &=\gamma \left(\frac{q}{2}\cdot \E[G_{i}\mid h^{(j)}, R_{\leq i}]\right),
        \end{align*}
        proving the bound for $i$.
        \item Suppose $h^{(j)}$ and $R_{\leq i}$ are such that window $i+1$ is good and has at most $100$ $\checkfn$ running.
        Then $G_{i+1}=G_i-1$ conditioned on $h^{(j)}$ and $R_{\leq i}$.
        Note that $R_{i+1}$ is such that $x^*\in S_{i+1}^{(j)}$ and it is the only element in window $i+1$ that is sampled in $S_{i+1}^{(j)}$, then event $E$ does not happen.
        The set $S_{i+1}^{(j)}$ is independent of $h_i^{(j)}$ and $R_{\leq i}$, and hence, it happens with probability at least $q(1-1/C^2)$ as we argued above.
        Denote by $F_{i+1}$ the event that $x^*\in S_{i+1}^{(j)}$ and it is the only element in window $i+1$ that is sampled in $S_{i+1}^{(j)}$.
        Thus, we have
        \begin{align*}
            \Pr[E\mid h^{(j)}, R_{\leq i}]&\leq \E_{R_{i+1}\mid h^{(j)},R_{\leq i}}\left[\Pr[E\mid h^{(j)}, R_{\leq i+1}]\right] \\
            &\leq \E_{R_{i+1}\mid h^{(j)},R_{\leq i}}\left[\mathbf{1}_{\neg {F}_{i+1}}\cdot \gamma\left(\frac{q}{2}\cdot \E[G_{i+1}\mid h^{(j)}, R_{\leq i+1}]\right)\right] \\
            &=\Pr[\neg F_{i+1}\mid h^{(j)},R_{\leq i}] \cdot \E_{R_{i+1}\mid h^{(j)},R_{\leq i},\neg F_{i+1}}\left[\gamma\left(\frac{q}{2}\cdot \E[G_{i+1}\mid h^{(j)}, R_{\leq i+1}]\right)\right] \\
            &\leq (1-q(1-1/C^2))\cdot \gamma\left(\frac{q}{2}\cdot \E_{R_{i+1}\mid h^{(j)},R_{\leq i},\neg F_{i+1}}\left[\E[G_{i+1}\mid h^{(j)}, R_{\leq i+1}]\right]\right) \\
            &=(1-q(1-1/C^2))\cdot \gamma\left(\frac{q}{2}\cdot \E[G_{i+1}\mid h^{(j)}, R_{\leq i},\neg F_{i+1}]\right) \\
            &\leq e^{-q(1-1/C^2)}\cdot \gamma\left(\frac{q}{2}\cdot \left(\E[G_{i}\mid h^{(j)}, R_{\leq i},\neg F_{i+1}]-1\right)\right).
        \end{align*}
        Again by the fact that $\Pr[F_{i+1}\mid h^{(j)},R_{\leq i}]\approx q(1-1/C^2)$ and $G_i\in [0, n/W]$, we have that
        \[
            \E[G_i\mid h^{(j)}, R_{\leq i}, \neg F_{i+1}]\geq \E[G_i\mid h^{(j)}, R_{\leq i}]-(q(1-1/C^2))\cdot n/W=\E[G_i\mid h^{(j)}, R_{\leq i}]-(1-1/C^2).
        \]
        Hence, by the monotonicity of $\gamma$,
        \begin{align*}
             \Pr[E\mid h^{(j)}, R_{\leq i}]&\leq e^{-q(1-1/C^2)}\cdot\gamma\left(\frac{q}{2}\cdot\left(\E[G_i\mid h^{(j)}, R_{\leq i}]-(1-1/C^2)-1\right)\right) \\
             &\leq e^{-q(1-1/C^2)}\cdot\gamma\left(\frac{q}{2}\cdot\E[G_i\mid h^{(j)}, R_{\leq i}]-q(1-1/2C^2)\right).
        \end{align*}
        On the other hand,
        \begin{align*}
            &\gamma\left(\frac{q}{2}\cdot\left(\E[G_i\mid h^{(j)}, R_{\leq i}]\right)-q(1-1/2C^2)\right) \\
            &=\exp\left(-\frac{q}{6}\cdot\E[G_i\mid h^{(j)}, R_{\leq i}]+\frac{q}{3}(1-1/2C^2)-\frac{5}{9}\left(\frac{q}{2}\cdot\E[G_i\mid h^{(j)}, R_{\leq i}]-q(1-1/2C^2)\right)^2\right) \\
            &\le \gamma\left(\frac{q}{2}\cdot\left(\E[G_i\mid h^{(j)}, R_{\leq i}]\right)\right)\cdot \exp\left(\frac{q}{3}(1-1/2C^2)+\frac{5}{9}\left(q\cdot\E[G_i\mid h^{(j)}, R_{\leq i}]\cdot q(1-1/2C^2)\right) \right) \\
            &\le\gamma\left(\frac{q}{2}\cdot\left(\E[G_i\mid h^{(j)}, R_{\leq i}]\right)\right)\cdot \exp\left(\frac{q}{3}(1-1/2C^2)+\frac{5q}{9}(1-1/2C^2)\cdot(qn/W)\right)\\
            &\le \gamma\left(\frac{q}{2}\cdot\left(\E[G_i\mid h^{(j)}, R_{\leq i}]\right)\right)\cdot \exp\left(q(1-1/C^2)\right)\, .
        \end{align*}
        This completes the induction step for the second case, finishing the proof.
    \end{itemize}

\end{proof}

\begin{theorem}
    The space usage of Algorithm~\ref{alg:ell2} is $O(\log n)$ bits.
    It finds $x^*$ with probability $1-2^{-\Omega(\log n/\log\log n)}$.
\end{theorem}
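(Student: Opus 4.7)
I plan to split the proof into a space analysis and a success-probability analysis. For space, the algorithm's state at any time consists of: (i) a global window counter using $O(\log n)$ bits; (ii) for each of the $J=O(\log n/\log\log n)$ hash functions, at most $100$ active $\checkfn(j,\cdot)$ procedures, each storing an $O(\log K)=O(\log\log n)$-bit hash value $v$ and iteration index $l$, together with the state of its currently running $K$-SHH subroutine; and (iii) at most $100$ $K$-SHH subroutines in the expensive $O(\log n)$-bit mode. The blocking rule on line~\ref{alg_check_line-6} ensures that $K$-SHH instances may be in the $O(\log n)$-bit mode only for a single hash function $j$ at any time, so their combined contribution is $O(\log n)$ bits; the remaining $\checkfn$ bookkeeping fits in $J\cdot 100\cdot O(\log\log n)=O(\log n)$ bits, giving the claimed $O(\log n)$ total.

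For correctness, let $X_j$ be the indicator that the algorithm records the correct $v_j=h^{(j)}(x^*)$, and let $R_j$ denote all randomness associated with hash function $j$; the $R_j$'s are mutually independent. Lemma~\ref{lem:prob_hash_value} gives $\Pr[X_j=1\mid R_{-j}]\geq 1/25$ for every realization of $R_{-j}$, and $\Pr[\text{an incorrect }v_j\text{ is recorded}\mid R_{-j}]=O(K^{-1/2})$. Setting $Y_j=\E[X_j\mid R_{\leq j}]$ and integrating the Lemma's lower bound over $R_{>j}$ yields $Y_j\geq 1/25$ almost surely, so $\sum_j Y_j\geq J/25$ deterministically. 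The differences $D_j=X_j-Y_j$ form a bounded ($|D_j|\leq 1$) martingale difference sequence adapted to $\{\sigma(R_{\leq j})\}_j$, so Azuma's inequality gives $|\sum_j D_j|\leq t$ except with probability $2\exp(-t^2/(2J))$. Choosing $t=J/25-4\log U/\log K$ makes $\sum_j X_j\geq 4\log U/\log K$ hold except with probability $2^{-\Omega(\log U/\log K)}=2^{-\Omega(\log n/\log\log n)}$. Union-bounding the $O(K^{-1/2})$ bound over $j\in[J]$ shows that no incorrect $v_j$ is recorded except with probability $O(JK^{-1/2})=1/\poly\, n$, so conditioning on these events, $x^*$ has at least $4\log U/\log K$ matches and is a valid output candidate.

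To rule out false positives, for any fixed $x\neq x^*$ the hash values $h^{(j)}(x)$ are independent across $j$, and the recorded $v_j$ depends on $h^{(j)}(x)$ only through low-probability events (namely $x$ being sampled into some $S_i^{(j)}$ and having its hash coincide with the heavy hitter's), so the match count $|\{j:h^{(j)}(x)=v_j\}|$ is stochastically dominated by a $\mathrm{Binomial}(J,O(1/K))$ variable. A Chernoff bound (Theorem~\ref{thm:chernoff}) gives $\Pr[\text{match count}\geq 4\log U/\log K]\leq 2^{-\Omega(\log U)}$, and a union bound over $x\in U$ ensures no false positive occurs except with probability $1/\poly\, n$. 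Combining all failure probabilities proves the theorem. The main obstacle is the subtle dependency between the $X_j$'s: they share randomness through windows blocked by other hash functions' $K$-SHH subroutines, ruling out a direct Chernoff bound; the martingale argument sidesteps this by exploiting that Lemma~\ref{lem:prob_hash_value} provides a \emph{uniform} conditional lower bound on $\Pr[X_j=1\mid R_{-j}]$, not merely a marginal one. A secondary technicality is establishing the ``almost independence'' of $h^{(j)}(x)$ from the recorded $v_j$ for non-heavy $x$, which holds because such an $x$ can influence $v_j$ only via the unlikely event of being sampled in some $S_i^{(j)}$.
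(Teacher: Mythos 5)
Your space analysis is essentially the same as the paper's and is fine. The correctness analysis, however, has two genuine gaps.

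First, the martingale formalization is broken. You set $Y_j=\E[X_j\mid R_{\leq j}]$ and claim $Y_j\geq 1/25$ almost surely. But $R_{\leq j}$ includes $R_j$ itself: if $X_j$ happened to depend only on $R_j$ (the dominant dependence in this problem), then $Y_j=X_j\in\{0,1\}$ and the bound fails outright. The quantity Lemma~\ref{lem:prob_hash_value} controls is $\E[X_j\mid R_{-j}]$, and by the tower property what you actually get is $\E[X_j\mid R_{<j}]\geq 1/25$, not $\E[X_j\mid R_{\leq j}]\geq 1/25$. Even after correcting $Y_j$ to $\E[X_j\mid R_{<j}]$, the sequence $D_j=X_j-Y_j$ is \emph{not} adapted to the filtration $\{\sigma(R_{\leq j})\}$: $X_j$ depends on $R_{j'}$ for $j'>j$ through cross-$j$ blocking, so $D_j$ is not $\sigma(R_{\leq j})$-measurable and Azuma's inequality as you state it does not apply. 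The paper sidesteps both problems by using its Theorem~\ref{thm:azuma-hoeffing}, which is stated directly in terms of $\E[X_k\mid X_1,\dots,X_{k-1}]\geq p$ (conditioning on the past indicator values, not on a filtration of $R$'s) and is proved via the exponential moment method; this is the hypothesis the paper verifies from Lemma~\ref{lem:prob_hash_value} rather than trying to set up a Doob-type martingale over the $R_j$'s.

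Second, and more importantly, your false-positive bound is too weak to give the stated success probability. You condition on the event ``no incorrect $v_j$ is recorded'' and claim its failure probability is $O(JK^{-1/2})=1/\poly\,n$. This is arithmetically wrong: with $J=O(\log n/\log\log n)$ and $K=\log^{1000}n$, we have $JK^{-1/2}=O(\log^{-499}n)$, which is only polylogarithmically small. For large $n$ one has $499(\log\log n)^2\ll\log n$, so $\log^{-499}n=2^{-499\log\log n}\gg 2^{-\Omega(\log n/\log\log n)}$, i.e.\ this single conditioning step already costs more than the entire error budget the theorem claims. The paper avoids this by never conditioning on ``all $v_j$ correct.'' Instead it fixes an $x\neq x^*$, observes that a match at index $j$ requires either $h^{(j)}(x)=h^{(j)}(x^*)$ (probability $K^{-1}$) or an incorrect $v_j$ recorded at index $j$ (probability $O(K^{-1/2})$), and then union-bounds over all $\binom{J}{4\log U/\log K}$ size-$(4\log U/\log K)$ subsets of indices. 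The $O(K^{-1/2})$ factor is thereby raised to the power $4\log U/\log K$, yielding $1/U^{2-o(1)}$ per element and $1/U^{1-o(1)}$ overall, which is genuinely dominated by the $2^{-\Omega(\log n/\log\log n)}$ term. You need to fold the incorrect-$v_j$ probability into the per-subset union bound as the paper does, rather than peel it off as a one-shot conditioning.
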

\begin{proof}
    It is straightforward to verify that the space usage is $O(\log n)$ bits: at most $O(J)=O(\log n/\log K)$ $\checkfn$ and $K$-SHH may be running in parallel, each uses $O(\log K)$ bits regularly; at most $O(1)$ instances of $K$-SHH may use $O(\log n)$ bits at the same time.

    Now we lower bound the probability that it finds $x^*$.
    Let $X_j$ indicate if we find $v_j=h^{(j)}(x^*)$.
    Lemma~\ref{lem:prob_hash_value} implies that $\Exp[X_j\mid X_1,\ldots,X_{j-1}]\geq 1/25$.
    The expectation of $\sum X_j$ is at least $J/25=\frac{128}{25}\log U/\log K$.
    Thus, by \cref{thm:azuma-hoeffing}, $\Pr[\sum_j X_j\geq 4\log U/\log K]\geq 1-2^{-\Omega(\log U/\log K)}=1-2^{-\Omega(\log n/\log\log n)}$.

    When this happens, $x^*$ satisfies the last line of Algorithm~\ref{alg:ell2}.
   On the other hand, it is the only such element with high probability. Consider any $x\neq x^*$ that matches $4\log U/\log K$ hash values recorded. For each $v_j$ it matches, either $h^{(j)}(x)=h^{(j)}(x^*)$ or $v_j\neq h^{(j)}(x^*)$. 
    The former happens with probability $K^{-1}$, and in the latter case, it is the hash value found by $j$ with probability $O(K^{-1/2})$ by Lemma~\ref{lem:prob_hash_value}.
    By union bounding over the choice of $4\log U/\log K$ indices $j$, the probability that this happens is at most
    \[
        \binom{J}{4\log U/\log K}\cdot O(K^{-1/2})^{4\log U/\log K}\leq O(K^{-1/2})^{4\log U/\log K}\leq 1/U^{2-o(1)}.
    \]
    Thus, the probability that there exists a different $x\neq x^*$ that also satisfies the last line of the algorithm is at most $1/U^{1-o(1)}$.
    This proves the theorem. 
    
\end{proof}
\section{Streaming algorithm for detecting $\ell_2$ $\epsilon$-heavy hitters}\label{sec:epsilon-heavy}

In this section, we present the algorithm for finding $\ell_2$ $\epsilon$-heavy hitters with high probability using $O\left(\frac{\log n}{\epsilon}\right)$-bits of space, assuming each $\epsilon$-\emph{heavy hitter} is evenly distributed in the stream.
Note that the space is tight because there could be $\theta\left(\frac{1}{\epsilon}\right)$ $\epsilon$-heavy hitters and we would need at least $O\left(\frac{\log n}{\epsilon}\right)$-bits of space to output them.
Here again, we assume that the length of the stream is $n$, the second moment of the frequency vector is $O(C^2 n)$, and we assume access to free randomness.

The main changes to \cref{alg:ell2} from the previous section are highlighted in {\color{red}red} below.

\begin{algorithm}[H]\caption{Algorithm for $\ell_2$ $\epsilon$-heavy hitters}\label{alg:ell2_eps}
\begin{algorithmic}[1]
\Input a stream of length $n$ of elements from set $U$ of size $\poly\, n$, and a parameter $0<\epsilon\le 1$ with $\epsilon =\omega\left(1/n\right)$.
\State \emph{assumptions: for every $\epsilon$-heavy hitter, if we divide the stream into windows of length {\color{red}$W:=\sqrt{n/\epsilon}/C$}, there is at least one occurrence of it in every window; $\ell_2^2$ of the frequency vector is at most {\color{red}$C^2 n$}.}
\Output the set of all $\epsilon$-heavy hitters in the stream
\State fix parameters {\color{red}$q=\frac{\sqrt{\epsilon}}{C\sqrt{n}}$}, {\color{red}$K=\left(\log n/\epsilon\right)^{1000}$}, and {\color{red}{$J=128\log U/\epsilon \log K$}}
\State let $h^{(1)},\ldots,h^{(J)}$ be independent random hash functions $h^{(j)}: U\rightarrow [K]$
\State for each window $i$ and each $j\in[J]$, sample a random set $S^{(j)}_i\subseteq U$ such that each $x\in U$ is in $S^{(j)}_i$ with probability $q$ independently \emph{using free random bits}
\For{each window $i$}
\State for $j\in[J]$, if there is some $x\in S^{(j)}_i$ that appears in the window, record $(j, h^{(j)}(x))$ (if there is more than one such $x$ for some $j$, record the first one)
\For {each hash $(j, v)$ recorded at the end of the window}
\If {at most $100$ instances of $\checkfn$ are still running for this $j$}
\State start $\checkfn(j, v)$ \emph{in parallel}
\EndIf
\EndFor
\EndFor
\State For each $j$, record the first $(j, v)$ such that $\checkfn(j, v)$ returned YES
\State If $x$ such that $h^{(j)}(x)=v$ for at least $4\log U/\log K$ of them, return $x$ (If there are multiple such $x$, return all of them)
\end{algorithmic}
\end{algorithm}

The \emph{execution forest} of \cref{alg:ell2_eps} would be similar to the execution forest that we defined in the previous section for \cref{alg:ell2} with the new $\maxl=\log(C\sqrt{\epsilon n}/8)$.

Below we describe the algorithm for the $\checkfn$ function and highlight the changes from \cref{alg:check} in {\color{red}red}. Note the the only changes are in the value of $\maxl$ and the maximum number of K-SHH algorithms that can use $\log n$ bits simultaneously.

\begin{algorithm}[H]\caption{$\checkfn(j,v)$ checks if $h^{(j)}(x^*)=v$  (when $x^*$ exists)}\label{alg:check_eps}
\begin{algorithmic}[1]
    \State let {\color{red}$\maxl=\log(C\sqrt{\epsilon n}/8)$}
    \For {$l=0,\ldots,\maxl$}
	\State currently at the beginning of a level-$l$ node $I$ in the execution forest, let it be the beginning of the $i$-th window such that $2^l\mid i$; the current $\checkfn$ must be initiated from a window in the range $[\max\{0, i-2^{l+1}+1\}, i-2^l]$
        \State\label{line:unique_jv} if there is another instance of $\checkfn$ with the same $j$ and $v$ is about to start the same round $l$ (hence on the same $I$), then keep only one instance and terminate the others
        \State let $U_{I, j}\leftarrow \bigcup_{k\in [\max\{0, i-2^{l+1}+1\}, i-2^l]} S_k^{(j)}$ be the union of all $S_k^{(j)}$ of windows that may have initiated this $\checkfn$
	\State run $K$-SHH (\cref{Algorithm 1}) on $I$, restricted to elements in $U_{I, j}$ with hash value $v$ (i.e., set the parameter $V$ to $U_{I,j,v}$, $n$ is the length of the original stream, and we assume that the universe size is at most $\poly\,  n$, $W$ and $K$ are fixed parameters that do not vary across the different executions of $K$-SHH)
 $$U_{I, j, v}:=\{x: h^{(j)}(x)= v\}\cap U_{I, j},$$ 
 denote this instance of $K$-SHH by $A_{I,j,v}$; {\color{red}\textbf{at every time-step, we ensure that there are at most $1/\epsilon$ different $j$'s such that their $K$-SHH instances use $O(\log n)$ bits (and they \emph{block} $A_{I',j',v'}$ for every other $j'$, i.e., those instances do not proceed after \cref{algo_KSHH_line-block} in \cref{Algorithm 1})}\label{alg_check_line-6}}
	\If {$K$-SHH returns NO}
            \State return NO
        \EndIf
        \State wait until the start of the next window $i$ such that $2^{l+1}\mid i$
    \EndFor
    \State return YES
\end{algorithmic}
\end{algorithm}

Like in the \cref{sec:SHH}, in order to analyze the above algorithm, we would need to work with \emph{hypothetical} versions of it.
Observe that the main changes to the algorithms in the previous section are the setting of the parameters $q,W,K,J$, and $\maxl$. In addition, $1/\epsilon$ many K-SHH algorithms are now allowed to use $\log n$ bits in parallel, and \cref{alg:ell2_eps} outputs any element whose hash value passes $\checkfn$ in at least $0.5 \epsilon$ fraction of the hash functions.
Given this, it can be verified that \cref{lem:bound_checkfn_time,lem:block_exp,lem:block_window_concentrate}, and the claims within the lemmas still hold with the new setting of the parameters for the $\epsilon$-heavy hitters.

We now prove a lemma similar to \cref{lem:prob_hash_value} in the previous section.

\begin{lemma}\label{lem:prob_hash_value_eps}
 For each $j$, for every $\epsilon$-heavy hitter $x^*$, conditioned on all the random bits used by $j'\neq j$, the probability that we find $v_j=h_j(x^*)$ is at least $\epsilon/25$. On the other hand, we find a hash value $v$ such that $v\ne h_j(x)$ for any $\epsilon/256$-heavy hitter with probability $O(K^{-1/3})$.
\end{lemma}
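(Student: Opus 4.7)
The plan is to follow the structure of Lemma~\ref{lem:prob_hash_value} very closely, replacing the parameters and adding a bookkeeping step for the new shared K-SHH capacity. The key quantitative change is that $n/W = C\sqrt{\epsilon n}$ (versus $C\sqrt{n}$ before) and $q = \sqrt{\epsilon}/(C\sqrt{n})$, so the product $qn/W = \epsilon$ rather than $\Theta(1)$. This exact factor is what converts the old $1/25$ lower bound on ``finding the correct hash value'' into the new $\epsilon/25$. Also, $\maxl = \log(C\sqrt{\epsilon n}/8)$, so $2^{\maxl-1} = C\sqrt{\epsilon n}/16$, which is the crucial threshold controlling whether a ``Type~1'' K-SHH YES forces the surviving element to be an $\epsilon/256$-heavy hitter.

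For the positive direction I would fix an $\epsilon$-heavy hitter $x^*$ and a hash index $j$, and re-run the ``good window'' argument from Lemma~\ref{lem:prob_hash_value}. First I would bound the fraction of windows blocked by other hash indices $j'\neq j$. Lemma~\ref{lem:block_window_concentrate} (valid under the new parameters) guarantees that each individual $j'$ uses $O(\log n)$ bits in only an $O(1/\log^4 n)$ fraction of windows, so the total number of (window, hash index) pairs where some K-SHH uses $O(\log n)$ bits is at most $J\cdot (n/W)/\log^4 n$; dividing by the capacity $1/\epsilon$ via Markov shows that at most an $O(J\epsilon/\log^4 n) = o(1/\log^3 n)$ fraction of windows ever hit the capacity limit. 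Combined with the per-$j$ own-blocking, at most an $o(1)$ fraction of windows are blocked. Next I would repeat the inductive argument from the proof of Lemma~\ref{lem:prob_hash_value} using the same concave potential $\gamma(y) = \exp(-y/3 - 5y^2/9)$; since the effective budget $qn/W = \epsilon \leq 1/2$, we remain in the concave regime and the recursion yields $\Pr[\text{no good start for }x^*] \leq \gamma\left(\tfrac{q}{2}\cdot (1/4-o(1))n/W\right) \leq 1 - \Omega(\epsilon)$. Conditioning on such a good start, $\checkfn(j, h^{(j)}(x^*))$ returns YES with probability $1-o(1)$ by Lemma~\ref{lem:Yes_case}, giving the $\epsilon/25$ bound.

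For the negative direction, the argument is localized to the last round $l = \maxl$. By Lemma~\ref{lem:KSHH_no}, for K-SHH on a top-level node $I$ to return YES it must be that either (Type~1) some $x \in U_{I,j,v}$ has frequency at least $2^{\maxl-1} = C\sqrt{\epsilon n}/16$ in $I$, or (Type~2) at least two elements in $U_{I,j,v}$ have frequency at least $2^{\maxl}K^{-1/8}$ in $I$. In Type~1, the surviving element has $f_x^2 \geq \epsilon C^2 n/256 \geq \epsilon \|f\|_2^2/256$, so it is automatically an $\epsilon/256$-heavy hitter and cannot cause a ``spurious'' $v$. In Type~2, the second-moment bound forces the total number of elements of frequency at least $2^{\maxl}K^{-1/8}$ to be at most $O(K^{1/4}/\epsilon)$; each such element belongs to $U_{I,j,v}$ with probability at most $K^{-1}\cdot 2^{\maxl} q = O(\epsilon/K)$, so the expected number of witness pairs for a fixed $v$ is $O((K^{1/4}/\epsilon)^2 \cdot (\epsilon/K)^2) = O(K^{-3/2})$. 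Union-bounding over the $O(1)$ level-$\maxl$ nodes and over $v \in [K]$ gives $O(K^{-1/2})$, comfortably inside the claimed $O(K^{-1/3})$.

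The main obstacle I anticipate is the clean composition of per-$j$ blocking bounds across the $J = \Theta(\log U/(\epsilon \log K))$ hash functions under the new shared-capacity rule of Algorithm~\ref{alg:check_eps}. In the original proof, only one $j$ at a time could use $O(\log n)$-bit K-SHH, so blocked windows were almost independent of the hash function being analyzed; now, one must show that the event ``window $i$ has more than $1/\epsilon$ hash functions running $O(\log n)$-bit K-SHH'' is sparse (the Markov bound above), and that the resulting blocked-window set is still small enough to feed into the inductive claim analogous to Claim~\ref{clm:good_start} -- essentially replacing the old constant $1/\log^4 n$ by $O(1/\log^3 n)$ and verifying that all other steps, including Claim~\ref{clm:alpha_I}, Lemma~\ref{lem:block_exp}, and the second-moment accounting in Lemma~\ref{lem:block_window_concentrate}, continue to go through with $K = (\log n/\epsilon)^{1000}$ and the rescaled $W, q$. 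These are mechanical but tedious verifications; there is no new conceptual ingredient beyond the ones already developed in Sections~\ref{sec:KSHH} and~\ref{sec:SHH}.
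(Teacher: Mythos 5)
Your proposal is correct and takes essentially the same route as the paper: re-run the good-window/inductive argument from Lemma~\ref{lem:prob_hash_value} with the rescaled $q$, $W$, $\maxl$, and treat the spurious-hash case via Lemma~\ref{lem:KSHH_no} at the top level. Your explicit Markov bookkeeping for the shared $1/\epsilon$ K-SHH capacity (bounding the fraction of windows where more than $1/\epsilon$ hash indices want $O(\log n)$ bits) spells out a step the paper leaves implicit when it says ``taking the sum over all hash functions $j'\neq j$,'' and your $O(K^{-1/2})$ bound on the spurious event — which additionally accounts for $U_{I,j}$-membership — is slightly tighter than the paper's $O(K^{-1/3})$, but neither change alters the structure of the argument.
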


   Fix the random bits used by all $j'\neq j$.
    This fixes the set of windows blocked by $j'\neq j$ in the hypothetical algorithm.
    By Lemma~\ref{lem:block_window_concentrate}, at most $1/\log^4 n$-fraction of the windows are blocked in total.
    Recall that the windows blocked by the real algorithm are a subset, the remaining at least $(1-1/\log^4 n)$-fraction of the windows are guaranteed to not be blocked by $j'\neq j$, regardless of the randomness used by hash function $j$.

    For $j$ fixed, the algorithm records the first $(j, v)$ such that $\checkfn(j, v)$ returns YES.
    We will prove 
    \begin{itemize}
        \item for all $v$ such that $v\neq h^{(j)}(y)$ for any $\epsilon/256$-heavy hitter $y$, $\checkfn(j, v)$ will return YES with probability $O(K^{-1/3})$;
        \item for every $\epsilon$-heavy hitter $x^*$, $\checkfn(j, v)$ will be initiated for $v=h^{(j)}(x^*)$ and will return YES with probability $\Omega(\epsilon)$.
    \end{itemize} 

    For the first bullet point, consider the last round of $\checkfn$, i.e., $l=\maxl$.
    Consider a node $I$ in level $l$ of the execution forest, the length of $I$ is $2^l=C\sqrt{\epsilon n}/8$.
    Fix $v$ such that $v\neq h^{(j)}(y)$ for any $\epsilon/256$-heavy hitter $y$, we now upper bound the probability that $\checkfn(j, v)$ passes the last round on $I$.
    First observe that any element in $U_{I,j,v}$ that is not an $\epsilon/256$-heavy hitter has overall frequency that is less than $2^{l-1}=C\sqrt{\epsilon n}/16$.
    Hence, by Lemma~\ref{lem:KSHH_no}, this requires at least two elements in $U_{I,j,v}$ with frequency at least $2^l\cdot K^{-1/8}=C\sqrt{\epsilon n}K^{-1/8}/8$.
    But there can be at most $O(K^{1/4}/\epsilon)$ elements with this frequency, even in the whole stream.
    The probability there exists two of them with hash value $v$ is at most $O\left(\left(\frac{K^{1/4}}{\epsilon K}\right)^2\right)\le O\left(K^{-4/3}\right)$.
    In particular, $\checkfn(j, v)$ passes the last round on $I$ with probability at most $K^{-4/3}$.
    There are $O(1)$ nodes in the last level, taking a union bound over $v$ and the nodes in this level, we conclude that the algorithm records a wrong hash value for $j$ with probability at most $O(K^{-1/3})$.
    In the remainder of the proof, we will show that we will sample and start $\checkfn(j, v)$ from a ``good'' window with constant probability, and a $\checkfn(j, v)$ that is started from a ``good'' window will return YES with constant probability.

    Consider a window $i$, suppose we start $\checkfn$ in parallel at the end of this window.
    Let $I_0,I_1,\ldots,I_{\maxl}$ be the nodes in the execution forest that this check will run $K$-SHH on, where $I_l$ is in level $l$ (in particular, $I_0$ consists of window $i+1$).
    We say window $i$ is ``good'' if
    \begin{enumerate}
        \item $I_0,\ldots,I_{\maxl}$ exist, i.e., $i$ is not close to the end of the stream,
        \item for each $l=0,\ldots,\maxl$, $I_l$ has at most $2^l\log^{-1} n$ blocked windows, and
        \item for each $l=0,\ldots,\maxl$, the second moment of $I_l$ is at most $2^lW\log^3 n$, i.e., $\sum_x f_{x, I_l}^2\leq 2^lW\log^3 n$.
    \end{enumerate}
    Note that the definition of a good window depends on the randomness used when checking other hash functions $j'\neq j$ (for the second condition), but not the randomness used by $j$.

     For each good window, $v=h^{(j)}(x^*)$, and $l=0,\ldots,\maxl$, we have
    \[
        \mathbb{E}\left[\sum_{x\in U_{I_l,j,v},x\neq x^*} f_{x,I_l}^2\right] \leq \Pr[x\in U_{I_l,j,v}] \cdot  2^lW\log^3 n \le (2^l q) \cdot 1/K \cdot 2^lW\log^3 n = 4^l\log^3 n/K.
    \] Therefore, by Markov's inequality and taking the union bound over all $l$, we have that with probability at least $1-o(1)$, for every  $l=0,\ldots,\maxl$, 
    \[ \sum_{x\in U_{I_l,j,v},x\neq x^*} f_{x,I_l}^2 \leq 4^l\log^5 n/K\, .\]

    Assuming the bound holds, if we start $\checkfn(j, v)$ for $v=h^{(j)}(x^*)$ from a good window, then for each round $l$, $K$-SHH returns YES with probability $1-1/\poly\, n$ by Lemma~\ref{lem:Yes_case}.
    Taking a union bound over all levels, $\checkfn$ returns YES except with $o(1)$ probability.
    It remains to show that we will start such a $\checkfn$ with constant probability.
    
    We show that at least a constant fraction of the windows are good.
    \begin{claim}\label{clm:good_window}
         With probability $1-1/\poly\, n$, at least $1/2-o(1)$ fraction of the windows are good.
    \end{claim}
    To see this, we analyze the fraction of windows that is \emph{not} good due to the violation of each of the conditions above.
    \begin{enumerate}
        \item Note that $2^{\maxl}=C\sqrt{\epsilon n}/8=n/8W$.
        As long as $i+4\cdot 2^{\maxl}\leq n/W$, i.e., $i\leq n/2W$, $I_0,\ldots,I_{\maxl}$ exist.
        At most half of the windows violate this condition.
        \item By applying \cref{lem:block_window_concentrate} and taking the sum over all hash functions $j'\ne j$, we observe that at most $O\left((n/W)\cdot \log^{-3} n\right)$ windows are blocked with probability $1-1/\poly\, n$. For a fixed $l$, at most $\frac{O((n/W)\cdot \log^{-3} n)}{2^l\log^{-1}n}=O\left((n/2^lW)\log^{-2} n\right)$ nodes in level $l$ can violate the condition.
        For each such node, there are exactly $2^l$ windows $i$ that could lead to this node in level $l$.
        At most $O((n/W)\log^{-2} n)$ windows violate this condition for each $l$, thus, a total of $O((n/W)\log^{-1} n)$ windows violate this condition.
        This is $o(1)$-fraction.
        \item Similarly, since the second moment of the stream is $O(n)$, fix $l$, at most $\frac{O(n)}{2^lW\log^3 n}=O((n/2^lW)\log^{-3} n)$ nodes in level $l$ can violate this condition.
        At most $O((n/W)\log^{-3} n)$ windows violate this condition for each $l$.
        Thus, a total of $o(1)$-fraction windows violate this condition.
    \end{enumerate}
    Given the claim, we now prove that with $\Omega(1)$ probability, the (actual) algorithm will start $\checkfn(j,v)$ for the correct $v$ from some ``good'' window.
    Concretely, we prove the following claim.
    \begin{claim}\label{clm:good_start}
        With $\epsilon/25$ probability, there exists a \emph{good} window $i$ such that
        \begin{itemize}
            \item $x^*$ is in $S_i^{(j)}$, and it is the only element in $S_i^{(j)}$ that appears in window $i$;
            \item at the end of window $i$, at most $100$ $\checkfn(j,\cdot)$ are still running.
        \end{itemize}
    \end{claim}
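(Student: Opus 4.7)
The plan is to lift the inductive proof of Claim~\ref{clm:good_start} from Section~\ref{sec:SHH} to the new parameters essentially verbatim. The only substantive change is that the sampling probability $q = \sqrt{\epsilon}/(C\sqrt{n})$ and window count $n/W = C\sqrt{\epsilon n}$ now satisfy $q\cdot n/W = \epsilon$ instead of $1$, so the success probability scales linearly with $\epsilon$, yielding $\epsilon/25$ in place of the previous $1/20$.

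Define $G_i$ to be the number of windows in $[i+1, n/W]$ that are simultaneously good (per the three conditions above) and have at most $100$ instances of $\checkfn(j,\cdot)$ running at their end. Combining Claim~\ref{clm:good_window} with the $\epsilon$-heavy-hitter analog of Lemma~\ref{lem:bound_checkfn_time}, we have $\mathbb{E}[G_0] \geq (1/4 - o(1))\cdot n/W$. Let $E$ be the bad event that no good, uncongested window $i$ also satisfies $x^* \in S_i^{(j)}$ and $x^*$ being the unique sample in window $i$. Setting $\gamma(y) := \exp(-y/3 - 5y^2/9)$, I will prove by reverse induction on $i$ from $n/W$ down to $0$ that
\[
\Pr[E \mid h^{(j)}, R_{\leq i}] \leq \gamma\!\left(\tfrac{q}{2}\cdot \mathbb{E}[G_i \mid h^{(j)}, R_{\leq i}]\right).
\]
The base case $i = n/W$ is trivial since $G_i = 0$. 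The inductive step is exactly the two-case argument of Section~\ref{sec:SHH}, splitting on whether window $i+1$ is itself good and has at most $100$ $\checkfn(j,\cdot)$ running. The only ingredients used are (i) concavity and monotonicity of $\gamma$ on $[0, 1/2]$, which still apply because $\tfrac{q}{2}\mathbb{E}[G_i] \leq \epsilon/2 \leq 1/2$; (ii) the unconditional probability $q(1-1/C^2)$ that $x^* \in S_{i+1}^{(j)}$ is the unique sample in window $i+1$, which depends only on $q$ and $W$ and hence remains valid; and (iii) the $\pm 1$ sensitivity $G_{i+1} \geq G_i - 1$. All three ingredients transfer verbatim.

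Setting $i = 0$ and averaging over $h^{(j)}$ using concavity gives
\[
\Pr[E] \leq \gamma\!\left(\tfrac{q}{2}\cdot (1/4 - o(1))\, n/W\right) = \gamma(\epsilon/8 - o(\epsilon)) \leq \exp(-\epsilon/24 + o(\epsilon)).
\]
Using the elementary inequality $1 - e^{-x} \geq x - x^2/2$ for $x \geq 0$, this yields $\Pr[\overline E] \geq \epsilon/24 - O(\epsilon^2) \geq \epsilon/25$ whenever $\epsilon$ is smaller than an absolute constant; this upper bound on $\epsilon$ can be absorbed into $c_0$ in the statement of Theorem~\ref{thm:main result 2}.

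The main obstacle is bookkeeping rather than new ideas: I must verify that Claim~\ref{clm:good_window} and the $\epsilon$-heavy-hitter versions of Lemma~\ref{lem:bound_checkfn_time} and Lemma~\ref{lem:block_window_concentrate} (summed over the $J = O(\log U/(\epsilon \log K))$ hash functions $j' \neq j$) still yield $o(1)$-fraction bounds on violating windows with the new $W$, $\maxl = \log(C\sqrt{\epsilon n}/8)$, and the blocking budget that now tolerates $1/\epsilon$ simultaneous $K$-SHH instances using $O(\log n)$ bits. Once these routine verifications are completed, the inductive argument proceeds identically to Section~\ref{sec:SHH}, establishing the claim.
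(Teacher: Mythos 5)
Your proposal mirrors the paper's own proof: the paper likewise lifts the inductive claim from \cref{sec:SHH} verbatim, observing that the only substantive change is $q\cdot n/W = \epsilon$ in place of $1$, and concludes $\Pr[E] \leq \gamma\bigl(\epsilon(1/8-o(1))\bigr) \leq 1-\epsilon/25$. Your verification that $qW=1/C^2$ is unchanged, that $\frac{q}{2}\E[G_i] \leq \epsilon/2 \leq 1/2$ so concavity of $\gamma$ still applies, and that the $-1$ sensitivity of $G_i$ transfers, is exactly what the paper relies on.

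One small correction to your last sentence: $c_0$ in \cref{thm:main result 2} governs the \emph{lower} bound $\epsilon \geq c_0/2^{\sqrt{\log n}}$, so an upper bound on $\epsilon$ cannot be ``absorbed into $c_0$.'' Fortunately this caveat is also unnecessary: expanding $1-e^{-x} \geq x - x^2/2$ with $x = \epsilon/24 - o(\epsilon)$ gives $\Pr[\overline{E}] \geq \epsilon/24 - o(\epsilon) - \epsilon^2/1152$, and since $\epsilon/24 - \epsilon/25 = \epsilon/600 > \epsilon^2/1152$ for every $\epsilon \in (0,1]$, the bound $\Pr[\overline{E}] \geq \epsilon/25$ holds for all admissible $\epsilon$ once $n$ is large, with no extra upper restriction on $\epsilon$ required.
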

    It is easy to verify that when the event in the claim happens, we start $\checkfn(j, v)$ for the correct $v$ from that window, and hence, the lemma holds.
    
    To prove the claim, we need to prove that this happens in a good window and when there are at most $100$ other $\checkfn$ running.
    We apply Lemma~\ref{lem:bound_checkfn_time}, which asserts that at most $1/4+o(1)$ of the windows in expectation can have more than $100$ $\checkfn(j,\cdot)$ running.
    Combined with Claim~\ref{clm:good_window}, in expectation, at least $1/4-o(1)$-fraction of the windows are simultaneously good and have at most $100$ $\checkfn$ running.
    Like in \cref{sec:SHH}, we will prove the following claim by induction, which implies the bound we aim for.
    
     Let $G_i$ be the random variable indicating the number of windows that are simultaneously good and have at most $100$ $\checkfn$ running \emph{from window $i+1$ to the end}.
     We have $\E[G_0]\geq (1/4-o(1))n/W$.
     Let $R_i$ be the randomness used by the algorithm in window $i$, i.e., the random sets $S^{(j)}_1,\ldots,S^{(j)}_i$, and all the random bits used by $K$-SHH in window $i$.
   
    \begin{claim}
     Fix all randomness used by hash functions $j'\neq j$.
     Let $E$ be the (bad) event that we \emph{don't} have a window $i$ that is simultaneously good and have at most $100$ $\checkfn$ running such that $x^*$ is in $S_{i}^{(j)}$, and it is the only element in $S_{i}^{(j)}$ that appears in the window.
     Then we have
     \[
        \Pr[E\mid h^{(j)}, R_{\leq i}] \leq \gamma\left(\frac{q}{2}\cdot \E[G_i\mid h^{(j)}, R_{\leq i}]\right),
     \]
     where $\gamma(y):=\exp(-y/3-5y^2/9)$.
    \end{claim}

    The claim above is the exact same claim that we proved in \cref{sec:SHH}. And as we observed earlier, the function $\gamma$ on the RHS is concave.

   By setting $i=0$.
    We have that 
    \[
        \Pr[E\mid h^{(j)}]\leq \gamma\left(\frac{q}{2}\cdot \E[G_0\mid h^{(j)}]\right).
    \]
    Taking expectation over $h^{(j)}$ and using the concavity and monotonicity of $\gamma$, we have
    \begin{align*}
        \Pr[E]&\leq \E_{h^{(j)}}\left[\gamma\left(\frac{q}{2}\cdot \E[G_0\mid h^{(j)}]\right)\right] \\
        &\leq \gamma\left(\frac{q}{2}\cdot \E[G_0]\right) \\
        &\leq \gamma\left(\frac{q}{2}\cdot (1/4-o(1))n/W\right) \\
        &\leq \gamma\left(\epsilon (1/8-o(1))\right) \\
        &\leq \exp(-\epsilon/25-\epsilon^2/64) \\
        &\leq 1-\epsilon/25\, .
    \end{align*}
    That is, the bad event $E$ happens with probability at most $1-\epsilon/25$, we must start $\checkfn$ on the correct hash value $v=h^{(j)}(x^*)$ with probability at least $\epsilon/25$.
    This implies Claim~\ref{clm:good_start}.

We finally analyze the space usage and correctness of \cref{alg:ell2_eps}.

\begin{theorem}\label{thm:eps_correctness_space}
    The space usage of Algorithm~\ref{alg:ell2_eps} is $O(\log n/\epsilon)$ bits.
    It finds all the $\epsilon$-heavy hitters with probability $1-\left(2^{-\Omega(\log n/(\log\log n-\log \epsilon))}\right)/\epsilon$\footnote{This probability is at least a constant for $\epsilon=\Omega\left(\frac{1}{2^{\sqrt{\log n}}}\right)$.} and it outputs an element that is not an $\epsilon/256$-heavy hitter with probability at most $1/U^{1-o(1)}$.
\end{theorem}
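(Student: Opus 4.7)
The plan is to mirror the proof of the super-heavy-hitter theorem of \cref{sec:SHH} with the updated parameter settings ($q$, $K$, $J$, $W$, $\maxl$) and to account for the fact that there can be up to $O(1/\epsilon)$ heavy hitters simultaneously.

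For the space bound, I count the parallel subroutines. The algorithm maintains $J=O(\log U/(\epsilon\log K))$ hash functions, and for each $j$ at most $100$ instances of $\checkfn(j,\cdot)$ run concurrently, each using $O(\log K)$ bits regularly, contributing $J\cdot O(\log K)=O(\log n/\epsilon)$ bits. The gatekeeping rule in \cref{alg_check_line-6} of \cref{alg:check_eps} permits at most $1/\epsilon$ distinct $j$'s to have a $K$-SHH subroutine in its $O(\log n)$-bit regime at any time, contributing another $O(\log n/\epsilon)$ bits. Summing gives the claimed bound.

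For completeness, fix any $\epsilon$-heavy hitter $x^*$ and let $X_j$ indicate that the algorithm records $v_j=h^{(j)}(x^*)$. The key input is \cref{lem:prob_hash_value_eps}, which states that conditioned on arbitrary realizations of the bits used by hash functions $j'\neq j$, $\Pr[X_j=1]\geq \epsilon/25$; this supplies the conditional hypothesis $\E[X_j\mid X_1,\ldots,X_{j-1}]\geq \epsilon/25$. Applying \cref{thm:azuma-hoeffing} with mean $\mu=J\epsilon/25=\Omega(\log U/\log K)$ and deviation down to the threshold $4\log U/\log K\leq \mu/8$, I obtain $\Pr[\sum_j X_j<4\log U/\log K]\leq 2^{-\Omega(\log U/\log K)}=2^{-\Omega(\log n/(\log\log n-\log\epsilon))}$; on the complementary event $x^*$ satisfies the output condition of \cref{alg:ell2_eps}. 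A union bound over the at most $1/\epsilon$ heavy hitters yields the stated failure probability.

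For soundness, fix any $x$ that is not an $\epsilon/256$-heavy hitter. For each index $j$, I split the event $h^{(j)}(x)=v_j$ into two cases: if $v_j=h^{(j)}(y)$ for some $\epsilon/256$-heavy hitter $y\neq x$, then the event occurs with probability $1/K$ by independence of $h^{(j)}(x)$ from $h^{(j)}(y)$; otherwise, by the second part of \cref{lem:prob_hash_value_eps} the very recording of such a $v_j$ has probability $O(K^{-1/3})$. The per-index match probability is therefore $O(K^{-1/3})$, and matches at distinct indices are independent because the $h^{(j)}$'s are. Bounding $\binom{J}{4\log U/\log K}$ and using $K=(\log n/\epsilon)^{1000}$, this binomial factor together with the $(1/\epsilon)$-factors blow up only as $U^{o(1)}$, so the per-$x$ probability is at most $U^{-\Omega(1)}$; a union bound over $x\in U$ gives the advertised $1/U^{1-o(1)}$ bound. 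The most delicate point is verifying that \cref{lem:prob_hash_value_eps} really supplies the conditional hypothesis needed for \cref{thm:azuma-hoeffing} in the completeness step, since the $X_j$'s are not independent (the randomness for one $j$ affects blocked windows visible to the others); the form of the lemma, which fixes \emph{arbitrary} randomness of $j'\neq j$, is exactly what makes this go through.
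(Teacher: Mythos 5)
Your proof is essentially the paper's own argument, with the same three-part decomposition (space accounting, Azuma--Hoeffding via the conditional form of \cref{lem:prob_hash_value_eps}, and a case-split plus union bound for soundness). The one cosmetic difference is in the first soundness case: you bound the collision probability per fixed heavy hitter $y$ by $1/K$, whereas the paper union-bounds over all $O(1/\epsilon)$ heavy hitters to get $O((\epsilon K)^{-1})$; both are dominated by the $O(K^{-1/3})$ term from the second case given $K=(\log n/\epsilon)^{1000}$, so the final per-index bound agrees. You also make explicit a subtlety the paper leaves implicit — that the $X_j$'s are not independent because blocking couples the executions across $j$, and it is precisely the conditional-on-$j'\neq j$-randomness form of \cref{lem:prob_hash_value_eps} that licenses the $\E[X_j\mid X_1,\dots,X_{j-1}]\ge\epsilon/25$ hypothesis of \cref{thm:azuma-hoeffing}; this is a worthwhile clarification rather than a deviation.
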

\begin{proof}
   It is straightforward to verify that the space usage is $O(\log n/\epsilon)$ bits: at most $O(J)=O(\log n/\epsilon\log K)$ $\checkfn$ and $K$-SHH may be running in parallel, each uses $O(\log K)$ bits regularly; at most $O(1/\epsilon)$ instances of $K$-SHH may use $O(\log n)$ bits at the same time.

    Now we lower bound the probability that it finds the correct set of $\epsilon$-heavy hitters. Consider an $\epsilon$-heavy hitter $x^*$.
    Let $X_j$ indicate if we find $v_j=h^{(j)}(x^*)$.
    Lemma~\ref{lem:prob_hash_value} implies that $\Exp[X_j\mid X_1,\ldots,X_{j-1}]\geq \epsilon/25$.
    The expectation of $\sum X_j$ is at least $\epsilon J/25=\frac{128}{25}\log U/\log K$.
    Thus, by \cref{thm:azuma-hoeffing}, $\Pr[\sum_j X_j\geq 4\log U/\log K]\geq 1-2^{-\Omega(\log U/\log K)}=1-2^{-\Omega(\log n/(\log\log n-\log \epsilon))}$.
    When this happens, $x^*$ satisfies the last line of Algorithm~\ref{alg:ell2}. Taking the union bound over at most $1/\epsilon$ many $\epsilon$-heavy hitters, we get the desired bound.

    On the other hand, consider any $x$ that is not an $\epsilon/256$-heavy hitter that matches $4\log U/\log K$ hash values recorded. For each $v_j$ it matches, either $h^{(j)}(x)=h^{(j)}(x^*)$ for some $\epsilon/256$-heavy hitter $x^*$  or $v_j\neq h^{(j)}(x^*)$ for any $\epsilon/256$-heavy hitter $x^*$. 
    The former happens with probability $O((\epsilon K)^{-1})$, and in the latter case, it is the hash value found by $j$ with probability $O(K^{-1/3})$ by Lemma~\ref{lem:prob_hash_value}.
    By union bounding over the choice of $4\log U/\log K$ indices $j$, the probability that this happens is at most
    \[
        \binom{J}{4\log U/\log K}\cdot O(K^{-1/3})^{4\log U/\log K}\leq O(K^{-1/3})^{4\log U/\log K}\leq 1/U^{2-o(1)}.
    \]
    Thus, the probability that there exists an element that is not an $\epsilon$-heavy hitter that also satisfies the last line of the algorithm is at most $1/U^{1-o(1)}$.
    This proves the theorem.
\end{proof}

\subsection{Proof of \cref{thm:main result 2}}
\cref{alg:ell2_eps} assumed that both the length of the stream and the second moment are known in advance. It also assumed that the heavy hitters are evenly distributed in the stream. We first show how to remove the assumption that the heavy hitters are evenly distributed in the stream. We claim the following regarding the distribution of heavy hitters in partially random order streams.
\begin{claim}\label{claim:heavy-hitter-dist}
For each $\epsilon$-heavy hitter, every $L$ consecutive windows have at least $2L/3$ occurrences of this heavy hitter with probability $1-2\exp(-L/18)$.
\end{claim}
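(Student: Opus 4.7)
The plan is to reduce the claim to a direct application of the sampling-without-replacement concentration bound (\cref{thm:sampling without replacement}). First, observe that by \cref{def:partially random order streams}, the restricted map $\Pi|_{S^\epsilon}$ is a uniformly random injection from $S^\epsilon$ into $[n]$. Fix an $\epsilon$-heavy hitter $x^*$ with frequency $f_{x^*}$. By a symmetry argument over injections (swap the labels assigned to any two size-$f_{x^*}$ subsets of $[n]$), the marginal distribution of the set of positions occupied by $x^*$ is uniform over size-$f_{x^*}$ subsets of $[n]$. Equivalently, sampling the positions of $x^*$ is the same as drawing $f_{x^*}$ balls without replacement from a bag of $n$ balls numbered $1,\dots,n$.

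Next, I would plug in the parameters. Recall the standing assumptions in \cref{sec:epsilon-heavy}: $W = \sqrt{n/\epsilon}/C$ and $\|f\|_2^2 \le C^2 n$, so $x^*$ being an $\ell_2$ $\epsilon$-heavy hitter yields $f_{x^*} \ge \sqrt{\epsilon}\,\|f\|_2$. Combined with the lower bound on $\|f\|_2$ enforced by the algorithm's normalization, one obtains $f_{x^*} \ge C\sqrt{\epsilon n}$ (this is exactly the bound that makes ``one occurrence per window'' the expected rate, as used throughout the paper). An arbitrary window of $L$ consecutive windows corresponds to a range of $\ell := LW$ consecutive positions in $[n]$.

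Now apply \cref{thm:sampling without replacement} with the bag of size $n$, $k = f_{x^*}$ balls sampled, and range length $\ell = LW$. Let $X$ denote the number of occurrences of $x^*$ in these $L$ consecutive windows. The key computation is
\[
\frac{\ell\, k}{n} \;=\; \frac{LW \cdot f_{x^*}}{n} \;\ge\; \frac{L \cdot \sqrt{n/\epsilon}/C \cdot C\sqrt{\epsilon n}}{n} \;=\; L.
\]
Therefore $2\ell k/(3n) \ge 2L/3$, and \cref{thm:sampling without replacement} gives
\[
\Pr\!\left[X < 2L/3\right] \;\le\; \Pr\!\left[X < \tfrac{2\ell k}{3n}\right] \;\le\; 2\exp(-\ell k/(18 n)) \;\le\; 2\exp(-L/18),
\]
which is exactly the claimed bound.

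The argument is essentially a direct invocation of an earlier concentration bound, so there is no real obstacle; the only point that requires a tiny bit of care is justifying that in the partially random order model the occurrences of a single fixed heavy hitter $x^*$ behave like a uniformly random size-$f_{x^*}$ subset of $[n]$, rather than merely as a uniformly random subset of the (also random) set of heavy-hitter positions. This follows from the symmetry of uniform injections noted above.
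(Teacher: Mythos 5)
Your proof is correct and takes essentially the same route as the paper: both cast the occurrences of a fixed $\epsilon$-heavy hitter as sampling without replacement from $[n]$ and then invoke \cref{thm:sampling without replacement} with $\ell = LW$ and $k \ge n/W = C\sqrt{\epsilon n}$, observing $\ell k/n \ge L$ to get the stated bound. The only difference is cosmetic — you spell out the symmetry argument for why the marginal positions of a single heavy hitter form a uniformly random size-$f_{x^*}$ subset under \cref{def:partially random order streams}, and you explicitly note the dependence on the normalization $\|f\|_2 \approx C\sqrt{n}$; the paper asserts both points without elaboration.
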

Before we prove the claim, let us see how we can modify our algorithm based on this claim. 
We first modify \cref{Algorithm 1} so that it searches for an element that occurs in many windows instead of all windows.
For $L\ge K^{3/16}$, we modify \cref{alg1_check_line-14} in the K-SHH algorithm (\cref{Algorithm 1}) to look for an occurrence of the sampled element $y$ in every $90\log n$ windows, rather than every window, and do this check in the next $90\log n L/K^{1/8}$ windows. Note that this increases the total number of windows using $\log n$ bits in \cref{lem_KSHH_expectation,lem:low_freq_conc,lem:high_freq_conc} only by an $O(\log n)$ factor, which does not affect how these lemmas are applied in \cref{sec:SHH,sec:epsilon-heavy}, since $K$ is a large polynomial in $\log n$. It is also easy to see that we get the same success probability in the YES case (\cref{lem:Yes_case}) because of \cref{claim:heavy-hitter-dist} (and the fact that $L\gg\log n$). There is no change in the analysis of the NO case (\cref{lem:KSHH_no}). 

For $L<K^{3/16}$, we use a similar algorithm as before, where we hash every element in $V$ using a uniform $O(\log K)$ bit hash function and return YES if there are at least $2L/3$ elements in the substream restricted to $V$ and all their hash values are equal, else return NO. We will now argue that \cref{lem_KSHH_expectation,lem:low_freq_conc,lem:high_freq_conc,lem:KSHH_no} hold in this case, and \cref{lem:Yes_case} holds \emph{after changing the premise} to an $x^*$ occurring in $2L/3$ of the windows. Observer that \cref{lem_KSHH_expectation,lem:low_freq_conc,lem:high_freq_conc} hold trivially because the algorithm never uses more than $O(\log K)$ bits.
In the YES case corresponding to \cref{lem:Yes_case}, the condition that $\sum_{y\ne x^*\in V} f_y^2 \le L^2 \log^5 n/K$ implies that there is no element other than $x^*$ in the substream restricted to $V$. 
Therefore, the algorithm outputs YES in this case. On the other hand, consider the NO case. If the substream restricted to $V$ has fewer than $2L/3$ elements, then the algorithm always outputs NO. If the substream has at least $2L/3$ elements, then the conditions in the NO case imply that there are at least $K^{1/8}/6$ distinct elements in the substream. The probability that they all have the same hash value is at most $1/\poly\, n$.

To accommodate for the new premise of \cref{lem:Yes_case}, we will need to update the proof of \cref{lem:prob_hash_value,lem:prob_hash_value_eps}, which are the only places we used \cref{lem:Yes_case}.
We change the definition of a ``good'' window to further require for each $l\geq 10$, the node $I_l$ has at least $2^{l+1}/3$ occurrences of the heavy hitter $x^*$.
By \cref{claim:heavy-hitter-dist}, all but an $2\exp(-50)$-fraction of the windows satisfy this extra condition.
Next, we modify the algorithm so that it starts from iteration $l=10$.
We also increase the number of $\checkfn$ allowed at the same time by a factor of $2^{10}$.
The proofs still go through with a slightly smaller constant fraction of good windows.
The correctness of \cref{alg:ell2,alg:ell2_eps} still hold and the space usage is also unaffected.
We now prove \cref{claim:heavy-hitter-dist}.
\begin{proof}[Proof of \cref{claim:heavy-hitter-dist}]
Fix an $\epsilon$-heavy hitter $x^*$. By definition, the frequency of $x^*$ is at least $C\sqrt{\epsilon n}$ and the positions of its occurrences are marginally uniform in the stream. Equivalently, we can view the distributions of its occurrences in the stream as sampling at least $k=n/W$ balls from $n$ bins without replacement, where $n$ is the length of the stream and $W$ is the length of the window we chose in \cref{alg:ell2_eps}. Let $L\leq n/W$ and $j\in [0,n/W-L]$. Let $X_{L,j}$ denote the number of occurrences of $x^*$ in windows ranging from $j+1$ to $j+L$. Applying \cref{thm:sampling without replacement} with $\ell = LW$ and $i = jW$, we get that
\[
\Pr[X_{L,j} < 2L/3] \le 2\exp(-\ell k/18n)\leq 2\exp(-L/18) \, .
\]
This proves the claim.
\end{proof}

We now show how to remove the assumption about the length of the stream. It is not hard to show that if we replace $n$ with $\tilde{n} \in [0.9 n, 1.1 n]$ and if \cref{alg:ell2_eps} executes only on a $0.9$ fraction of the stream, the analysis still works. So, we guess the value of $n$ in increasing powers of $(1.1)$, i.e., $(1.1),(1.1)^2,\dots$, and ensure that whenever we guess the correct value of $n$, at least a $0.9$ fraction of the stream is still left. At each time step, the algorithm maintains $25$ possible guesses for $n$ in increasing order $(1.1)^i,\dots,(1.1)^{i+24}$ and maintains a counter $\mathsf{len}$ to measure the current length of the stream. At the beginning of the stream, $i$ is set to be $0$, and we start executing \cref{alg:ell2_eps} in parallel for each of these guesses. Whenever $\mathsf{len}$ reaches $(1.1)^i +1$, we discard the first guess, along with its corresponding execution of \cref{alg:ell2_eps}, and we add a new guess $(1.1)^{i+25}$ and begin executing an instance of \cref{alg:ell2_eps} for this guess, starting from this point in the stream. At the end of the stream we choose the guess $(1.1)^i$ such that $(1.1)^{i-1}< n\le (1.1)^i$ and output whatever the \cref{alg:ell2_eps} corresponding to this guess outputs and discard the rest. Note that the length $\tilde{n}$ of the stream guessed for this instance of \cref{alg:ell2_eps} lies in the range $ [0.9 n, 1.1 n]$, where $n$ is the true length of the stream. In addition, at the point where the guess $(1.1)^i$ is added, at most $(1.1)^{i-25}$ elements in the stream have been seen. Hence, for the correct guess, \cref{alg:ell2_eps} would have executed on at least a $1-(1.1)^{-25} > 0.9$ fraction of the remaining portion of the stream, and hence, we would get the same guarantees as in \cref{thm:eps_correctness_space}. Finally, we note that this meta algorithm has the same space complexity as \cref{alg:ell2_eps}.


\ifnum\doubleblind=0
\section*{Acknowledgment}
We thank Uma Girish for her contributions in the early stages of this project.
\textsc{S.V.} was supported in part by NSF award CCF 2348475. Part of the work was conducted when \textsc{S.V.} was visiting the Simons Institute for the Theory of Computing as a research fellow in the Sublinear Algorithms program. Part of the work was conducted when \textsc{S.V.} was a graduate student at Harvard University, and supported in part by a Google Ph.D. Fellowship, a Simons Investigator Award to Madhu Sudan, and NSF Award CCF 2152413. Part of the work was conducted when \textsc{S.V.} was visiting Princeton Univeristy as an exchange student under the IvyPlus Exchange Scholar program.
Huacheng Yu is supported in part by NSF CAREER award CCF-2339942.
\fi

\printbibliography
\end{document}